\theoremstyle{plain}
\newtheorem{thm}{\protect\theoremname}[section]
  \theoremstyle{plain}
  \newtheorem{prop}[thm]{\protect\propositionname}
  \theoremstyle{plain}
  \newtheorem{cor}[thm]{\protect\corollaryname}
  \theoremstyle{plain}
  \newtheorem{lem}[thm]{\protect\lemmaname}
  \theoremstyle{definition}
  \newtheorem{defn}[thm]{\protect\definitionname}
 \theoremstyle{plain}
  \providecommand{\corollaryname}{Corollary}
  \providecommand{\definitionname}{Definition}
  \providecommand{\lemmaname}{Lemma}
  \providecommand{\propositionname}{Proposition}
\providecommand{\theoremname}{Theorem}
\providecommand{\conjecturename}{Conjecture}
\begin{document}
\def\COMMENT#1{}

\global\long\def\labelenumi{(\roman{enumi})}

\def\noproof{{\unskip\nobreak\hfill\penalty50\hskip2em\hbox{}\nobreak\hfill%
        $\square$\parfillskip=0pt\finalhyphendemerits=0\par}\goodbreak}
\def\endproof{\noproof\bigskip}
\newdimen\margin   
\def\textno#1&#2\par{%
    \margin=\hsize
    \advance\margin by -4\parindent
           \setbox1=\hbox{\sl#1}%
    \ifdim\wd1 < \margin
       $$\box1\eqno#2$$%
    \else
       \bigbreak
       \hbox to \hsize{\indent$\vcenter{\advance\hsize by -3\parindent
       \sl\noindent#1}\hfil#2$}%
       \bigbreak
    \fi}
\def\proof{\removelastskip\penalty55\medskip\noindent{\bf Proof. }}

\newcommand{\Int}{{\rm Int}}
\newcommand{\seq}{{\rm seq}}
\newcommand{\Prob}{\mathbb{P}}
\newcommand{\E}{\mathbb{E}}

\title[A domination algorithm for $\{0,1\}$-instances of TSP]{A domination algorithm for $\{0,1\}$-instances of the travelling salesman problem}

\author{Daniela K\"uhn, Deryk Osthus and Viresh Patel}
\begin{abstract}
We present an approximation algorithm for $\{0,1\}$-instances of the travelling salesman problem which performs well with respect to combinatorial dominance. More precisely,
we give a polynomial-time algorithm which has domination ratio $1-n^{-1/29}$. In other words, given a $\{0,1\}$-edge-weighting of the complete graph $K_n$ on $n$ vertices, our 
algorithm outputs a Hamilton cycle $H^*$ of $K_n$ with the following property:
the proportion of Hamilton cycles of $K_n$ whose weight is smaller than that of $H^*$ is at most $n^{-1/29}$.
Our analysis is based on a martingale approach.
Previously, the best result in this direction was a polynomial-time algorithm with domination ratio $1/2-o(1)$ for arbitrary edge-weights.
We also prove a hardness result showing that, if the Exponential Time Hypothesis holds, there exists a constant $C$ such that $n^{-1/29}$  cannot be replaced by $\exp(-(\log n)^C)$ in the result above.
\end{abstract}


\thanks{D.~K\"uhn, D.~Osthus and V.~Patel were supported by the EPSRC, grant no.~EP/J008087/1. 
D.~K\"uhn was also supported by the ERC, grant no.~258345.
D.~Osthus was also supported by the ERC, grant no.~306349.} 

\maketitle

\section{Introduction\label{sec:intro}}

Many important combinatorial optimization problems are known to be NP-hard, and this has led to a vast body of research in approximation algorithms. 
One well-known way to measure the performance of an approximation algorithm is to consider its approximation ratio, i.e.~the cost ratio  of the approximate solution to an optimal solution in the worst case. Another is to consider the proportion of all feasible solutions that are worse than the approximate solution in the worst case. The two measures should be viewed as complementary as there are examples of approximation algorithms that perform well with respect to one measure but badly with respect to the other. It is the latter measure, called \emph{combinatorial dominance}, that we consider in this paper.

In general, the \emph{domination ratio} of an approximation algorithm $A$ for an optimization problem $P$  is the largest $r =r(n)$ such that for each instance $I$ of $P$ of size $n$, $A$ outputs a solution that is not worse than an $r$ proportion of all feasible solutions. The study of this approximation measure was initiated by Glover and Punnen in~\cite{GlovPun}, where they analysed the domination ratio of various heuristics for the travelling salesman problem.

\subsection{Travelling salesman problem}

Let us begin by formally defining the travelling salesman problem.
Let $K_n=(V_n, E_n)$ be the complete graph on $n$ vertices and let $\mathcal{H}_n$ be the set of all $(n-1)!/2$ Hamilton cycles of $K_n$.
For an edge weighting $w$ of $K_n$ (i.e.~a function $w:E_n \rightarrow \mathbb{R}$) and a subgraph $G=(V,E)$ of $K_n$, we define 
\[
w(G):= \sum_{e \in E}w(e).
\]
The \emph{travelling salesman problem (TSP)} is the following algorithmic problem: given an instance $(n,w)$ of TSP, where $n$ is a positive integer and $w$ is an edge weighting of $K_n$, determine a Hamilton cycle $H^*$ of $K_n$ which satisfies
\[
w(H^*) = \min_{H \in \mathcal{H}_n}w(H).
\]
The \emph{asymmetric travelling salesman problem (ATSP)} is a directed version of TSP, in which one considers Hamilton cycles in complete directed graphs and in which the weights
of the two oppositely directed edges between two vertices are allowed to be different from each other.

\subsection{TSP and approximation ratio}
We now give some brief background on TSP and approximation ratio.
It is well known that TSP is NP-hard \cite{GarJon}, and indeed, NP-hard to approximate to within a constant factor \cite{SahGonz}. On the other hand, Christofides \cite{Christ} gave a $3/2$-approximation algorithm for metric-TSP, that is, TSP in which the edge-weights of $K_n$ satisfy the triangle inequality. However, even for $\{1,2\}$-TSP, Papadimitriou and Vempala \cite{PapVem} showed there is no $\frac{220}{219}$-approximation algorithm unless
$P=NP$.%
\COMMENT{There are better bounds than 220/219 for metric TSP I think, but the one here is for $\{1,2\}$-TSP}
Here $\{1,2\}$-TSP is the special case of TSP where all edge-weights are either $1$ or $2$, and this is in fact a special case of metric TSP. 
Arora \cite{Arora} and Mitchell \cite{Mit} independently gave a PTAS for Euclidean-TSP, a special case of metric-TSP in which the edge-weights of $K_n$ arise as the distances between vertices that have been embedded in Euclidean space of fixed dimension. 

\subsection{TSP and combinatorial dominance}

A TSP algorithm is an algorithm which, given any instance $(n,w)$ of TSP, outputs some Hamilton cycle of $K_n$. For $r \in [0,1]$ and $(n,w)$ a fixed instance of TSP, we say
that a TSP algorithm $A$  has \emph{domination ratio at least $r$ for $(n,w)$} if,
given $(n,w)$ as input, the algorithm $A$ outputs a Hamilton cycle $H^*$ of $K_n$ satisfying 
\[
\frac{|\{ H \in \mathcal{H}_n: w(H^*) \leq w(H) \}|}{|\mathcal{H}_n|} \geq r.
\]
The \emph{domination ratio} of a TSP algorithm is the maximum $r$ such that the algorithm has domination ratio at least $r$ for all instances $(n,w)$ of TSP. (Thus, the aim is to have a domination ratio close to one.) We often refer to a TSP algorithm as a TSP-domination algorithm to indicate our intention to evaluate its performance in terms of the domination ratio.

The notion of combinatorial dominance (although slightly different to above) was first introduced by Glover and Punnen~\cite{GlovPun}. They gave various polynomial-time TSP-domination algorithms and showed
that their algorithms have domination ratio $\Omega(c^n/n!)$ for some constant $c>1$.
Glover and Punnen \cite{GlovPun} believed that, unless $P=NP$, there cannot be a polynomial-time TSP-domination algorithm with domination ratio at least $1/p(n)$ for any polynomial $p$.
This was disproved by Gutin and Yeo \cite{GutYeo2002}, who gave a polynomial-time ATSP-domination algorithm with domination ratio at least $1/(n-1)$.
It was later found that this had already been established in the early 1970's \cite{Rub,Sarv}.
Alon, Gutin, and Krivelevich \cite{AlonGutKriv} asked whether one can achieve a domination ratio $\Omega(1)$.
K{\"u}hn and Osthus~\cite{KuhOst} resolved this question by proving an algorithmic Hamilton decomposition result which,
together with a result by Gutin and Yeo~\cite{GutYeo2001}, gives a polynomial-time ATSP-domination algorithm with domination
ratio at least $1/2-o(1)$.

It is worth noting (see \cite{GutYeoZver}) that some simple, well-known TSP heuristics give the worst possible domination ratio, that is, for certain instances of TSP, these heuristics produce the unique worst (i.e.\ most expensive) Hamilton cycle. In particular this is the case for the greedy algorithm, in which one recursively adds the cheapest edge maintaining a disjoint union of paths (until the end), and the nearest neighbour algorithm, in which one builds (and finally closes) a path by recursively adding the cheapest neighbour of the current end-vertex.\COMMENT{The double tree heuristic also has the worst possible domination ratio.}  Other algorithms have been shown to have small domination ratio \cite{PunMarKab}, e.g. Christofides' algorithm \cite{Christ} has domination ratio at most $\lfloor n/2 \rfloor!/\frac{1}{2}(n-1)! = \exp(-\Omega(n))$, even for instances of TSP satisfying the triangle inequality.


Our main result gives an algorithm with domination ratio of $1 - o(1)$ for the TSP problem restricted to $\{0,1\}$-instances, i.e.~for instances where all the edge-weights lie in $\{0,1\}$. 
Note that this clearly implies the same
result whenever the weights may take two possible values; in particular it implies the same result for $\{1,2\}$-instances. 
The latter is the more usual formulation, but for our algorithm we find it more natural to work with weights in  $\{0,1\}$
rather than in  $\{1,2\}$.
\begin{thm}
\label{thm:main}
There exists an $O(n^5)$-time TSP-domination algorithm which has domination ratio at least $1 - 6n^{-1/28}$ for every $\{0,1\}$-instance $(n,w)$ of TSP.
\end{thm}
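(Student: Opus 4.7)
The plan is to combine a concentration inequality for the weight of a uniformly random Hamilton cycle with a polynomial-time procedure producing a Hamilton cycle whose weight lies below the mean. Let $w\colon E_n\to\{0,1\}$ be the given weighting, $m$ the number of weight-$1$ edges, $H$ a uniformly random element of $\mathcal{H}_n$ and $\mu:=\E[w(H)]$. Linearity of expectation together with the fact that each edge of $K_n$ lies in a proportion $2/(n-1)$ of all Hamilton cycles gives $\mu=2m/(n-1)$. If the algorithm outputs $H^{*}$ with $w(H^{*})\le\mu-t$, then
\[
\frac{|\{H\in\mathcal{H}_n\colon w(H)<w(H^{*})\}|}{|\mathcal{H}_n|}=\Prob[w(H)<w(H^{*})]\le\Prob[w(H)<\mu-t],
\]
so matching $t$ against the concentration tail will deliver the desired $6n^{-1/28}$ domination ratio.

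For the concentration I would set up a Doob martingale by revealing the edges of $H$ according to a canonical vertex ordering $v_{1},\dots,v_{n}$, exposing at step $i$ the (at most two) edges of $H$ incident to $v_{i}$ not already exposed. The crux is a bounded-differences estimate for the increments: because $w$ is $\{0,1\}$-valued and each step exposes $O(1)$ edges, each increment would be bounded by an absolute constant, provided that any two partial reveals differing in a single step admit a near-measure-preserving coupling of their completion sets that alters only $O(1)$ edges. I would construct such a coupling via a combinatorial switching argument, locally modifying a completion by a short sequence of rotations so as to match the changed exposed edge. Azuma's inequality then yields $\Prob[w(H)\le\mu-t]\le\exp(-ct^{2}/n)$, from which choosing $t$ of order $n^{1/2+\delta}$ produces a polynomially small tail of the required shape.

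For the algorithmic step I would split on the magnitude of $\mu$. When $\mu$ is small (below a suitable polynomial threshold), the weight-$0$ subgraph $G_{0}$ is very dense, and a polynomial-time Hamiltonicity procedure for dense graphs (e.g.~the Bondy--Chv\'atal closure, $O(n^{4})$ time) returns a Hamilton cycle in $G_{0}$ of weight $0$, which trivially beats $\mu-t$. When $\mu$ is large, I would compute an algorithmic Hamilton decomposition of $K_n$ via Walecki's construction (in $O(n^{2})$ time), preceded by a uniformly random vertex relabeling that renders each of the $\lfloor(n-1)/2\rfloor$ decomposition cycles marginally uniform on $\mathcal{H}_n$; the minimum-weight cycle in the decomposition has weight at most $\mu$, and polynomially many $2$-opt swap moves (each replacing two weight-$1$ edges of the current cycle by two edges of total weight at most $1$, when possible) are applied to drive the weight below $\mu-t$. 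The specific exponent $1/28$ and the running time $O(n^{5})$ then emerge from balancing the threshold between the two regimes, the number and cost of swap moves, and the concentration tail.

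The principal obstacle is the bounded-differences estimate of the concentration step: the edges of a uniformly random Hamilton cycle are globally correlated, so the Lipschitz property of the Doob martingale does not come for free from independence and has to be extracted by a delicate switching construction. A secondary obstacle is a structural lemma for the $2$-opt phase, which must guarantee that as long as a Hamilton cycle has weight noticeably above $\mu-t$, an improving $2$-opt swap exists and can be located in polynomial time by searching over $O(n^{2})$ pairs of edges per iteration --- this structural claim is where the calculations ultimately fixing the exponent $1/28$ are carried out.
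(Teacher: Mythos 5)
Your overall architecture --- concentrate $w(H)$ around its mean $\mu=2m/(n-1)$ via a vertex-exposure martingale, then algorithmically produce a cycle of weight at most $\mu-t$ --- is indeed the paper's strategy, but only for instances of intermediate density, and the two places where you defer the real work both break down. The fatal one is the structural claim behind your $2$-opt phase: it is simply false that a Hamilton cycle of weight noticeably above $\mu-t$ admits an improving swap, because for many $\{0,1\}$-instances \emph{no} Hamilton cycle has weight below $\mu-O(1)$. Take $w\equiv 1$, or take a set $S$ of about $\sqrt{n}/10$ vertices and give weight $1$ exactly to the edges meeting $S$: then $d=\Theta(n^{-1/2})$, every Hamilton cycle has weight within $O(1)$ of $2|S|\approx\mu$, and no local search (indeed no algorithm) can reach $\mu-t$ for $t=\Omega(\sqrt{n\log n})$, which is the magnitude your Azuma tail forces on you. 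The same example kills your small-$\mu$ branch: the weight-$0$ graph $G_0$ has $(1-o(1))\binom{n}{2}$ edges yet the vertices of $S$ are isolated in $G_0$, so Bondy--Chv\'atal (or any Hamiltonicity procedure) cannot return a weight-$0$ cycle --- density of $G_0$ does not give Hamiltonicity when a few vertices have small degree. The paper's resolution is that in these extreme regimes one must abandon ``beat the mean by $t$'' entirely: its Algorithms~B and~C identify the small exceptional set $S$ (via a vertex-cover stability lemma based on Erd\H{o}s--Gallai), build a maximum \emph{double matching} from $S$ into the rest, and prove domination by showing the output is at least as good as every Hamilton cycle in a high-probability event (namely, that a random Hamilton cycle uses no edge inside $S$ and at most one $S$-edge per outside vertex). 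That is a combinatorial comparison, not a concentration argument.

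Two further points. For the intermediate regime, where your plan is salvageable, the paper does not use local search either: it applies Erd\H{o}s--Gallai to find an optimal matching of weight at most $\tfrac12 dn-\Omega(\min\{d,(1-d)^2\}n)$ and extends it to a Hamilton cycle by the method of conditional expectations, which yields the required gap $t$ directly. And your concentration step, while correct in outline (the paper's Hamilton martingale has increments bounded by $6$, obtained by direct computation rather than a switching coupling), is quantitatively too weak as stated: plain Azuma gives $\exp(-ct^{2}/n)$, which does not suffice when $d\to 0$ or $d\to 1$ polynomially and the achievable gap $t$ is $o(\sqrt{n\log n})$; the paper needs Freedman's inequality with a density-dependent bound $W\le 60(\sqrt{d}\,n+1)$ on the predictable quadratic variation to cover the range of $d$ down to the thresholds where Algorithms~B and~C take over. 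Without these three ingredients --- a correct below-mean construction, a separate non-concentration argument for extreme densities, and a variance-sensitive martingale inequality --- the exponent $1/28$ cannot be extracted from your outline.
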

In fact we have a TSP-domination algorithm which, for most $\{0,1\}$-instances of TSP, has domination ratio that is exponentially close to $1$.
\begin{thm}
\label{thm:unused}
Fix\COMMENT{The statement of this theorem has been reworded as compared to the original version in Section~\ref{sec:alg}} $\eta \in (0,1/2)$. 
There exists an $O(n^5)$-time TSP-domination algorithm which has domination ratio at least $1 - O(\exp(-\eta^4n/10^4))$ for every 
$\{0,1\}$-instance $(n,w)$ of TSP satisfying $w(K_n)= d\binom{n}{2}$ for some $d \in [\eta,1- \eta]$.
\end{thm}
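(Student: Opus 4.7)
The plan is to combine two ingredients. First, for a uniformly random Hamilton cycle $H \in \mathcal{H}_n$, show that $w(H)$ is exponentially concentrated around its mean
\[
\mu := \E[w(H)] = \frac{2 w(K_n)}{n-1} = dn,
\]
where the formula is immediate from the fact that every edge of $K_n$ lies in the same number of Hamilton cycles. Second, design a polynomial-time algorithm producing a Hamilton cycle $H^*$ with $w(H^*) \leq \mu - c \eta^2 n$ for some absolute constant $c > 0$. Combining these, the domination ratio of $H^*$ is at least $\Prob[w(H) \geq w(H^*)] \geq 1 - \exp(-\Omega(\eta^4 n))$, which matches the claimed bound up to constants in the exponent.

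For the concentration, I would build a Doob martingale on a uniformly random cyclic ordering $\sigma$ of $V_n$: exposing $\sigma(1),\sigma(2),\ldots$ in turn and setting $X_i := \E[w(H) \mid \sigma(1),\ldots,\sigma(i)]$, each exposure changes the conditional expectation by at most an absolute constant, since each vertex of $H$ is incident to exactly two edges; the usual subtleties for martingales on random permutations are handled by a symmetry/swap coupling. Azuma--Hoeffding then gives $\Prob[w(H) \leq \mu - t] \leq \exp(-t^2/O(n))$, which is precisely the required exponential lower tail and converts any linear-in-$n$ gap below $\mu$ into an exponentially small probability.

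The main obstacle is the deterministic step: constructing $H^*$ with a linear gap below the mean. The hypothesis $d \in [\eta, 1-\eta]$ implies that both the zero-edge graph $G_0$ and the one-edge graph $G_1$ have at least $\eta \binom{n}{2}$ edges, which supplies the structural slack I would exploit. A concrete proposal is to start from a minimum-weight $2$-factor of $K_n$ (computable in $O(n^3)$ by reduction to perfect matching), stitch its cycles into a single Hamilton cycle using the density of $G_1$ to find few cheap patching edges, and then iterate $2$-opt (or $3$-opt) local improvements; each such improvement lowers $w$ by at least $1$, giving at most $O(n)$ rounds and total running time $O(n^5)$. The quantitative and most delicate part is to show that this process cannot halt before the gap $\mu - w(H^*)$ reaches $c \eta^2 n$: a counting argument using the densities of $G_0$ and $G_1$ should produce $\Omega(\eta^2 n^2)$ candidate improving swaps whenever the current Hamilton cycle is still too close to $\mu$, and it is this $\eta^2$ factor (the product of the zero- and one-edge densities) that, once squared in the Azuma tail, yields the final $\eta^4$ exponent in the domination ratio.
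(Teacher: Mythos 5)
Your high-level strategy --- construct a Hamilton cycle $H^*$ with $w(H^*)\le dn-c\eta^2 n$ and then show that $w(H)$ has an exponential lower tail for a uniformly random $H\in\mathcal{H}_n$ --- is exactly the paper's, and your concentration step is essentially the paper's Hamilton martingale: exposing the vertices of $H$ one at a time gives martingale differences bounded by an absolute constant (Lemma~\ref{lem:martbounds} computes the bound $6$), and the paper itself remarks that Azuma's inequality suffices for this particular theorem (Freedman's inequality is only needed later, when $d$ may tend to $0$ or $1$). So the second half of your plan is sound, modulo actually carrying out the bounded-difference computation rather than appealing to ``each vertex is incident to two edges'': revealing the $k$-th vertex perturbs the conditional inclusion probabilities of \emph{all} unrevealed edges, so the $O(1)$ bound requires the explicit calculation (or the swap coupling you gesture at).

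The genuine gap is in the deterministic construction, which is also where you depart from the paper. Two steps are unproven and neither is routine. First, patching the cycles of a minimum-weight $2$-factor can cost up to $2$ per merge, and the $2$-factor may have $\Omega(n)$ cycles, so patching alone can add $\Omega(n)$ to the weight and wipe out any $\Theta(\eta^2 n)$ advantage; you would need to control the number of cycles or show the patches are essentially free. Second, and more seriously, your argument rests on the claim that every Hamilton cycle of weight above $dn-c\eta^2 n$ admits $\Omega(\eta^2 n^2)$ improving $2$-opt (or $3$-opt) moves; this is the crux of the whole construction and you only assert that a counting argument ``should'' give it. It is plausible in examples (e.g.\ when the weight-$1$ graph is a clique or a complete bipartite graph one can exhibit $\Omega(n^2)$ improving swaps), but no general proof is offered, and without it the theorem is not established. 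The paper sidesteps local search entirely: by the Erd\H{o}s--Gallai theorem (Theorem~\ref{thm:ErGal}) the weight-$0$ graph, having $(1-d)\binom{n}{2}\ge\eta\binom{n}{2}$ edges, contains a matching of size roughly $\min\{\tfrac12\sqrt{1-d}\,n,\,(1-\sqrt{d})n\}$, which yields an optimal matching $M^*$ of $K_n$ with $w(M^*)\le\tfrac12 dn-\tfrac18\eta^2 n+O(1)$ (Proposition~\ref{prop:matchings}); extending $M^*$ to a Hamilton cycle by the derandomised conditional-expectation procedure of Lemma~\ref{lem:derand} costs only an extra $\tfrac{1}{n-2}\bigl(w(K_n)-w(M^*)\bigr)$, giving $w(H^*)\le dn-\tfrac18\eta^2 n+O(1)$ rigorously in $O(n^5)$ time. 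I would recommend either replacing your $2$-factor-plus-local-search construction with this matching-based one, or supplying a complete proof of the local-search progress claim.
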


We use a combination of algorithmic and probabilistic techniques as well as some ideas from extremal combinatorics to establish the results above.
On the hardness side, assuming the Exponential Time Hypothesis (discussed in Section~\ref{sec:hard}), we show that there is no TSP-domination algorithm for $\{0,1\}$-instances that substantially improves the domination ratio in Theorem~\ref{thm:main} and, in particular, there is no TSP-domination algorithm that achieves the domination ratio of Theorem~\ref{thm:unused} for general $\{0,1\}$-instances of TSP. 
We can prove weaker bounds assuming $P \not= NP$. 
\begin{thm} \label{thm:hardness1} 
$ $
\begin{enumerate}
\item[(a)] Fix $\varepsilon \in (0, \frac{1}{2})$. If $P \not= NP$, then there is no polynomial-time TSP-domination algorithm which, for all $\{0,1\}$-instances $(n,w)$ of TSP, has domination ratio at least $1-\exp(-n^{\varepsilon})$.
\item[(b)] There exists a constant $C>0$ such that if the Exponential Time Hypothesis holds, then there is no polynomial-time TSP-domination algorithm which, for all $\{0,1\}$-instances $(n,w)$ of TSP, has domination ratio 
at least $1-\exp(-(\log n)^C)$.\COMMENT{TSP13: The statment of (b) has changed a little, and corresponding changes have been made in the proof}
\end{enumerate}
\end{thm}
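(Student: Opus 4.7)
The plan is to polynomially reduce the $s$-$t$ Hamilton path problem to the existence of a TSP-domination algorithm achieving the forbidden ratio. Recall that this problem---given a graph $G$ on $m$ vertices and distinguished vertices $s, t \in V(G)$, decide whether $G$ has a Hamilton path from $s$ to $t$---is $NP$-hard, and (via the standard linear reduction from $3$-SAT together with the edge-deletion trick converting Hamilton cycle to $s$-$t$ Hamilton path) requires time $2^{\Omega(m)}$ under ETH. Given an input $(G,s,t)$, I construct a $\{0,1\}$-TSP instance on $K_n$ with $n = m + t_U$ vertices as follows: introduce auxiliary vertices $U = \{u_1, \dots, u_{t_U}\}$, take the vertex set $V(G) \cup U$, and assign weight $0$ to every edge of $G$, every edge with both endpoints in $U$, and the two ``bridge'' edges $\{s,u_1\}$ and $\{t,u_{t_U}\}$; every remaining edge of $K_n$ receives weight $1$. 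The value of $t_U$ (equivalently, of $n$) is chosen below.

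The key structural claim is that any weight-$0$ Hamilton cycle of the constructed instance must traverse both bridge edges, since these are the only $0$-weight edges between $V(G)$ and $U$. Such a cycle therefore decomposes as a Hamilton path in $K_{t_U}$ from $u_1$ to $u_{t_U}$, glued through the bridges to a Hamilton path in $G$ from $s$ to $t$. Hence the number of weight-$0$ Hamilton cycles equals the number of $s$-$t$ Hamilton paths in $G$ times $(t_U - 2)!$: in the NO case every Hamilton cycle has weight $\ge 1$, whereas in the YES case at least $(t_U - 2)!$ Hamilton cycles have weight $0$. Using $|\mathcal{H}_n| = (n-1)!/2$, this gives
\[
\frac{(t_U - 2)!}{|\mathcal{H}_n|} \;=\; \frac{2}{\prod_{j=n-m-1}^{n-1} j} \;\ge\; n^{-(m+1)}.
\]

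Now suppose $A$ is a polynomial-time TSP-domination algorithm with ratio at least $1 - \delta(n)$. Applied to the constructed instance, $A$ returns a cycle $H^*$ with $w(H^*) = 0$ whenever $n^{-(m+1)} \ge \delta(n)$, so testing whether $w(H^*) = 0$ decides the original $s$-$t$ Hamilton path instance. For part (a), with $\delta(n) = \exp(-n^{\varepsilon})$, take $n = \lceil m^{\lceil 2/\varepsilon \rceil} \rceil$; then $(m+1)\log n = O(m \log m) \le n^{\varepsilon}$ for all sufficiently large $m$, the reduction runs in time polynomial in $m$, and we contradict $P \ne NP$. For part (b), with $\delta(n) = \exp(-(\log n)^{C})$, set $n = 2^{\lceil (m+1)^{1/(C-1)} \rceil}$, so that $(m+1)\log n \le (\log n)^{C}$ by construction; the reduction combined with $A$ then runs in total time $\mathrm{poly}(n) = 2^{O(m^{1/(C-1)})}$. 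Choosing $C = 3$ (so $1/(C-1) = 1/2$) yields a $2^{O(\sqrt{m})}$ algorithm for $s$-$t$ Hamilton path, contradicting ETH.

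The main obstacle is calibrating the gadget so that the correspondence between weight-$0$ Hamilton cycles of the TSP instance and $s$-$t$ Hamilton paths of $G$ is \emph{tight} in the NO direction while providing sufficient multiplicity in the YES direction. Funnelling the $V(G)$--$U$ boundary through only two $0$-weight edges enforces the first property, whereas leaving $U$ as a complete $0$-weight clique supplies the $(t_U - 2)!$ amplification needed for the second; the tension between these two forces is what pins down the size $n = n(m)$ and, in part (b), the admissible constant $C$.
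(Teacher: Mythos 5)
Your proof is correct and follows essentially the same strategy as the paper: a reduction from a Hamiltonicity problem in which the instance is padded with a large set of auxiliary vertices joined by cheap edges, so that each Hamilton path of $G$ lifts to factorially many optimal tours of $K_n$, and $n$ is calibrated so that this fraction exceeds the forbidden error $\delta(n)$ (the paper uses the complement of $G$ inside $S$ with a complete weight-$1$ bipartite graph to $\overline{S}$, making the target weight $2$, while you use two bridge edges and target weight $0$ --- a cosmetic difference). The one point to flag is that in part (b) you invoke the $2^{\Omega(m)}$ ETH lower bound for $s$--$t$ Hamilton path (which requires the Sparsification Lemma plus a linear-size reduction from $3$-SAT) to get the concrete constant $C=3$, whereas the paper only assumes the weaker bound $\exp(o(m^{\varepsilon_0}))$ and correspondingly obtains $C=2\varepsilon_0^{-1}+2$; your stronger assumption is standard but is asserted rather than justified.
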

Previously, Gutin, Koller and Yeo \cite{GutKolYeo} proved a version of Theorem~\ref{thm:hardness1}(a) for general instances of TSP, and we use some of the ideas from their reduction in our proof.

\subsection{Combinatorial dominance and other algorithmic problems}
Alon, Gutin and Krivelevich \cite{AlonGutKriv} gave algorithms with large domination ratios for several optimization problems. For example, they gave a $(1-o(1))$-domination algorithm for the minimum partition problem and an $\Omega(1)$-domination algorithm for the max-cut problem. Berend, Skiena and Twitto \cite{Ber} introduced a notion of combinatorial dominance for constrained problems, where not all permutations, subsets or partitions form feasible solutions. They analysed various algorithms for the maximum clique problem, the minimum subset cover and the maximum subset sum problem with respect to their notion of combinatorial dominance. 
See e.g.~\cite{GutVainYeo,Kol} for further results on domination analysis applied to other optimization problems.

\subsection{Organisation of the paper}
In the next section, we introduce some basic terminology and notation that we use throughout the paper. In Section~\ref{sec:basicalg}, we present an algorithm, which we call Algorithm~A, that will turn out to have a large domination ratio for `most' $\{0,1\}$-instances of TSP. We define it as a randomized algorithm and describe how it can be derandomized using the method of conditional expectations of Erd\H{o}s and Selfridge.
In Section~\ref{sec:mar}, we develop the probabilistic tools needed to evaluate the domination ratio of Algorithm~A. In particular, our approach is
based on a martingale associated with a random TSP tour.
We begin Section~\ref{sec:alg} by evaluating the domination ratio of Algorithm~A and proving Theorem~\ref{thm:unused}. We also present and evaluate two other algorithms, Algorithm~B and Algorithm~C, which have large domination ratios for $\{0,1\}$-instances of TSP where Algorithm~A does not work well (roughly speaking, this is the case when almost all weights are $0$ or almost all
weights are $1$). The analysis of Algorithms~B and~C is based on a result from extremal combinatorics. We conclude Section~\ref{sec:alg} by proving Theorem~\ref{thm:main}. Section~\ref{sec:hard} is devoted to the proof of Theorem~\ref{thm:hardness1}. We end with concluding remarks and an open problem in Section~\ref{sec:conc}.

\section{Preliminaries\label{sec:preliminaries}}

We use standard graph theory notation.
Let $G=(V,E)$ be a graph. 
We sometimes write $V(G)$ and $E(G)$ for the vertex and edge set of $G$ and we write $e(G)$ for number of edges of $G$.

We write $H \subseteq G$ to mean that $H$ is a subgraph of $G$.
For $U \subseteq V$, we write $G[U]$ for the graph induced by $G$ on $U$, $E_G(U)$ for the set of edges in $G[U]$, and $e_G(U)$ for the number of edges in $G[U]$. Similarly for  $A,B \subseteq V$ not necessarily disjoint, we write $G[A,B]$ for the graph given by $G[A,B] := (A \cup B, E_G(A,B))$, where
\[
E_G(A,B) := \{ab \in E: a \in A, b \in B \}.
\]
Define $e_G(A,B):=|E_G(A,B)|$. For $v \in V$, we write $\deg_G(v)$ for the number of neighbours of $v$ in $G$. A set $D \subseteq V$ is called a \emph{vertex cover} if $V \setminus D$ is an independent set in $G$, i.e.\ if $E_G(V \setminus D) = \emptyset$. 

For any set $A$, we write $A^{(2)}$ for the set of all unordered pairs in $A$.  Thus if $A$ is a set of vertices, then $A^{(2)}$ is the set of all edges between the vertices of $A$.

Recall that we write $K_n = (V_n, E_n)$ for the complete graph on $n$ vertices, where $V_n$ and $E_n$ is the vertex and edge set respectively of $K_n$. 
A function $w: E_n \rightarrow \mathbb{R}$ is called a \emph{weighting} of $K_n$. For $E \subseteq E_n$, and more generally, for a subgraph $G=(V,E)$ of $K_n$, we define
\[
w(G) = w(E) := \sum_{e \in E}w(e) \:\:\:\text{and}\:\:\:
w[G] = w[E] := \sum_{e \in E}|w(e)|.
\]

Recall that we write $\mathcal{H}_n$ for the set of all $(n-1)!/2$ Hamilton cycles of $K_n$. We write $\tilde{\mathcal{H}}_n$ for the set of all $(n-1)!$ directed Hamilton cycles of $K_n$. 
For $n$ even, we define $\mathcal{M}_n$ to be the set of all perfect matchings of $K_n$. An \emph{optimal matching} of $K_n$ is any set of $\lfloor n/2 \rfloor$ independent edges in $K_n$; thus an optimal matching is a perfect matching if $n$ is even and is a matching spanning all but one vertex of $K_n$ if $n$ is odd. At certain points throughout the course of the paper, we shall distinguish between the cases when $n$ is odd and even. The case when $n$ is odd requires a little extra care, but the reader loses very little by focusing on the case when $n$ is even.



We have already defined the domination ratio of a TSP-domination algorithm, but we give here an equivalent reformulation which we shall use henceforth.
Note that, for $r \in [0,1]$ and $(n,w)$ a fixed instance of TSP, a TSP algorithm $A$ has \emph{domination ratio at least $r$ for $(n,w)$} if and only if,
given $(n,w)$ as input, the algorithm $A$ outputs a Hamilton cycle $H^*$ of $K_n$ satisfying 
$$\Prob(w(H) < w(H^*)) \leq 1-r,$$
where $H$ is a uniformly random Hamilton cycle from $\mathcal{H}_n$. 
\COMMENT{TSP14: Changed paragraph above.}


Note that for the TSP problem, if $\lambda>0$, then two instances $(n,w)$ and $(n, \lambda w)$ are completely equivalent, and so by suitably scaling $w$, we can and shall always assume that $w: E_n \rightarrow [-1,1]$. 

While the TSP problem is known to be NP-hard, the corresponding problem for optimal matchings is polynomial-time solvable \cite{Ed1}.

\begin{thm} \label{thm:matalg}
There exists an $O(n^4)$-time\COMMENT{This is not best available, but finding a min weight optimal matching does not dominate the time complexity of our algorithm anyway.} algorithm which, given $(n,w)$ as input (where $n \geq 2$  and $w$ is a weighting of $K_n$), outputs an optimal matching 
 $M^*$ of $K_n$, such that
\[
w(M^*) = \min_{M \in \mathcal{M}_n} w(M).
\] 
\end{thm}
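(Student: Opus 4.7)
The plan is to invoke Edmonds' classical blossom algorithm \cite{Ed1} for the minimum-weight perfect matching problem. When $n$ is even, an optimal matching of $K_n$ is precisely a perfect matching, so the statement is immediate from a suitable polynomial-time implementation of that algorithm; the bound $O(n^4)$ is not the best available but is easy to quote and is far below the overall running time of our TSP-domination algorithm.

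When $n$ is odd, I would reduce to the even case by a standard auxiliary-vertex construction. Form the complete graph $K_{n+1}$ on vertex set $V_n \cup \{v^*\}$, where $v^*$ is a new vertex, and extend the weighting $w$ by setting $w(v^*u) := 0$ for every $u \in V_n$. Any perfect matching $M'$ of $K_{n+1}$ has the form $M' = M \cup \{v^*u\}$, where $u$ is the unique vertex matched to $v^*$ and $M$ is an optimal matching of $K_n$ covering $V_n \setminus \{u\}$; moreover $w(M') = w(M)$. Conversely, any optimal matching $M$ of $K_n$ extends to a perfect matching $M \cup \{v^*u_M\}$ of $K_{n+1}$ of the same weight, where $u_M$ is the unique vertex of $K_n$ not covered by $M$. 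Hence applying Edmonds' algorithm to $(K_{n+1}, w)$ and deleting the edge incident to $v^*$ from the output returns an optimal matching $M^*$ of $K_n$ satisfying $w(M^*) = \min_{M \in \mathcal{M}_n} w(M)$. Since $n+1 = O(n)$, the overall running time is still $O(n^4)$.

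There is essentially no genuine obstacle here: the result is a standard consequence of Edmonds' work, and the odd case is handled by the routine reduction above. The only point that needs citation rather than argument is that some implementation of the blossom algorithm attains the stated $O(n^4)$ bound on weighted complete graphs, which is well-documented in the matching literature.
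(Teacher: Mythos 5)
Your proposal is correct and matches the paper's approach: the paper states this result without proof, simply citing Edmonds' work \cite{Ed1}, and your invocation of the blossom algorithm together with the routine zero-weight auxiliary-vertex reduction for odd $n$ is exactly the standard argument the citation is standing in for. The weight-preserving bijection between perfect matchings of $K_{n+1}$ and optimal matchings of $K_n$ is stated correctly, and $O(n^4)$ is a safe (indeed generous) bound for known implementations.
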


We call this algorithm the \emph{minimum weight optimal matching algorithm}.


\section{An algorithm\label{sec:algorithm}}
\label{sec:basicalg}

We now informally describe a simple polynomial-time algorithm that turns out to give a domination ratio close to one\COMMENT{TSP11: reworded slightly} for many instances of TSP. Given an instance $(n,w)$ of TSP, we apply the minimum weight optimal matching algorithm to find an optimal matching $M^*$ of $K_n$ of minimum weight. We then extend $M^*$ to a Hamilton cycle $H^*$ of $K_n$ using a randomized approach. This last step can be transformed into a deterministic polynomial-time algorithm using the method of conditional expectations of Erd\H{o}s and Selfridge \cite{ErdSelf,ErdSpe,MotRag}.

Note that in this section and the next, we allow our weightings to take values in the interval $[-1,1]$. This does not make the analysis more difficult than in the $\{0,1\}$-case.%
   \COMMENT{TST14: new sentence}

\begin{lem}
\label{lem:derand}
There exists an $O(n^5)$-time algorithm which, given an instance $(n,w)$ of TSP (with $n \geq 3$) and any optimal matching $\hat{M}$ of $K_n$ as input,  outputs a Hamilton cycle $\hat{H}$ of $K_n$ satisfying
\begin{equation} \label{eq:exp}
w(\hat{H}) \leq \left( 1 - \frac{1}{n-2} \right) w(\hat{M}) + \frac{1}{n-2}w(K_n) + \rho(n),
\end{equation}
where $\rho(n) = 1$ if $n$ is odd and $\rho(n)=0$ if $n$ is even.
\end{lem}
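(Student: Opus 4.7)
I would construct $\hat{H}$ in two stages: first define a randomised extension of $\hat{M}$ to a Hamilton cycle $H$ of $K_n$ with $\E[w(H)]$ bounded by the right-hand side of \eqref{eq:exp}, then derandomise by the Erd\H os--Selfridge method of conditional expectations to obtain a deterministic $\hat{H}$ with $w(\hat{H})\leq \E[w(H)]$.

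For $n$ even, take $H$ to be a uniformly random Hamilton cycle of $K_n$ containing $\hat{M}$, and write $k := n/2 = |\hat{M}|$. The group generated by permuting the $k$ matching edges and flipping their orientations acts transitively on the non-matching edges of $K_n$, so every non-matching edge lies in $H$ with a common probability $p$. Since each Hamilton cycle containing $\hat{M}$ has exactly $k$ non-matching edges and there are $\binom{n}{2}-k=2k(k-1)$ non-matching edges in total, we get $p=1/(n-2)$, and linearity of expectation yields
\[
\E[w(H)] = w(\hat{M}) + p\bigl(w(K_n) - w(\hat{M})\bigr) = \Bigl(1-\tfrac{1}{n-2}\Bigr)w(\hat{M}) + \tfrac{1}{n-2}w(K_n),
\]
which matches \eqref{eq:exp} with $\rho(n)=0$. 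For $n$ odd, let $v$ be the vertex uncovered by $\hat{M}$ and again take $H$ to be a uniformly random Hamilton cycle containing $\hat{M}$ (equivalently, $v$ is inserted at a uniformly random position in an extension of $\hat{M}$). An analogous symmetry argument, now separating non-matching edges into those incident to $v$ and those not, gives a closed form for $\E[w(H)]$. A short calculation shows that this differs from the even-case form by error terms that, using $|w(e)|\leq 1$ and $w(\hat{M})\leq w(K_n)$ in the relevant regime, contribute at most the slack $\rho(n)=1$ permitted by \eqref{eq:exp}; intuitively, this extra slack accounts for the cost of incorporating $v$, which is controlled by a single edge's weight.

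The randomised construction above amounts to $O(n)$ sequential decisions: a cyclic ordering and orientation of the matching edges, plus an insertion position for $v$ in the odd case. At every stage, $\E[w(H)\mid \mathrm{history}]$ is a sum over $e \in E(K_n)$ of $w(e)$ times the conditional probability that $e\in H$, and these conditional probabilities can be computed in polynomial time by the same symmetry/counting argument used above. Greedily choosing at each step the option that minimises the conditional expectation, the method of conditional expectations guarantees a deterministic output satisfying $w(\hat H) \le \E[w(H)]$. Counting $O(n)$ decisions, each evaluating $O(n)$ options with $O(n^2)$ edge probabilities calculated in $O(n^2)$ time apiece, gives the claimed $O(n^5)$ running time. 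I expect the main technical hurdle to be the odd case: carefully evaluating $\E[w(H)]$ and verifying that its excess over the right-hand side of \eqref{eq:exp} is absorbed by $\rho(n)=1$; once this is done, the derandomisation and runtime analysis are routine.
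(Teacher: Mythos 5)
Your proposal is correct and, for the even case, essentially reproduces the paper's argument: the paper likewise takes a uniformly random Hamilton cycle containing $\hat M$, uses that each non-matching edge lies in it with probability $1/(n-2)$, and derandomises by conditional expectations. Your derandomisation parametrises the randomness as a cyclic order and orientation of the matching edges, whereas the paper greedily joins end-vertices of two of the current paths (computing the conditional expectation via the fact that the graph of joining edges is $2(i-2)$-regular); these are interchangeable Erd\H{o}s--Selfridge implementations and both fit in $O(n^5)$. The genuine divergence is the odd case. The paper sidesteps your direct computation: it adds a dummy vertex $v'$ cloning the weights at the unmatched vertex $v$, sets $w(vv'):=0$, runs the even-$n$ algorithm on $(n+1,w')$ with the matching $\hat M\cup\{vv'\}$, and contracts $vv'$; the additive $1$ comes from $w'(K_{n+1})\le w(K_n)+n-1$. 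Your direct route also works and is arguably more transparent: for $n$ odd the $n-1$ edges at $v$ each lie in $H$ with probability $2/(n-1)$ and the remaining non-matching edges with probability $1/(n-1)$, so $\E[w(H)]=(1-\tfrac1{n-1})w(\hat M)+\tfrac1{n-1}w(K_n)+\tfrac1{n-1}\sum_{x\ne v}w(vx)$, and the last term has absolute value at most $1=\rho(n)$ since $|w(e)|\le 1$. This is the ``short calculation'' you defer, and it does close; note only that replacing the coefficient $\tfrac1{n-1}$ by $\tfrac1{n-2}$ requires precisely $w(\hat M)\le w(K_n)$, an inequality the paper's own chain of estimates for odd $n$ also uses silently, so this is a shared caveat rather than a defect of your approach.
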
 
\begin{proof}
Assume first that $n \geq 4$ is even. For $G \subseteq K_n$, we write $H_G$ for a uniformly random Hamilton cycle from $\mathcal{H}_n$ that contains all edges of $G$. We have
\begin{align*}
\E(w( H_{\hat{M}} )) 
&= \sum_{ e \in E_n } \Prob( e \in H_{\hat{M}})w(e) 
= \sum_{ e \in E(\hat{M}) } w(e) + \sum_{e \in E_n \setminus E(\hat{M})} \frac{1}{n-2}w(e) \\
&= w(\hat{M}) + \frac{1}{n-2}(w(K_n) - w(\hat{M})) \\
&= \left(1 - \frac{1}{n-2} \right)w(\hat{M}) + \frac{1}{n-2}w(K_n).
\end{align*}
Let $\hat{H}$ be a Hamilton cycle of $K_n$ such that $w(\hat{H}) \leq \mathbb{E}(w( H_{\hat{M}} ))$; thus $\hat{H}$ satisfies (\ref{eq:exp}).%
\COMMENT{TSP10: last sentence rephrased so that it's not an existence statement}
 It remains for us to show that we can find such a Hamilton cycle in $O(n^5)$-time. The following claim provides a subroutine that we shall iteratively apply to obtain the desired algorithm. 

By a \emph{non-trivial path}, we mean a path with at least two vertices. If $G=(V_n, E) \subseteq K_n$ is the union of vertex-disjoint non-trivial paths, let $J(E)$ denote those edges of $K_n$ that join the end-vertices of two distinct paths of $G$ together.

\medskip

\noindent
\textbf{Claim.} \emph{Let $w$ be a weighting of $K_n$ and let $G=(V_n,E)$ be a union of $i \geq 2$ vertex-disjoint non-trivial paths. 
Then in $O(n^4)$-time, we can find $e^* \in J(E)$ such that
\[
\E(w(H_{G \cup e^*})) \leq \E(w(H_G)).
\]
Furthermore $G \cup e^*$ is the union of $i-1$ vertex-disjoint non-trivial paths.}

\smallskip

\noindent
First note that we can determine $J(E)$ in $O(n^2)$-time and that for each $e \in J(E)$, $G \cup e$ is the disjoint union of $i-1$ non-trivial paths. Furthermore, for each $e \in J(E)$, we can compute $\E(w(H_{G \cup e}))$ in $O(n^2)$-time. Indeed, assuming first that $i>2$, note that $\Prob(e' \in H_{G \cup e}) = \frac{1}{2(i-2)}$
 for all $e' \in J(E \cup \{ e \})$ since $J(E \cup \{ e \})$ is regular of degree $2(i-2)$ and each edge at a given vertex is equally likely to be in $H_{G \cup e}$. Now we have
\begin{align*}
\E(w(H_{G \cup e}))
&=w(G \cup e) + \sum_{e' \in J(E \cup \{ e \})} 
\Prob(e' \in H_{G \cup e}) w(e') \\
&= w(G \cup e) + \frac{1}{2(i-2)}
\sum_{e' \in J(E \cup \{ e \})}w(e'),
\end{align*}
so we see that computing $\E(w(H_{G \cup e}))$ takes $O(n^2)$-time. (For $i=2$, we have $\E(w(H_{G \cup e}))
=w(G \cup e) + w(e')$, where $e'$ is the unique edge that closes the Hamilton path $G \cup e$ into a Hamilton cycle.)\COMMENT{TSP10: sentence in brackets added}
Since
\[
\E(w(H_G)) = \frac{1}{|J(E)|}\sum_{e \in J(E)} 
\E(w(H_{G \cup e})),
\]
there exists some $e^*\in J(E)$ such that $\E(w(H_{G \cup e^*})) \leq \mathbb(w(H_G))$. By computing $\E(w(H_{G \cup e}))$ for each $e \in J(E)$, we can determine $e^*$ in $O(n^4)$-time.\COMMENT{In fact can do it in $O(n^3)$-time if we are allowed to store information about calculations we have made.} This proves the claim.


\medskip

\noindent
We now iteratively apply the subroutine from the claim $n/2 - 1$ times. Thus let $A_0 := \hat{M}$, and let $G_0 := (V_n, A_0)$, and for each $i = 1, \ldots, n/2 -1$, let $G_i := (V_n, A_i)$ be obtained from $G_{i-1} := (V_n, A_{i-1})$ by setting $A_i := A_{i-1} \cup \{ e_i \}$, where $e_i$ is obtained by applying the subroutine of the claim to $G_{i-1}$.

By induction, it is clear that $G_i$ is the disjoint union of $n/2 - i \geq 2$ non-trivial paths for $i=0, \ldots, n/2-2$, and so the claim can be applied at each stage. By induction it is also clear for all $i = 1, \ldots, n/2 -1$, that 
$\E(H_{G_i}) \leq \E(H_{\hat{M}})$. Let $\hat{H}$ be the Hamilton cycle obtained by closing the Hamilton path $G_{n/2-1}$. Then we have 
\[
w(\hat{H}) = \E(w(H_{G_{n/2-1}})) 
\leq \E(H_{\hat{M}}),
\]
as required.

The running time of the algorithm is dominated by the $O(n)$ applications of the subroutine from the claim each taking $O(n^4)$-time, giving a running time of $O(n^5)$.

Now consider the case when $n \geq 3$ is odd. Let $(n,w)$ be an instance of TSP with $n \geq 3$ odd and $\hat{M}$ an optimal matching of $K_n = (V_n,E_n)$. We introduce a weighting $w'$ of $K_{n+1} \supseteq K_n$ defined as follows. Let $v$ be the unmatched vertex of $K_n$ in $\hat{M}$ and let $v'$ be the unique vertex in $K_{n+1}$ but not in $K_n$.  Set $w'(e):=w(e)$ for all $e \in E_n$, set $w(v'x):=w(vx)$ for all $x \in V_n \setminus \{ v \}$, and set $w(vv'):=0$. Let $\hat{M}'$ be the perfect matching of $K_{n+1}$ given by $\hat{M}' := \hat{M} \cup \{ vv' \}$. We apply the algorithm for $n$ even to $(n+1,w')$ and $\hat{M}'$ to produce a Hamilton cycle $\hat{H}'$ satisfying
\begin{align*}
w'(\hat{H}') 
&\leq \left( 1 - \frac{1}{n-1} \right) w'(\hat{M}') + \frac{1}{n-1}w'(K_{n+1}) \\
&\leq \left( 1 - \frac{1}{n-1} \right) w(\hat{M}) + \frac{1}{n-1}w(K_{n}) + 1 \\
&\leq \left( 1 - \frac{1}{n-2} \right) w(\hat{M}) + \frac{1}{n-2}w(K_{n}) + 1, 
\end{align*}
where the second inequality follows from the fact that $w(\hat{M}) = w'(\hat{M}')$ and $w'(K_{n+1}) \leq w(K_n) + n-1$. Now contracting the edge $vv'$ in $\hat{H}'$ to give a Hamilton cycle $\hat{H}$ of $K_n$, and noting that $w'(\hat{H}')=w(\hat{H})$, the result immediately follows.
\end{proof}

We end the section by formally describing the steps of our main algorithm, which we call Algorithm~A.

\begin{enumerate}
\item[1.] Using the minimum weight optimal matching algorithm (Theorem~\ref{thm:matalg}), find a minimum weight optimal matching $M^*$ of $K_n$. ($O(n^4)$ time)
\item[2.] Apply the algorithm of Lemma~\ref{lem:derand} to extend $M^*$ to a Hamilton cycle $H^*$ satisfying $w(H^*) \leq (1 - \frac{1}{n-2})w(M^*) + \frac{1}{n-2}w(K_n) + \rho(n)$.%
\end{enumerate}
(Recall that $\rho(n) =1$ if $n$ is odd and $\rho(n)=0$ otherwise.)%
\COMMENT{TSP12: sentence above added}
So Algorithm~A has running time $O(n^5)$.
In the next few sections, we evaluate the performance of this algorithm for certain instances of TSP.


\section{Martingale estimates \label{sec:mar}}

Our next task is to find for each instance $(n,w)$ of TSP and each $r \in [0,1]$, a non-trivial threshold $t(w,r)$ satisfying
\[
\Prob(w(H) < t(w,r)) \leq r,
\]
where $H$ is a uniformly random Hamilton cycle from $\mathcal{H}_n$. We achieve this using martingale concentration inequalities, for which we now introduce the necessary setup, following McDiarmid \cite{McD}.

Let $(\Omega, \mathcal{F}, \Prob)$ be a finite probability space with $(\Omega, \emptyset)=\mathcal{F}_0 \subseteq \mathcal{F}_1 \subseteq \cdots \subseteq \mathcal{F}_n = \mathcal{F}$ a filtration of $\mathcal{F}$. A martingale is a sequence of finite real-valued random variables $X_0, X_1, \ldots, X_n$ such that $X_i$ is $\mathcal{F}_i$-measurable and $\E(X_i \mid \mathcal{F}_{i-1}) = X_{i-1}$ for all $i = 1, \ldots, n$. Note that for any real-valued $\mathcal{F}$-measurable random variable $X$, the sequence of random variables given by $X_i:=\E(X \mid \mathcal{F}_i)$ is a martingale.  
The \emph{difference sequence} $Y_1, \ldots, Y_n$ of a martingale $X_0,X_1, \ldots, X_n$ is the sequence of random variables  given by $Y_i := X_i - X_{i-1}$. The \emph{predictable quadratic variation} of a martingale is defined to be the random variable $W$ given by
\[
W := \sum_{i=1}^n \E(Y_i^2 \mid \mathcal{F}_{i-1}).
\]

We shall use the following variant of Freedman's inequality; see e.g.\ Theorem~3.15 in the survey \cite{McD} of McDiarmid.

\begin{thm} \label{thm:Freedman}
Let $X_0, X_1, \ldots, X_n$ be a martingale with difference sequence $Y_1, \ldots, Y_n$ and predictable quadratic variation $W$. If there exist constants $R$ and $\sigma^2$ such that $|Y_i| \leq R$ for all $i$ and $W \leq \sigma^2$, then %
\COMMENT{McDiarmid only states the one-sided inequality (in different notation); this one is the obvious two-sided one}
\[
\Prob(|X_n - X_0| \geq  t) \leq 2\exp\left( - \frac{t^2/2}{\sigma^2 + Rt/3} \right).
\] 
\end{thm}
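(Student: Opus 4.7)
The plan is to prove this via the classical Cram\'er--Chernoff method adapted to martingales, using an exponential supermartingale. I would begin by fixing $\lambda > 0$ (to be optimized later) and applying Markov's inequality to get
\[
\Prob(X_n - X_0 \geq t) \;\leq\; e^{-\lambda t}\,\E\!\left(e^{\lambda(X_n - X_0)}\right).
\]
The one-sided bound will imply the two-sided version by applying the same argument to the martingale $(-X_i)$, which has the same parameters $R$ and $W$, and taking a union bound to obtain the factor of $2$.

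The main analytic ingredient is a bound on the conditional moment generating function of each increment. Since $|Y_i| \leq R$, $\E(Y_i \mid \mathcal{F}_{i-1}) = 0$, and the function $x \mapsto (e^x - 1 - x)/x^2$ is increasing on $\mathbb{R}$, one has the pointwise bound
\[
e^{\lambda y} \;\leq\; 1 + \lambda y + y^2\,\frac{e^{\lambda R} - 1 - \lambda R}{R^2}
\]
valid for all $|y| \leq R$. Writing $\psi := (e^{\lambda R} - 1 - \lambda R)/R^2$, I would take conditional expectations, kill the linear term by the martingale property, and apply $1 + x \leq e^x$ to obtain
\[
\E\!\left(e^{\lambda Y_i} \mid \mathcal{F}_{i-1}\right) \;\leq\; \exp\!\left(\psi\,\E(Y_i^2 \mid \mathcal{F}_{i-1})\right).
\]

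To iterate cleanly, I would set $V_i := \sum_{k=1}^{i} \E(Y_k^2 \mid \mathcal{F}_{k-1})$ and $Z_i := \exp(\lambda(X_i - X_0) - \psi V_i)$; the conditional MGF bound above says exactly that $Z_i$ is a supermartingale, so $\E(Z_n) \leq \E(Z_0) = 1$. Using $V_n = W \leq \sigma^2$, this gives $\E(e^{\lambda(X_n - X_0)}) \leq e^{\psi \sigma^2}$, and combining with Markov's inequality yields
\[
\Prob(X_n - X_0 \geq t) \;\leq\; \exp\!\left(-\lambda t + \frac{\sigma^2}{R^2}\bigl(e^{\lambda R} - 1 - \lambda R\bigr)\right).
\]

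The remaining task is to optimize over $\lambda > 0$. The exact minimizer $\lambda^\ast = R^{-1}\log(1 + Rt/\sigma^2)$ produces Bennett's inequality, with exponent $-(\sigma^2/R^2)\bigl((1+u)\log(1+u) - u\bigr)$ where $u := Rt/\sigma^2$. The main technical nuisance --- really the only delicate step --- is converting Bennett's form into the cleaner Freedman form stated in the theorem. This is done via the elementary inequality
\[
(1 + u)\log(1 + u) - u \;\geq\; \frac{u^2}{2(1 + u/3)} \qquad (u \geq 0),
\]
which can be verified by comparing derivatives at $0$ or by power series. Substituting $u = Rt/\sigma^2$ turns the exponent into $-t^2/\bigl(2(\sigma^2 + Rt/3)\bigr)$, as required.
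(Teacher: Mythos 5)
Your argument is correct. Note that the paper itself gives no proof of this statement: it simply cites Theorem~3.15 of McDiarmid's survey and remarks that the two-sided version follows from the one-sided one, which is exactly the reduction you make via the martingale $(-X_i)$ and a union bound. What you have written out is, in substance, the standard proof of that cited result: the exponential supermartingale $Z_i = \exp(\lambda(X_i - X_0) - \psi V_i)$ built from the conditional MGF bound $\E(e^{\lambda Y_i} \mid \mathcal{F}_{i-1}) \leq \exp\bigl(\psi\,\E(Y_i^2 \mid \mathcal{F}_{i-1})\bigr)$, followed by optimisation in $\lambda$ (Bennett) and the elementary inequality $(1+u)\log(1+u) - u \geq u^2/(2 + 2u/3)$ to reach the Bernstein/Freedman form. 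All the delicate points check out: the pointwise bound $e^{\lambda y} \leq 1 + \lambda y + y^2\psi$ for $y \leq R$ does follow from the monotonicity of $x \mapsto (e^x - 1 - x)/x^2$ (with value $\tfrac12$ at $0$); $V_i$ is $\mathcal{F}_{i-1}$-measurable, so the supermartingale step is legitimate; $\psi \geq 0$, so $W \leq \sigma^2$ can be used pointwise to pass from $\E(Z_n) \leq 1$ to $\E(e^{\lambda(X_n-X_0)}) \leq e^{\psi\sigma^2}$; and the substitution $u = Rt/\sigma^2$ produces exactly the stated exponent. So your proposal supplies a complete, self-contained proof where the paper relies on a citation, but it does not take a different route from the underlying source.
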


In our setting, we work with the probability space $(\mathcal{\tilde{H}}_n, \mathcal{F}, \Prob)$, where $\Prob$ is the uniform distribution on $\mathcal{\tilde{H}}_n$ (and so $\mathcal{F}$ is the power set of $\mathcal{\tilde{H}}_n$). Given an instance $(n,w)$ of TSP, we aim to study the random variable $X: \mathcal{\tilde{H}}_n \rightarrow \mathbb{R}$ given by $X(H):= w(H)$ 
(here $H$ is directed but we interpret $w(H)$ in the obvious way to mean the sum of the weights of the undirected edges of $H$). Thus $X = w(H)$, where $H$ is a uniformly random Hamilton cycle of $\mathcal{\tilde{H}}_n$.

We define a filtration $\mathcal{F}_0 \subseteq \cdots \subseteq \mathcal{F}_{n-1} = \mathcal{F}$, where $\mathcal{F}_k$ is given by fixing the first $k$ vertices of Hamilton cycles. Let us define this more formally. We start by fixing a distinguished vertex $v_0$ of $K_n$ that represents the start of our Hamilton cycle (we will say more later on how $v_0$ should be chosen).
Define $\seq(V_n,k)$ to be the set of sequences $(v_1, \ldots, v_k)$ of length $k$ where $v_1, \ldots, v_k$ are distinct vertices from $V_n \setminus \{v_0 \}$. For  $s = (v_1, \ldots, v_k) \in \seq(V_n,k)$, define $\tilde{\mathcal{H}}_n(s)$ to be the set of Hamilton cycles whose first $k$ vertices after $v_0$ are $v_1, \ldots, v_k$ in that order. Then $\mathcal{F}_k$ is the $\sigma$-field generated by $\{ \mathcal{H}_n(s): s \in \seq(V_n,k) \}$, and it is clear that $\mathcal{F}_0 \subseteq \cdots \subseteq \mathcal{F}_{n-1} = \mathcal{F}$ is a filtration of $\mathcal{F}$.

Thus we obtain a martingale $X_0, \ldots, X_{n-1}$ by setting $X_i := \E(X \mid \mathcal{F}_i)$. We call this the \emph{Hamilton martingale for $(n,w)$}. Note that $X_{n-2}=X_{n-1}=X$ (this is because knowing the order of the first $n-1$ vertices of an $n$-vertex Hamilton cycle determines it completely).

Let us return to the question of how the distinguished vertex $v_0$ should be chosen. Given our instance $(n,w)$ of TSP, by simple averaging we can and shall choose $v_0$ to be a vertex such that 
\begin{equation} \label{eq:v0}
\sum_{v \in V_n \setminus \{ v_0 \}}|w(v_0v)| \leq \frac{2w[K_n]}{n}.
\end{equation}
We require such a choice of $v_0$ in order to effectively
bound the difference sequence and predictable quadratic variation of Hamilton martingales in the following lemma.

\begin{lem} \label{lem:martbounds}
Let $(n,w)$ be an instance of TSP and assume that $w[K_n] = d \binom{n}{2}$ for some $d \in [0,1]$.
Let $X_0, X_1, \ldots, X_{n-1}$ be the Hamilton martingale for $(n,w)$, let $Y_1, \ldots, Y_{n-1}$ be its difference sequence and let $W$ be its predictable quadratic variation. Then we have the following uniform bounds 
$|Y_i| \leq 6$ for all $i$ and $W \leq 60( \sqrt{d}n + 1)$.
\end{lem}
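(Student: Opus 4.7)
My plan is to control both $|Y_i|$ and $W$ via the standard exchangeability/swap coupling for martingales driven by a uniformly random permutation, combined with $|w(e)| \leq 1$ (from the scaling convention before Lemma~\ref{lem:derand}), the total-weight bound $w[K_n] = d\binom{n}{2}$, and the low-weight incidence at $v_0$ guaranteed by~(\ref{eq:v0}).

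Conditional on $\mathcal{F}_{k-1}$, which fixes the prefix $v_0, v_1, \ldots, v_{k-1}$, the tail is a uniformly random ordering of $U := V_n \setminus \{v_0, \ldots, v_{k-1}\}$, so $X_k = f(v_k)$, where
\[
f(u) := w(v_{k-1}u) + \E_\tau\bigl[w(u\tau_1) + w(\tau_1\tau_2) + \cdots + w(\tau_{n-k-1}v_0)\bigr]
\]
and $\tau$ ranges uniformly over orderings of $U \setminus \{u\}$. For any $u, u' \in U$, I would couple the two completions via the measure-preserving bijection that swaps the positions of $u$ and $u'$ in $\tau$. Under this coupling, the two cycles coincide except on edges incident to $u$, $u'$, $v_{k-1}$, and (in the case where $u'$ is in the final position of $\tau$) $v_0$; after distinguishing the interior case from the boundary cases $\tau_1 = u'$ and $\tau_{n-k-1} = u'$, a direct enumeration leaves at most three non-trivial edge-weight pairs contributing to $f(u)-f(u')$, each of absolute value at most~$2$. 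This gives $|f(u) - f(u')| \leq 6$, and hence $|Y_k| = |f(v_k) - \E_{v_k'} f(v_k')| \leq 6$.

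For the predictable quadratic variation, I would start from $\E[Y_k^2 \mid \mathcal{F}_{k-1}] = \frac{1}{2|U|^2}\sum_{u,u' \in U}(f(u)-f(u'))^2$ and bound each $(f(u)-f(u'))^2$ via the coupling by the expected squared sum of a bounded number of edge-weight differences; the elementary inequality $(\sum_{i=1}^{m} a_i)^2 \leq m \sum_{i=1}^{m} a_i^2$ together with $w(e)^2 \leq |w(e)|$ then converts this into a sum of $|w(e)|$ over edges incident to $u$, $u'$, $v_{k-1}$, or $v_0$. Summing over $u, u' \in U$ and averaging over $\tau$, then summing over $k = 1, \ldots, n-2$, each edge of $K_n$ contributes a bounded number of times once the $1/|U|^2$ normalisation is absorbed by the $|U|$ choices of partner and the $1/(|U|-1)$ probability of $\tau$ placing a given vertex in a given position. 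The internal contributions then sum to $O(d n)$ by invoking $\sum_{e \in E_n}|w(e)| = d\binom{n}{2}$, while the boundary contributions from edges at $v_0$, which would otherwise accumulate a factor of~$n$ because the edge $\tau_{n-k-1} v_0$ features in every completion, are controlled by $\sum_{v \neq v_0}|w(v_0 v)| \leq d(n-1)$, which follows from~(\ref{eq:v0}). Using $d \leq 1 \Rightarrow d n \leq \sqrt{d}\,n$, a careful tracking of constants yields $W \leq 60(\sqrt{d}\,n + 1)$.

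The hardest part will be the combinatorial bookkeeping in the variance step: one must track how many times each edge of $K_n$ is affected across all the coupled triples $(u, u', \tau)$ as $k$ varies, and in particular balance the accumulation of $v_0$-edge contributions against the conservation law~(\ref{eq:v0}). The choice of $v_0$ made before the lemma is exactly the input needed to prevent those $v_0$-incident terms from producing an unwanted factor of $n$ in the final bound.
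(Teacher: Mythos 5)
Your high-level strategy (condition on the prefix, express $Y_k$ as a difference of conditional expectations, and feed the total-weight constraint and the choice of $v_0$ into the one-step variance) is the same as the paper's, but two steps as written do not deliver the stated bounds. First, the transposition coupling changes \emph{four} edge-weight pairs in the generic interior case, not three: if $u'$ sits at an interior position $j$ of $\tau$, the two cycles differ on $\{v_{k-1}u,\,u\tau_1,\,\tau_{j-1}u',\,u'\tau_{j+1}\}$ versus $\{v_{k-1}u',\,u'\tau_1,\,\tau_{j-1}u,\,u\tau_{j+1}\}$, eight distinct edges in general, so this coupling only gives $|f(u)-f(u')|\leq 8$ and hence $|Y_k|\leq 8$. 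The constant $6$ is recoverable, but not by this route: the paper computes $X_k-X_{k-1}$ directly as a signed sum of six explicit terms each of absolute value at most $1$ (alternatively, a segment-reversal coupling, which sends $u,\tau_1,\dots,\tau_{j-1},u'$ to its reversal, changes only two edge pairs and gives the even better bound $4$).

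The more serious gap is in the variance step. Your claim that the internal contributions sum to $O(dn)$ because ``each edge of $K_n$ contributes a bounded number of times'' and $\sum_e|w(e)|=d\binom{n}{2}$ is not correct, and indeed $O(dn)$ is not the right target: the lemma's bound is $O(\sqrt{d}\,n)$, which exceeds $dn$ for $d<1$, and this order is attained (take $w$ supported on a clique of size about $\sqrt{d}\,n$ whose vertices are visited last). Tracking your own bookkeeping, an internal edge whose earlier endpoint is visited at time $i$ contributes with weight of order $|w(e)|/(n-i)$ after summing over $k$ (equivalently, the per-step contribution is of order $w[\overline{V}(s_{k-1})^{(2)}]/(n-k)^2$ and the $v_{k-1}$-incident terms contribute $w^+_{\hat H}(v_{k-1})/(n-k)$). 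Turning these into $O(\sqrt{d}\,n)$ is exactly where the paper needs its two extremal estimates: Proposition~\ref{pr:b1}, which truncates $w[A^{(2)}]$ by $\min\{\binom{r}{2}, d\binom{n}{2}\}$ and shows $\sum_r d_r\leq 2\sqrt{d}\,n+1$, and Proposition~\ref{pr:b2}, a rearrangement argument showing $\sum_i w^+_{H}(v_{i-1})/(n-i)\leq \sqrt{d}\,n+2$. Without some version of these, your accounting yields only $O(dn^2)$ for the internal edges and $O(n)$ for the $v_{k-1}$-incident ones. Your treatment of the $v_0$-incident boundary terms via~(\ref{eq:v0}) is fine and matches the paper; it is the interior and $v_{k-1}$ terms, i.e.\ the source of the $\sqrt{d}$ in the statement, that your sketch does not actually control.
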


Before we prove the lemma, we prove a few basic properties of $w$.

\begin{prop}
\label{pr:b1}
Suppose $(n,w)$ is an instance of TSP with $w[K_n] = d \binom{n}{2}$ for some $d \in [0,1]$. For every $A \subseteq V_n$ with $|A|=r$, we have 
$w[A^{(2)}] \leq d_r \binom{r}{2}$, where $d_r := \min\{1, d\binom{n}{2}/\binom{r}{2}\}$. Furthermore
\[
\sum_{r=2}^{n-1}d_r \leq 2\sqrt{d}n + 1.
\]
\end{prop}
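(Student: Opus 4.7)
The plan is to handle the single-$r$ inequality first, then estimate the sum by splitting at a cleverly chosen integer threshold.

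For the first inequality, since $|w(e)| \le 1$ for every edge $e$, we have $w[A^{(2)}] \le |A^{(2)}| = \binom{r}{2}$. Separately, the inclusion $A^{(2)} \subseteq E_n$ gives $w[A^{(2)}] \le w[K_n] = d\binom{n}{2}$. Combining the two bounds yields $w[A^{(2)}] \le \min\{\binom{r}{2}, d\binom{n}{2}\}$, which equals $d_r \binom{r}{2}$ by definition.

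For the sum, I set $t := \sqrt{dn(n-1)}$, so that $d_r = \min\{1, t^2/(r(r-1))\}$, and note that $\sqrt{d}\,n = t\sqrt{n/(n-1)} \ge t$. Hence it suffices to show $\sum_{r=2}^{n-1} d_r \le 2t + 1$. For any integer $R \in \{1,\ldots,n-2\}$, bounding $d_r \le 1$ for $r \le R$ and $d_r \le t^2/(r(r-1))$ for $r > R$, together with the telescoping identity $\sum_{r=R+1}^{n-1} 1/(r(r-1)) = 1/R - 1/(n-1)$, gives
\[
\sum_{r=2}^{n-1} d_r \;\le\; (R-1) + \frac{t^2}{R}.
\]

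The decisive step is to take $R = \lceil t \rceil$ in the main range $1 \le t \le n-2$: then $R \ge t$ forces $t^2/R \le t$, and $R \le t+1$ gives $R - 1 \le t$, so the right-hand side is at most $2t$, which is even better than required. The two boundary regimes are dispatched crudely: when $t < 1$ the trivial estimate $\sum_{r=2}^{n-1} d_r \le t^2(1 - 1/(n-1)) < 1$ suffices, and when $t > n-2$ one uses $\sum_{r=2}^{n-1} d_r \le n-2$ together with $\sqrt{d}\,n = t\sqrt{n/(n-1)} > n-2$ to conclude. The only real obstacle is the bookkeeping across these three cases and checking that the telescoping identity is applied with the correct endpoints; none of the individual estimates is delicate.
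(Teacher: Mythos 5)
Your proof is correct and follows essentially the same route as the paper: both parts bound $w[A^{(2)}]$ by $\min\{\binom{r}{2}, w[K_n]\}$, and both estimate the sum by splitting at a threshold of order $\sqrt{d}\,n$, using $d_r\le 1$ below the threshold and the telescoping bound $d\binom{n}{2}/\binom{r}{2}=t^2/(r(r-1))$ above it. Your handling of the boundary regimes $t<1$ and $t>n-2$ is somewhat more explicit than the paper's, but the underlying argument is identical.
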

\begin{proof}
For $(n,w)$ and $A$ as in the statement of the proposition, we clearly have that $w[A^{(2)}] \leq \min\{\binom{r}{2}, w[K_n]\} = d_r \binom{r}{2}$.
Furthermore, we have
\begin{align*}
\sum_{r=2}^{n-1} d_r 
&\leq \sum_{r=2}^{\lceil \sqrt{d}n \rceil +1}1 
+  \sum_{r =\lceil \sqrt{d}n \rceil +1}^{n-1} d\binom{n}{2}/\binom{r}{2} \\
&=  \lceil \sqrt{d}n \rceil  + 
dn(n-1)\sum_{r =\lceil \sqrt{d}n \rceil +1}^{n-1}\left(\frac{1}{r-1} - \frac{1}{r}\right) \\
&\leq \lceil \sqrt{d}n \rceil  + \frac{dn(n-1)}{\lceil \sqrt{d}n \rceil} \leq 2\sqrt{d}n + 1.
\end{align*}
\end{proof}

Given an instance $(n,w)$ of TSP and a Hamilton cycle $H \in \mathcal{\tilde{H}}$ whose vertices are ordered $v_0, v_1, \ldots, v_{n-1}$, we write
\[
w_H^+(v_i) := \sum_{i+1 \leq j \leq n-1}|w(v_iv_j)|.
\] 

\begin{prop}
\label{pr:b2}
Suppose $(n,w)$ is an instance of TSP with $w[K_n] = d \binom{n}{2}$ for some $d \in [0,1]$ and let $H \in \mathcal{\tilde{H}}_n$. Then
\[
\sum_{i=1}^{n-2} \frac{w_H^+(v_{i-1})}{n-i} \leq \sqrt{d}n + 2.
\]
\end{prop}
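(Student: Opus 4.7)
The plan is to bound $S := \sum_{i=1}^{n-2} w_H^+(v_{i-1})/(n-i)$ by viewing it as the value of a simple linear program in the variables $x_i := w_H^+(v_{i-1})$. The first step is to record two elementary constraints on the sequence $(x_i)_{i=1}^{n-2}$: (i) $0 \leq x_i \leq n-i$, since $w_H^+(v_{i-1})$ is a sum of $n-i$ terms each of magnitude at most $1$; and (ii) $\sum_{i=1}^{n-2} x_i \leq w[K_n] = d\binom{n}{2}$, since this sum rearranges to $\sum_{0 \leq k < l \leq n-1,\, k \leq n-3} |w(v_k v_l)|$, which picks up each edge of $K_n$ at most once (it misses only $v_{n-2}v_{n-1}$). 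Proposition~\ref{pr:b1} is not actually needed here; only the trivial identity $w[K_n] = d\binom{n}{2}$ is used.

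The second step is to maximise $\sum x_i/(n-i)$ subject to (i) and (ii) by a greedy/LP argument. Since the coefficient $1/(n-i)$ is largest for the smallest value of $n-i$, the LP optimum is achieved by saturating $x_{n-m} = m$ in the order $m = 2, 3, \ldots, k$, where $k$ is chosen to be the largest integer with $\sum_{m=2}^k m = k(k+1)/2 - 1 \leq d\binom{n}{2}$, and then placing the residual budget (of size strictly less than $k+1$) on $x_{n-(k+1)}$. Each saturated term contributes exactly $1$ to the sum, and the residual contributes strictly less than $1$, so the LP value (and hence $S$) is strictly less than $k$. In the degenerate case where $k \geq n-1$ all the variables get saturated and $S \leq n-2$, but then $d \geq 1 - 2/(n(n-1))$, so $\sqrt{d}n + 2 > n - 2$ already.

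The final step is arithmetic: from $k(k+1) \leq dn(n-1) + 2 \leq dn^2 + 2$ and the inequality $\sqrt{a+b} \leq \sqrt{a} + \sqrt{b}$, one deduces $k \leq \sqrt{dn^2 + 2} \leq \sqrt{d}n + \sqrt{2} < \sqrt{d}n + 2$, which gives the claimed bound.

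There is no real obstacle here; the proof is a short LP/greedy exchange argument. The main conceptual observation is that constraint (i) alone would give only the trivial bound $S \leq n-2$, while constraint (ii) alone would give a bound of order $dn/\min_i(n-i) = dn/2$; it is the combination of the two that produces the desired $\sqrt{d}n$-scaling.
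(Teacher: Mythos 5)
Your proof is correct and is essentially the same argument as the paper's: both maximise the weighted sum by a greedy/rearrangement step that concentrates the total weight budget $d\binom{n}{2}$ on the terms with the largest coefficients, and both extract the $\sqrt{d}n$ scaling from the fact that saturating $k$ levels costs about $\binom{k}{2}$ of the budget. The only cosmetic difference is that you aggregate into the per-vertex variables $x_i = w_H^+(v_{i-1})$ and solve a fractional knapsack, whereas the paper works edge-by-edge in the lexicographic order; the underlying exchange argument is identical.
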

\begin{proof}
Let $(n,w)$ and $H$ be as in the statement of the proposition and let $v_0, v_1, \ldots, v_{n-1}$ be the ordering of vertices given by $H$.
Let $e_1, e_2, \ldots, e_{\binom{n}{2}}$ be the lexicographic ordering on the edges of $K_n$ induced by the vertex ordering of $H$. If $e_i = v_jv_k$ with $j<k$, then set $\lambda_i := 1/(n-j-1)$. Thus
\[
\sum_{i=1}^{n-2} \frac{w_H^+(v_{i-1})}{n-i}
\leq \sum_{i=1}^{n-1} \frac{w_H^+(v_{i-1})}{n-i}
 = \sum_{i=1}^{\binom{n}{2}} \lambda_i |w(e_i)|.
\]
Note that the $\lambda_i$  form an increasing sequence and so $\sum \lambda_i |w(e_i)|$ is maximised (subject to the constraints that $w(e) \in [-1,1]$ for all $e$ and $w[K_n] = d\binom{n}{2}$) by maximising the weights of edges at the end of the lexicographic order.
Therefore we obtain an overestimate of $\sum \lambda_i |w(e_i)|$ by assigning a weight of $1$ to the last $\binom{r}{2}$ edges (and $0$ to all other edges), where $r$ is chosen such that $\binom{r}{2} \geq d \binom{n}{2}$. The last inequality is satisfied  by taking $r := \lceil \sqrt{d}n \rceil + 1$, and so
\begin{align*}
\sum_{i=1}^{n-2} \frac{w_H^+(v_{i-1})}{n-i}
\leq \sum_{i=1}^{n-1} \frac{w_H^+(v_{i-1})}{n-i} 
\leq \sum_{i=n-r}^{n-1} \frac{n-i}{n-i} = r = \lceil \sqrt{d}n \rceil + 1. 
\end{align*}
\end{proof}

Before proving Lemma~\ref{lem:martbounds} we introduce some further notation. For any $H \in \tilde{H}_n$, define $s_i(H) = (v_1, \ldots, v_i) \in \seq(V_n,i)$, where $v_1, \ldots, v_i$ are the first $i$ vertices (after $v_0$) of $H$, in that order.

If $Z$ is a random variable on $(\mathcal{\tilde{H}}_n, \mathcal{F}, \Prob)$, then as usual, we write $Z(H)$ for the value of $Z$ at $H \in \mathcal{\tilde{H}}_n$. For $s = (v_1, \ldots, v_k) \in \seq(V_n,k)$, we write $\E(Z \mid s) = \E(Z \mid v_1, \ldots, v_k)$ to mean the expected value of $Z$ given that the first $k$ vertices (after $v_0$) of our uniformly random Hamilton cycle from $\tilde{H}_n$ are $v_1, \ldots, v_k$ in that order. Thus we have $\E(Z \mid \mathcal{F}_i)(H) = \E(Z \mid s_i(H))$ for all $H \in \mathcal{\tilde{H}}_n$ and in particular, we have $X_i(H) = \E(X \mid s_i(H))$. %
\COMMENT{You may find this notationally heavy. I'm essentially explaining two different but equivalent notations because in parts one form is more intuitive and in other parts the other is more intuitive.}

Finally, for any $s = (v_1, \ldots, v_i) \in \seq(V_n,i)$, 
write $V(s):=\{v_0, v_1, \ldots, v_i\}$ 
and $\overline{V}(s):=V_n \setminus V(s)$.

\removelastskip\penalty55\medskip\noindent{\bf Proof of Lemma~\ref{lem:martbounds}}.%
\COMMENT{most technical part of the paper}
Fix a Hamilton cycle $\hat{H} \in \mathcal{\tilde{H}}_n$ and let $v_0,v_1, \ldots, v_{n-1}$ be the order of vertices in $\hat{H}$. 
Let $s_k = s_k(\hat{H}) = (v_1, \ldots, v_k)$. Also, let $H$ be a uniformly random Hamilton cycle from  $\mathcal{\tilde{H}}_n$.

We have for each $k=0, \ldots, n-2$ that
\begin{align*}
X_k(\hat{H}) &= \E(X \mid s_k) 
= \sum_{e \in E_n} \Prob(e \in H \mid s_k(H) = s_k)w(e) \\
&= \frac{1}{n-k-1} \sum_{v \in \overline{V}(s_k)}w(v_kv) 
+ \frac{1}{n-k-1} \sum_{v \in \overline{V}(s_k)} w(v_0v) \\
&+ \frac{2}{n-k-1} \sum_{e \in \overline{V}(s_k)^{(2)}}w(e)
+ \sum_{i=1}^k w(v_{i-1}v_i).
\end{align*}%
\COMMENT{To see the $2/(n-k-1)$ term in the third line, note that we want to choose $n-k-2$ edges  in $\overline{V}(s_k)^{(2)}$ and $|\overline{V}(s_k)^{(2)}| = \binom{n-k-1}{2}$.}
Using the above, and after cancellation and collecting terms,\COMMENT{$A_2-A_4-A_5$ are contributions from the first and third sums, $A_1-A_3$ is the contribution from second sum, $A_6$ is contribution from fourth sum.} we obtain, for each $k=1, \ldots, n-2$
\[
Y_k(\hat{H}) = X_k(\hat{H}) - X_{k-1}(\hat{H}) = A_1 + A_2 - A_3 -A_4 -A_5 +A_6,
\]
where $A_i = A_i(\hat{H})$ is given by
\begin{align*}
&A_1 := \left( \frac{1}{n-k-1} - \frac{1}{n-k} \right) 
\sum_{v \in \overline{V}(s_{k-1})}w(v_0v), 
&&A_4 := \frac{1}{n-k-1} 
\sum_{v \in \overline{V}(s_{k})}w(v_kv), \\
&A_2 := 2\left( \frac{1}{n-k-1} - \frac{1}{n-k} \right) 
\sum_{e \in \overline{V}(s_{k-1})^{(2)}}w(e),
&&A_5 := \frac{1}{n-k} 
\sum_{v \in \overline{V}(s_{k-1})}w(v_{k-1}v), \\ 
&A_3 := \frac{1}{n-k-1}w(v_0v_k), 
&&A_6 := w(v_{k-1}v_k).
\end{align*}
In order to bound $|Y_k(\hat{H})|$, we bound each of $|A_1|, \ldots, |A_6|$ in similar ways using the fact that $|w(e)| \leq 1$ for all edges $e$. We have 
\[
|A_1| \leq \frac{1}{(n-k)(n-k-1)}\sum_{v \in \overline{V}(s_{k-1})}|w(v_0v)| \leq \frac{|\overline{V}(s_{k-1})|}{(n-k)(n-k-1)} \leq 1.
\]
We have
\[
|A_2| \leq \frac{2w[\overline{V}(s_{k-1})^{(2)}]}{(n-k)(n-k-1)} \leq
\frac{2|\overline{V}(s_{k-1})^{(2)}|}{(n-k)(n-k-1)} = 1.
\]
Similarly, $|A_3|, |A_4|, |A_5|, |A_6| \leq 1$.%
\COMMENT{Is it worth spelling this out? Clearly $|A_3|,|A_6| \leq 1$. Also $|A_4| \leq |\overline{V}(s_k)|/(n-k-1) \leq 1$ and $|A_5| \leq |\overline{V}(s_{k-1})|/(n-k) \leq 1$.} 
Thus we have $|Y_k(\hat{H})| \leq \sum_{i=1}^6 |A_i| \leq 6$, and since $\hat{H}$ is arbitrary, we have $|Y_k| \leq 6$ as required.

In order to bound $W(\hat{H})$, we must estimate $\E(Y_k^2 \mid \mathcal{F}_{k-1})(\hat{H})$. We have
\begin{align*}
\E(Y_k^2 \mid \mathcal{F}_{k-1})(\hat{H}) &= \E(Y_k^2 \mid s_{k-1}(\hat{H}))
= \E(Y_k^2 \mid v_1, \ldots, v_{k-1})  \\
&= \frac{1}{|\overline{V}(s_{k-1})|}
\sum_{y \in \overline{V}(s_{k-1})} \E(Y_k^2 \mid v_1, \ldots, v_{k-1},y).
\end{align*}
Recall that $Y_k$ is $\mathcal{F}_k$-measurable, so if $H_{k,y} \in \mathcal{\tilde{H}}_n(v_1, \ldots, v_{k-1},y)$, then
\[
\E(Y_k^2 \mid v_1, \ldots, v_{k-1},y) = Y_k^2(H_{k,y}) 
 \leq 6 |Y_k(H_{k,y})|.
\]
Therefore 
\[
\E(Y_k^2 \mid \mathcal{F}_{k-1})(\hat{H}) \leq \frac{6}{n-k}
\sum_{y \in \overline{V}(s_{k-1})} |Y_k(H_{k,y})| 
\leq 
\frac{6}{n-k}
\sum_{i=1}^6\sum_{y \in \overline{V}(s_{k-1})}|A_i(H_{k,y})|.
\]
Setting 
\[
B_i := \frac{1}{n-k} \sum_{y \in \overline{V}(s_{k-1})}|A_i(H_{k,y})|, 
\]
we have 
\begin{align*}
B_1 
&\leq \frac{1}{(n-k)(n-k-1)}\sum_{v \in \overline{V}(s_{k-1})}|w(v_0v)|  \stackrel{(\ref{eq:v0})}{\leq} \frac{2d(n-1)}{(n-k)(n-k-1)}, \\
B_2
&\leq \frac{2}{(n-k)(n-k-1)} \sum_{e \in \overline{V}(s_{k-1})^{(2)}}|w(e)| \leq d_{n-k}, \\
B_3
&\leq \frac{1}{(n-k)(n-k-1)}\sum_{y \in \overline{V}(s_{k-1})}|w(v_0y)| \stackrel{(\ref{eq:v0})}{\leq} \frac{2d(n-1)}{(n-k)(n-k-1)}, \\
B_4
&\leq \frac{1}{n-k} \sum_{y \in \overline{V}(s_{k-1})}\frac{1}{n-k-1} \sum_{v \in \overline{V}(v_1, \ldots v_{k-1},y)}|w(yv)| \\
&= \frac{2}{(n-k)(n-k-1)} \sum_{e \in \overline{V}(s_{k-1})^{(2)}}|w(e)| \leq d_{n-k}, \\
B_5 
&\leq \frac{1}{n-k} \sum_{v \in \overline{V}(s_{k-1})}|w(v_{k-1}v)| = \frac{w^+_{\hat{H}}(v_{k-1})}{n-k}, \\
B_6
&\leq \frac{1}{n-k} \sum_{y \in \overline{V}(s_{k-1})}|w(v_{k-1}y)| 
= \frac{w^+_{\hat{H}}(v_{k-1})}{n-k},
\end{align*}
where the bound the bound for $B_2$ and $B_4$ follows from Proposition~\ref{pr:b1}.
Using these bounds, we obtain
\begin{align*}
\E(Y_k^2 \mid \mathcal{F}_{k-1})(\hat{H})
\leq
6 \left( 
\frac{4d(n-1)}{(n-k)(n-k-1)} + 2d_{n-k} + 
\frac{2}{n-k} w^+_{\hat{H}}(v_{k-1})
 \right).
\end{align*}
Summing this expression over $k=1, \ldots, n-2$,  and using that $\sum_{r=2}^{n-1}\frac{1}{r(r-1)} \leq 1$\COMMENT{$\sum_{r=2}^{n-1}\frac{1}{r(r-1)} = \sum_{r=2}^{n-1}(\frac{1}{r-1} - \frac{1}{r}) = 1 - \frac{1}{n-1}$}, Proposition~\ref{pr:b1}, and Proposition~\ref{pr:b2}, gives
\begin{align*}
W(\hat{H}) 
= \sum_{k=1}^{n-1} \E(Y_k^2 \mid \mathcal{F}_{k-1})(\hat{H})
&= \sum_{k=1}^{n-2} \E(Y_k^2 \mid \mathcal{F}_{k-1})(\hat{H}) \\
&\leq 6 \left( 4d(n-1) + 2(2 \sqrt{d}n+1) + 2(\sqrt{d}n +2) \right) \\
&= 24d(n-1) + 36 \sqrt{d}n + 36 \leq 60(\sqrt{d}n + 1),
\end{align*}
where we have used that $\E(Y_{n-1}^2 \mid \mathcal{F}_{n-2})(\hat{H}) = 0$, which follows since, as we noted earlier, $X_{n-1}=X_{n-2}$.
This proves this lemma, since $\hat{H}$ is arbitrary.
\endproof


\section{$\{0,1\}$-weightings}
\label{sec:alg}

In this section, we provide a polynomial-time TSP-domination algorithm with\COMMENT{TSP9,11,12: reworded} large domination ratio for instances $(n,w)$ of TSP in which $w:E_n \rightarrow \{0,1\}$. We call such instances \emph{$\{0,1\}$-instances} of TSP. In fact our algorithm consists of three separate algorithms, each adapted for different types of $\{0,1\}$-instances.




We begin with the following classical result of Erd\H{o}s and Gallai \cite{ErdGal}%
\COMMENT{See also \cite{Erd} for a clear statement.}
 on the number of edges needed in a graph to guarantee a matching of a given size.

\begin{thm} \label{thm:ErGal}%
Let $n,s$ be positive integers. Then the minimum number of edges in an $n$-vertex graph that forces a matching with $s$ edges is
\[
\max \left\{ \binom{2s-1}{2}, \binom{n}{2} - \binom{n-s+1}{2} \right\} + 1.
\]
\end{thm}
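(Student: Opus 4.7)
The statement is equivalent to showing that the maximum number of edges in an $n$-vertex graph with matching number at most $s-1$ equals $\mathrm{ex}(n,s) := \max\{\binom{2s-1}{2},\binom{n}{2}-\binom{n-s+1}{2}\}$. For the lower bound side (i.e.\ producing graphs with exactly $\mathrm{ex}(n,s)$ edges and no $s$-matching), I would exhibit two extremal constructions: (i) $K_{2s-1}$ together with $n-(2s-1)$ isolated vertices, which has $\binom{2s-1}{2}$ edges and matching number $s-1$ because only $2s-1$ vertices are non-isolated; and (ii) the graph obtained from $K_n$ by designating a vertex set $S$ with $|S|=s-1$ and keeping only edges incident to $S$, which has $\binom{s-1}{2}+(s-1)(n-s+1)=\binom{n}{2}-\binom{n-s+1}{2}$ edges and matching number at most $s-1$ since every edge meets $S$. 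Note that which of the two extremal values dominates depends on the regime: roughly, the first wins when $n\leq(5s-2)/2$ and the second wins for larger $n$.

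\textbf{Upper bound via the structure of maximum matchings.} I would argue by induction on $s$, the base case $s=1$ being trivial. For the inductive step, consider an $n$-vertex graph $G$ with $e(G)>\mathrm{ex}(n,s)$; I want to show $G$ contains an $s$-matching. Suppose not: take a maximum matching $M$ of size $t\leq s-1$, set $W=V(M)$, and let $U=V(G)\setminus W$, which is independent of size $n-2t$. The key structural input is the non-existence of augmenting paths with respect to $M$. In particular, for each edge $xy\in M$: if both $N(x)\cap U$ and $N(y)\cap U$ are non-empty, then there cannot exist $u\neq u'$ in $U$ with $ux,u'y\in E(G)$ (else $uxyu'$ would augment), forcing $N(x)\cap U=N(y)\cap U=\{u\}$ to be a common singleton. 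Longer augmenting paths of the form $u-x_1-y_1-x_2-y_2-\cdots-u'$ impose analogous restrictions linking $W$-internal edges to $W$-to-$U$ edges.

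\textbf{Counting and the two regimes.} Using these structural restrictions, I would bound $e(G)=e(G[W])+e_G(W,U)$ (since $U$ is independent), with each matching edge $xy\in M$ contributing either at most $|U|$ edges to $U$ via a single endpoint of $xy$, or exactly $2$ edges to a shared vertex, and further coupling between the $W$-internal edge budget and the $W$-to-$U$ budget from longer augmenting paths. A direct case analysis (or, alternatively, a shifting/compression argument transforming $G$ into one of the two extremal graphs without decreasing the edge count) then shows $e(G)\leq\mathrm{ex}(n,s)$, contradicting the assumption.

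\textbf{Main obstacle.} The chief difficulty is that the naive bound using only length-$3$ augmenting paths yields $e(G)\leq\binom{2t}{2}+t(n-2t)=t(n-1)$, which at $t=s-1$ gives $(s-1)(n-1)$ and is strictly weaker than the Erd\H{o}s--Gallai value $(s-1)(2n-s)/2$ for $s\geq 3$. The remaining slack must be recovered by exploiting \emph{longer} augmenting paths, which forces a much finer case analysis (or, preferably, a slick symmetrization/shifting argument) to establish that the edge-count function subject to all augmenting-path restrictions interpolates between, and is maximized by, precisely the two explicit extremal constructions described above.
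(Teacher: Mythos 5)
The paper offers no proof of Theorem~\ref{thm:ErGal} at all: it is quoted as a classical result of Erd\H{o}s and Gallai with a citation to \cite{ErdGal}, so there is no internal argument to compare yours against. Judged on its own terms, your proposal correctly identifies the two extremal configurations ($K_{2s-1}$ plus isolated vertices, and the graph of all edges meeting a fixed $(s-1)$-set), correctly computes their edge counts and the crossover point $n=(5s-2)/2$, and correctly records the length-$3$ augmenting-path restriction for a maximum matching $M$ of size $t$. But the upper bound --- which is the entire content of the theorem --- is not actually established. As you yourself note in the ``main obstacle'' paragraph, the structural facts you prove only give $e(G)\leq\binom{2t}{2}+t(n-2t)=t(n-1)$, i.e.\ $(s-1)(n-1)$ at $t=s-1$, which strictly exceeds the Erd\H{o}s--Gallai value $(s-1)(2n-s)/2$ for all $s\geq 3$. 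The step that closes this gap is deferred to ``a direct case analysis'' or ``a slick symmetrization/shifting argument'' that is never carried out, and the constraints coming from longer augmenting paths are only gestured at (``impose analogous restrictions'') rather than stated and exploited. Since the constructions are routine and all of the difficulty lies in the upper bound, this is a genuine gap rather than a fixable detail.

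To complete the argument you would need to commit to one of the standard routes: (i) the Berge--Tutte/Gallai--Edmonds approach, bounding the edge count of a graph whose matching number is at most $s-1$ in terms of a deficiency-realising vertex set; (ii) induction on $n$, deleting a vertex of minimum degree and splitting into the cases $\delta(G)\leq s-1$ and $\delta(G)\geq s$ (the latter case being handled by showing directly that minimum degree $s$ together with more than $\binom{2s-1}{2}$ edges forces an $s$-matching); or (iii) the compression argument you allude to, in which case the real work is verifying that the compressions preserve the property of having no $s$-matching while not decreasing the edge count. As it stands, the proposal proves only the easy (lower-bound) direction.
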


We recast the above result into a statement about 
$\{0,1\}$-instances of TSP and the existence of optimal matchings of $K_n$ with small weight.%
\COMMENT{Conditions in proposition below have changed slightly}

\begin{prop}
\label{prop:matchings}
Let\COMMENT{{\bf n odd}: this result still holds for $n$ odd.} $(n,w)$ be a $\{0,1\}$-instance of TSP with $n \geq 1$ and $w(K_n) = d\binom{n}{2}$ for some $d \in [0,1]$ satisfying $n^{-1} \leq d \leq 1- 4n^{-1}$. Then there exists a optimal matching $M^*$ of $K_n$ such that $w(M^*) \leq f(n,d)$, where 
\[
f(n,d) := 
\begin{cases}
\frac{1}{2}dn - \frac{1}{8}dn + 1 &\text{if } d \leq \frac{9}{25};\\
\frac{1}{2}dn - \frac{1}{8}(1-d)^2n +1 &\text{if } d \geq \frac{9}{25}.
\end{cases}
\]
\end{prop}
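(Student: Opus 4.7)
My plan is to reduce the statement to a purely extremal question about matchings in the underlying $\{0,1\}$-graph and then invoke the Erd\H{o}s--Gallai theorem (Theorem~\ref{thm:ErGal}).

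Let $G_0 \subseteq K_n$ be the subgraph formed by the edges of weight $0$ and $G_1$ the subgraph of edges of weight $1$, so that $e(G_1) = w(K_n) = d\binom{n}{2}$ and $e(G_0) = (1-d)\binom{n}{2}$. For any optimal matching $M$ of $K_n$, the weight $w(M)$ equals the number of its edges that lie in $G_1$, i.e.\ $w(M) = \lfloor n/2 \rfloor - |E(M) \cap E(G_0)|$. Hence, to produce an optimal matching $M^*$ of weight at most $f(n,d)$, it suffices to find a matching $M_0$ in $G_0$ of size at least
\[
s(n,d) := \lfloor n/2 \rfloor - \lfloor f(n,d) \rfloor,
\]
and then extend $M_0$ greedily to an optimal matching of $K_n$ (which is possible since $2s(n,d) \le n$ and we can always pair up unmatched vertices, possibly leaving one vertex unmatched if $n$ is odd).

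The core of the argument is to check that the hypothesis $e(G_0) = (1-d)\binom{n}{2}$ forces a matching of size $s = s(n,d)$ in $G_0$. By Theorem~\ref{thm:ErGal} it is enough to verify
\begin{equation}\label{eq:EGbound}
(1-d)\binom{n}{2} \; > \; \max\left\{\binom{2s-1}{2},\; \binom{n}{2}-\binom{n-s+1}{2}\right\}.
\end{equation}
A quick asymptotic analysis shows that the two terms in the maximum cross near $s = 2n/5$, and that when $e(G_0)$ is plugged in, the two corresponding upper bounds for $s$ cross precisely at $d = 9/25$: for $d < 9/25$ the binding obstruction is $\binom{2s-1}{2}$, giving $s \lesssim \tfrac{1}{2}n\sqrt{1-d}$, while for $d > 9/25$ the binding obstruction is $\binom{n}{2}-\binom{n-s+1}{2}$, giving $s \lesssim n(1-\sqrt{d})$. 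The definition of $f(n,d)$ is tailored precisely so that $s(n,d)$ sits (with a little slack coming from the $+1$) below the appropriate threshold in each range. I will verify \eqref{eq:EGbound} by direct computation in two cases:

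\emph{Case $d \le 9/25$.} Here $s \le \tfrac{n(4-3d)}{8}$, and the inequality $(2s-1)(2s-2) < (1-d)n(n-1)$ reduces, after expanding and simplifying, to something implied by $9d^2 \le 8d$ (which holds for $d \le 8/9$); the second term in the max is checked via $(n-s+1)(n-s) > dn(n-1)$, which simplifies to a polynomial inequality in $d$ that holds for $d \le 4/9$, comfortably including $d \le 9/25$. \emph{Case $d \ge 9/25$.} Here $s \le \tfrac{(1-d)n}{2} + \tfrac{(1-d)^2 n}{8}$, and the second term in the max reduces, with $c=\sqrt{d}$, to $(c-1)^2(c+3) \ge 0$, which is always true; the first term reduces to $4u + u^3 \le 4$ with $u = \sqrt{1-d}$, which holds on $u \in [0,4/5]$ (the regime corresponding to $d \ge 9/25$) by monotonicity, checked at the endpoint $u=4/5$.

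The main obstacle is bookkeeping: keeping the error terms ($\lfloor \cdot \rfloor$ rounding, the $+1$ in $f(n,d)$, and the loss of one vertex when $n$ is odd) in check so that strict inequality holds in \eqref{eq:EGbound}, and confirming that the hypotheses $d \ge n^{-1}$ and $d \le 1 - 4n^{-1}$ provide exactly the slack needed for these finite-$n$ corrections to be absorbed. In particular the lower bound on $d$ prevents the trivial regime where $G_1$ is empty, and the upper bound on $d$ ensures $s(n,d) \ge 1$, so the Erd\H{o}s--Gallai application is non-vacuous. Once \eqref{eq:EGbound} is verified in both cases, $G_0$ contains a matching of size $s(n,d)$, which extends (arbitrarily, using edges of $K_n$ of weight at most $1$) to an optimal matching $M^*$ with $w(M^*) \le \lfloor n/2 \rfloor - s(n,d) \le f(n,d)$, completing the proof.
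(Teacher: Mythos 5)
Your proposal is correct and follows essentially the same route as the paper: both apply the Erd\H{o}s--Gallai theorem to the graph of weight-zero edges to extract a large matching there and then extend it arbitrarily to an optimal matching of $K_n$, with the crossover at $d=9/25$ arising from the same two obstructions. The only difference is organisational — the paper first derives the guaranteed matching size $\lfloor\min\{\tfrac12\sqrt{1-d}\,n,(1-\sqrt d)n\}\rfloor$ and then compares with $f(n,d)$ via the inequalities $\sqrt{1-d}\ge 1-\tfrac34 d$ and $\sqrt d\le\tfrac12+\tfrac12 d-\tfrac18(1-d)^2$, whereas you substitute the target size $s(n,d)$ directly into the Erd\H{o}s--Gallai conditions — and your leading-order verifications (the reductions to $9d^2\le 8d$, to $d\le 4/9$, to $(c-1)^2(c+3)\ge 0$, and to $4u+u^3\le 4$) are all accurate.
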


We remark that for a random perfect matching $M$ of $K_n$ (where $n$ is even), we have $\E(w(M)) = \frac{1}{2}dn$.%
\COMMENT{{\bf n odd:} slightly smaller if $n$ odd}
 Thus it is instructive to compare the expressions in the statement of Proposition~\ref{prop:matchings} with $\frac{1}{2}dn$. We note in particular that when $d$ is bounded away from $0$ and $1$, we can find a perfect matching whose weight is significantly smaller than that of an average perfect matching, but as $d$ approaches $0$ or $1$, all perfect matchings tend to have roughly the same weight.

\begin{proof}
Let $G$ be the $n$-vertex subgraph of $K_n$ whose edges are the edges of $K_n$ of weight zero; thus $e(G) = (1-d)\binom{n}{2}$. 

If $s = \frac{1}{2} \sqrt{1-d}n$ then it is easy to check that $(1-d)\binom{n}{2} \geq \binom{2s-1}{2} + 1$,%
\COMMENT
{
$\binom{2s-1}{2}+1 = \binom{\sqrt{1-d}n-1}{2}+1 = (1-d)\binom{n}{2} + \frac{1}{2}(1-d)n - \frac{3}{2}\sqrt{1-d}n + 2 \leq (1-d)\binom{n}{2} - \sqrt{1-d}n + 2 \leq (1-d)\binom{n}{2}$; the last inequality holds since $d \leq 1 - 4n^{-1}$ and $n \geq 1$.
}
 and if $s=(1-\sqrt{d})n$ then it is easy to check that $(1-d)\binom{n}{2} \geq \binom{n}{2} - \binom{n-s+1}{2} + 1$.%
\COMMENT
{
$\binom{n}{2} - \binom{n-s+1}{2}+1 = \binom{n}{2} - \binom{\sqrt{d}n+1}{2}+1 = (1-d)\binom{n}{2} - \frac{1}{2}(\sqrt{d}+d)n +1 \leq (1-d)\binom{n}{2}$ since $\sqrt{d}n \geq dn \geq 1$.
}
 Thus Theorem~\ref{thm:ErGal} implies that $G$ has a matching of size at least
\[
g(n,d) := \left\lfloor \min \left\{ \frac{1}{2} \sqrt{1-d}n, (1-\sqrt{d})n \right\} \right\rfloor. 
\] 
Note that if $M$ is any matching of $G$ with $s$ edges, then we can extend $M$ arbitrarily to an optimal matching $M'$ of $K_n$ such that $w(M') \leq (n/2) - s$. Thus there is an optimal matching $M^*$ of $K_n$ with 
\[
w(M^*) \leq n/2 - g(n,d) \leq 1 + \max \left\{ \frac{1}{2}(1-\sqrt{1-d})n, \left( \sqrt{d} - \frac{1}{2} \right)n \right\}.
\] 
It is easy to compute that the maximum above is given by its first term if $d \in [0,9/25]$ and the second when $d \in [9/25,1]$.%
\COMMENT
{
Assume $d \in [0,1]$. Then
$\frac{1}{2}(1-\sqrt{1-d}) \geq \left( \sqrt{d} - \frac{1}{2} \right)$
iff
$\sqrt{d} + \frac{1}{2}\sqrt{1-d} \leq 1$
iff
$d + \frac{1}{4}(1-d) + \sqrt{d(1-d)} \leq 1$
iff
$\sqrt{d(1-d)} \leq \frac{3}{4}(1-d)$
iff 
$d \leq \frac{9}{16}(1-d)$ or $d=1$
iff $d \leq \frac{9}{25}$ or $d=1$ (but at $d=1$ the two terms are equal).
}
 Now using that $\sqrt{1-d} \geq 1- \frac{3}{4}d$ for $d \in [0,9/25]$%
\COMMENT
{
For $d \in [0,9/25]$, we have $\sqrt{1-d} \geq 1- \frac{3}{4}d$ iff $1-d \geq 1 - \frac{3}{2}d + \frac{9}{16}d^2$ iff $\frac{1}{2}d \geq \frac{9}{16}d^2$ iff $d \leq \frac{8}{9}$, which is true.
}
 and that $\sqrt{d} \leq \frac{1}{2} + \frac{1}{2}d - \frac{1}{8}(1-d)^2$ for $d \in [0,1]$,%
\COMMENT
{
Substituting $d=1-x$, we must show for all $x \in [0,1]$ that $\sqrt{1-x} \leq 1 - \frac{1}{2}x - \frac{1}{8}x^2$ iff $1-x \leq 1 - x + \frac{1}{8}x^3 + \frac{1}{64}x^4$, which holds. Note: inequality derived from Taylor expansion.
}
 the proposition easily follows. (The latter inequality can be checked by substituting $1-x$ for $d$, squaring both sides, and rearranging.\COMMENT{TSP10: sentence changed})
\end{proof}

Next\COMMENT{TSP13: paragraph reworded since theorem it refers to has now been moved to the introduction} we combine the various results we have gathered so far to prove Theorem~\ref{thm:unused} by showing that Algorithm~A (see end of Section~\ref{sec:algorithm}) 
has a domination ratio exponentially close to $1$ for $\{0,1\}$-instances of TSP with fixed density, i.e.\ density that is independent of $n$. In fact, we will not make use of Theorem~\ref{thm:unused}
in order to prove Theorem~\ref{thm:main}:%
   \COMMENT{TST14: slightly reformulated}
instead we will make use of a similar and slightly more technical result in which $d$ may depend on $n$.


\removelastskip\penalty55\medskip\noindent{\bf Proof of Theorem~\ref{thm:unused}}.
Given an instance $(n,w)$ as in the statement of the theorem, 
 Proposition~\ref{prop:matchings} implies that there exists an optimal matching $M^*$ of $K_n$ satisfying $w(M^*) \leq \frac{1}{2}dn - \frac{1}{8}\eta^2n + O(1)$. Thus Algorithm~A, which has running time $O(n^5)$ outputs a Hamilton cycle $H^*$ satisfying
\begin{align*}
w(H^*) 
&\leq \left( 1 - \frac{1}{n-2} \right) \left(\frac{1}{2}dn - \frac{1}{8}\eta^2n + O(1) \right) + \frac{1}{n-2}d\binom{n}{2} + \rho(n) \\
&= dn - \frac{1}{8} \eta^2 n + O(1).
\end{align*}

Set $t:= dn - w(H^*) \geq \frac{1}{8} \eta^2 n + O(1)$.
Let $X_0, X_1, \ldots, X_{n-1}$ be the Hamilton martingale for $(n,w)$, so that $X_{n-1} = w(H)$ where $H$ is a uniformly random Hamilton cycle from $\mathcal{\tilde{H}}_n$ and $X_0 = \E(w(H))=dn$. From Theorem~\ref{thm:Freedman} and Lemma~\ref{lem:martbounds}, we have
\begin{align*}
\Prob(w(H) \leq w(H^*)) \leq \Prob(X_{n-1} \leq X_0 - t) 
&\leq 2\exp \left(- \frac{t^2/2}{ \sigma^2 + Rt/3} \right) \\
&\leq O(\exp(-\eta^4n/10^4)),
\end{align*}
where $R=6$ and $\sigma^2 = 60(\sqrt{d}n + 1) \leq 60n + O(1)$.%
\COMMENT
{
We have $\frac{t^2/2}{ \sigma^2 + Rt/3} \geq \frac{\eta^4n^2/128}{60n + n} + O(1) \geq \eta^4n/10^4 + O(1)$.
}
\endproof
We remark that, although we used Theorem~\ref{thm:Freedman} (the variant of Freedman's inequality) in the proof above, Azuma's inequality, which is much simpler to apply, gives the same bounds.%
\COMMENT{TSP10: For Azuma's inequality, we know all martingale differences are bounded by $R=6$, so the sum of the martingale differences is $6n$ or $6(n-1)$ if we ignore the last step. Thus Azuma gives us
\begin{align*}
\Prob(w(H) \leq w(H^*)) \leq \Prob(X_{n-1} \leq X_0 - t) 
&\leq 2\exp \left(- \frac{t^2/2}{ Rn } \right) \\
&\leq O(\exp(-\eta^4n/10^4)).
\end{align*}
}
However, Azuma's inequality is not strong enough to derive our main result, and in particular, it is not strong enough to derive Theorem~\ref{thm:ratio}.

Our next goal is to give a result similar to Theorem~\ref{thm:unused} in which the density of our $\{0,1\}$-instance of TSP can depend on $n$. We begin with the following definition.

\begin{defn}
\label{def:reginstance}
For $d \in [0 , 1]$, $\varepsilon>0$\COMMENT{Actually, $\varepsilon$ needn't necessarily be small; it could even be some small power of $n$ in the next theorem} and an integer $n>\max\{6, \exp(\varepsilon^{-1})\}$,  we call a $\{0,1\}$-instance $(n,w)$ of TSP an \emph{$(n,d,\varepsilon)$-regular instance} if
\begin{enumerate}
\item[(i)] $w(K_n) = d \binom{n}{2}$;
\item[(ii)] there exists an optimal matching $M^*$ of $K_n$ such that
either
$w(M^*) \leq \frac{1}{2}dn - m_{\varepsilon}(n,d)$ or $w(M^*) \leq \frac{1}{2}dn - m_{\varepsilon}(n,1-d)$, where
\[
 m_{\varepsilon}(n,d):= 40 (\varepsilon + \varepsilon^{1/2}) \log n + 40 \varepsilon^{1/2}d^{1/4} \sqrt{n \log n}.
\]
\end{enumerate}
Note that $w(K_n) = w[K_n]$ for $\{0,1\}$-instances of TSP.
\end{defn}

\begin{thm} \label{thm:ratio}
If\COMMENT{{\bf n odd}: result holds for $n$ odd if we assume $n$ large enough as a function of $\varepsilon$} $(n,w)$ is an $(n,d,\varepsilon)$-regular instance of TSP, then Algorithm~A has domination ratio at least $1-2n^{- \varepsilon}$ for $(n,w)$.
\end{thm}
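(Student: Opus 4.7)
The plan is to combine the upper bound on $w(H^*)$ implied by Lemma~\ref{lem:derand} with Freedman's inequality (Theorem~\ref{thm:Freedman}) applied to the Hamilton martingale of Section~\ref{sec:mar}, using Definition~\ref{def:reginstance}(ii) to guarantee that the resulting deviation $t := dn - w(H^*)$ is large enough that the martingale tail comes in below $2n^{-\varepsilon}$.

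First I would bound $w(H^*)$. By Definition~\ref{def:reginstance}(ii) some optimal matching of $K_n$ has weight at most $\frac{1}{2}dn - m_\varepsilon(n, d^\star)$, where $d^\star \in \{d, 1-d\}$ is whichever of the two alternatives holds. Since Algorithm~A's Step~1 finds a minimum-weight optimal matching $M^*$, we have $w(M^*) \leq \frac{1}{2}dn - m_\varepsilon(n, d^\star)$. Substituting this into Lemma~\ref{lem:derand} and using $w(K_n) = d\binom{n}{2}$, the two $dn$-terms combine exactly (as in the proof of Theorem~\ref{thm:unused}) to yield
\[
w(H^*) \leq dn - \left(1 - \frac{1}{n-2}\right) m_\varepsilon(n, d^\star) + \rho(n),
\]
so $t \geq m_\varepsilon(n, d^\star) - O(1)$; the hypothesis $n > \exp(1/\varepsilon)$ will ensure that $m_\varepsilon(n, d^\star) \gg 1$ and this constant loss is absorbed by the slack below.

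Next I would turn the question into a martingale tail bound. If $d^\star = d$, take the Hamilton martingale $X_0, \ldots, X_{n-1}$ for $(n,w)$, so $X_0 = dn$, $X_{n-1} = w(H)$, and $\{w(H) < w(H^*)\} \subseteq \{|X_{n-1} - X_0| \geq t\}$. If $d^\star = 1-d$, instead take the Hamilton martingale for the complementary weighting $w' := 1 - w$; since $w'[K_n] = (1-d)\binom{n}{2}$, Lemma~\ref{lem:martbounds} applied to $w'$ gives the variance bound with $\sqrt{1-d}$ in place of $\sqrt{d}$, and the event $\{w(H) < w(H^*)\}$ is again contained in a $t$-deviation event for this martingale (because $w'(H) = n - w(H)$). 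In both cases, Lemma~\ref{lem:martbounds} supplies $R = 6$ and $W \leq \sigma^2 := 60(\sqrt{d^\star}\,n + 1)$, so Theorem~\ref{thm:Freedman} gives
\[
\Prob(w(H) < w(H^*)) \leq 2 \exp\!\left(-\frac{t^2/2}{\sigma^2 + Rt/3}\right).
\]

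The rest is quantitative verification that the exponent exceeds $\varepsilon \log n$, which I expect to be the main obstacle. It suffices to show separately that $t \geq 8\varepsilon \log n$ (dominating the $Rt/3$ term) and $t \geq 2\sqrt{\varepsilon \log n \cdot \sigma^2}$ (dominating the $\sigma^2$ term). Using $\sqrt{a+b} \leq \sqrt{a} + \sqrt{b}$ and $2\sqrt{60} < 16$, the second reduces to $t \geq 16\varepsilon^{1/2}(d^\star)^{1/4}\sqrt{n \log n} + 16\sqrt{\varepsilon \log n}$. Matching these three requirements term-by-term against
\[
m_\varepsilon(n,d^\star) = 40\varepsilon \log n + 40\varepsilon^{1/2} \log n + 40\varepsilon^{1/2}(d^\star)^{1/4}\sqrt{n \log n},
\]
and using $\log n \geq 1$ to conclude $\varepsilon^{1/2} \log n \geq \sqrt{\varepsilon \log n}$, each needed inequality holds with a factor $40/16 > 2$ of slack, which easily absorbs both the $O(1)$ loss from Step~1 and the factor $1 - 1/(n-2)$. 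The constants in $m_\varepsilon$ were calibrated precisely to beat these two Bernstein thresholds, and the two-sided form of condition (ii) is needed exactly so that one can switch to the complementary weighting when $d$ is close to $1$, replacing the useless $\sqrt{d} \approx 1$ by the favourable $\sqrt{1-d}$ in the variance estimate.
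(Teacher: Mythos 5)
Your proposal is correct and follows essentially the same route as the paper's proof: bound $w(M^*)$ via Definition~\ref{def:reginstance}(ii), feed it through Lemma~\ref{lem:derand} to get $t\approx m_\varepsilon$, apply Theorem~\ref{thm:Freedman} to the Hamilton martingale for $w$ (or for $1-w$ when the second alternative of (ii) holds, exactly as the paper does), and verify the exponent exceeds $\varepsilon\log n$ by splitting into the $\sigma^2$-dominated and $Rt/3$-dominated regimes. The only difference is bookkeeping: the paper simply sets $t:=m/2$ and checks $t^2\geq 400\,\varepsilon\log n\,(\sqrt{d}\,n+1)$, whereas you keep $t\approx m_\varepsilon$ and match the three terms of $m_\varepsilon$ against the two Bernstein thresholds term by term — the same calculation.
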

\begin{proof}
Suppose $(n,w)$ is an $(n,d,\varepsilon)$-regular instance of TSP and set $m := m_{\varepsilon}(n,d)$.
Assume first that there exists an optimal matching $M^*$ of $K_n$ with $w(M^*) \leq dn/2 - m$. Then Algorithm~A outputs a Hamilton cycle $H^*$ with
\begin{align*}
w(H^*) &\leq \left(1 - \frac{1}{n-2} \right)(dn/2 - m) + \frac{1}{n-2}d\binom{n}{2} + \rho(n) \\
&= dn - \left( 1 - \frac{1}{n-2} \right)m  + \rho(n)\\
& \leq dn - m/2 ,
\end{align*} 
where the last inequality follows because $n > 6$ and $m>4$ (which follows because $n>\exp(\varepsilon^{-1})$). Set $t := m/2$.%

Let $X_0, X_1, \ldots, X_{n-1}$ be the Hamilton martingale for $(n,w)$, so that $X_{n-1} = w(H)$ where $H$ is a uniformly random Hamilton cycle from $\mathcal{\tilde{H}}_n$ and $X_0 = \E(w(H))=dn$. From Theorem~\ref{thm:Freedman} and Lemma~\ref{lem:martbounds}, we have
\begin{align*}
\Prob(w(H) \leq w(H^*)) \leq \Prob(X_{n-1} \leq X_0 - t) 
\leq 2\exp \left(- \frac{t^2/2}{ \sigma^2 + Rt/3} \right),
\end{align*}
where $R=6$ and $\sigma^2 = 60(\sqrt{d}n + 1)$. We have that
\[
\frac{t^2/2}{ \sigma^2 + Rt/3} \geq 
\min \left \{ \frac{t^2/2}{2 \sigma^2}, \frac{t^2/2}{2 Rt/3} \right\}
=
\min \left\{ \frac{t^2}{4 \sigma^2}, \frac{t}{8} \right\}.
\]
Noting that 
\[
t^2 = \frac{1}{4}(40 (\varepsilon + \varepsilon^{1/2}) \log n + 40 \varepsilon^{1/2}d^{1/4} \sqrt{n \log n})^2 \geq 400 \varepsilon \log n(\sqrt{d}n + 1),
\] %
\COMMENT
{
$t:=t_1 + t_2 := 20(\varepsilon + \varepsilon^{1/2}) \log n + 20\varepsilon^{1/2}d^{1/4} \sqrt{n \log n}$. We have $t^2 \geq t_1^2 + t_2^2$ and $t_1^2 \geq 400 \varepsilon \log n$ and $t_2^2 \geq 400 \varepsilon \log n \sqrt{d} n$.
}%
we have $t^2/4\sigma^2 \geq \varepsilon \log n$. Also $t/8 \geq \varepsilon \log n$, and so we have
\[
\Prob(w(H) \leq w(H^*)) 
\leq 2\exp(- \varepsilon \log n) = 2n^{-\varepsilon},
\]
as required.

Now set $m := m_{\varepsilon}(n,1-d)$ and assume there is an optimal matching  $M^*$ of $K_n$ with $w(M^*) \leq dn/2 - m$. As before Algorithm~A outputs a Hamilton cycle $H^*$ with $w(H^*) \leq dn - m/2$. Again set $t := m/2$.

This time let $X_0, X_1, \ldots, X_{n-1}$ be the Hamilton martingale for $\bar{w} := 1-w$.
From Theorem~\ref{thm:Freedman} and Lemma~\ref{lem:martbounds}, we have
\begin{align*}
\Prob(w(H) \leq w(H^*)) 
= \Prob( \bar{w}(H) \geq \bar{w}(H^*) )
&\leq \Prob(X_{n-1} \geq X_0 + t) \\
&\leq 2\exp \left(- \frac{t^2/2}{ \sigma^2 + Rt/3} \right),
\end{align*}
where $R=6$ and $\sigma^2 = 60(\sqrt{1-d}n + 1)$ (since $\bar{w}(K_n) = (1-d)\binom{n}{2}$). 
Following the same argument as before with $d$ replaced by $1-d$, we obtain
\[
\Prob(w(H) \leq w(H^*)) \leq 2n^{-\varepsilon},
\]
as required.
\COMMENT
{
We have that
\[
\frac{t^2/2}{ \sigma^2 + Rt/3} \geq 
\min \left \{ \frac{t^2/2}{2 \sigma^2}, \frac{t^2/2}{2 Rt/3} \right \}
=
\min \left \{ \frac{t^2}{4 \sigma^2}, \frac{t}{8} \right \}.
\]
Noting that $t^2 = \frac{1}{4}(40 (\varepsilon + \varepsilon^{1/2}) \log n + 40 \varepsilon^{1/2}(1-d)^{1/4} \sqrt{n \log n})^2 \geq 400 \varepsilon \log n(\sqrt{1-d}n + 1)$, we have $t^2/4\sigma^2 \geq \varepsilon \log n$. Also $t/8 \geq \varepsilon \log n$, and so we have
\[
\Prob(w(H) \leq w(H^*)) \leq 2\exp \left( -\frac{t^2/2}{ \sigma^2 + Rt/3} \right) \leq 2\exp(- \varepsilon \log n) = 2n^{-\varepsilon},
\]
}
\end{proof}

\begin{cor}
\label{cor:reg}
For every $\varepsilon>0$ there exists $n_0 \in \mathbb{N}$ such that for all $n > n_0$, the following holds.\COMMENT{TSP10: statement changes so that $n_0$ no longer explicitly given.}
Define $f_{\varepsilon}(n) := 10^4(1+ 2\varepsilon)n^{-2/3}\log n$ and $g_{\varepsilon}(n) := 10^4 (1+ \varepsilon)n^{-2/7}\log n$.
If  
\[
f_{\varepsilon}(n) \leq d \leq 1 - g_{\varepsilon}(n),
\] %
and $(n,w)$ is a $\{0,1\}$-instance of TSP with $w(K_n) = d \binom{n}{2}$ 
then\COMMENT{I have gone for cleaner rather than sharper bounds. It may be worth sharpening the dependence on $\varepsilon$, remembering that $\varepsilon$ does not have to be small.} $(n,w)$ is an $(n,d,\varepsilon)$-regular instance. In particular
 Algorithm~A has domination ratio at least $1-2n^{- \varepsilon}$ for $(n,w)$.
\end{cor}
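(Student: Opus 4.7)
The plan is to verify the two conditions in Definition~\ref{def:reginstance} for $(n,w)$ and then invoke Theorem~\ref{thm:ratio} directly. Condition~(i) holds by hypothesis, so the content of the corollary lies entirely in producing an optimal matching $M^*$ witnessing condition~(ii), provided $n_0$ is chosen large enough to absorb all lower-order terms.

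First I would observe that for $n \geq n_0(\varepsilon)$, the bounds $f_\varepsilon(n) \leq d \leq 1 - g_\varepsilon(n)$ comfortably imply the hypotheses $n^{-1} \leq d \leq 1 - 4n^{-1}$ of Proposition~\ref{prop:matchings}. That proposition then supplies an optimal matching $M^*$ with $w(M^*) \leq f(n,d)$; I would then split into two cases corresponding to the piecewise definition of $f(n,d)$. If $d \leq 9/25$, the saving below $\frac{1}{2}dn$ is $\frac{1}{8}dn - 1$, and I must check that this is at least $m_\varepsilon(n,d)$. The additive constant and the $40(\varepsilon + \varepsilon^{1/2})\log n$ term in $m_\varepsilon(n,d)$ are absorbed by, say, $\frac{1}{16}dn$ once $n \geq n_0(\varepsilon)$ (using $dn \geq 10^4(1+2\varepsilon)n^{1/3}\log n$ from the hypothesis), so the real content is $\frac{1}{16}dn \geq 40\varepsilon^{1/2}d^{1/4}\sqrt{n\log n}$. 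Rearranging gives $d^{3/2} \geq C\varepsilon\log n/n$ for an explicit constant $C$ of order a few hundred thousand, which is implied by $d \geq f_\varepsilon(n) = 10^4(1+2\varepsilon)n^{-2/3}\log n$ with plenty of room (since $\log n \geq \log^{2/3}n$ and the constant $10^4(1+2\varepsilon)^{3/2}$ dominates $C$).

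In the symmetric case $d \geq 9/25$, the saving is $\frac{1}{8}(1-d)^2 n - 1$ and I would check it against $m_\varepsilon(n,1-d)$. Writing $\delta := 1-d$ and discarding lower-order terms as above, the decisive inequality becomes $\frac{1}{16}\delta^2 n \geq 40\varepsilon^{1/2}\delta^{1/4}\sqrt{n\log n}$, i.e.\ $\delta^{7/4} \geq C'\varepsilon^{1/2}\sqrt{\log n/n}$, and this follows from $\delta \geq g_\varepsilon(n) = 10^4(1+\varepsilon)n^{-2/7}\log n$ again with room to spare. Condition~(ii) therefore holds in both cases, so $(n,w)$ is an $(n,d,\varepsilon)$-regular instance, and Theorem~\ref{thm:ratio} gives domination ratio at least $1 - 2n^{-\varepsilon}$.

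There is no conceptual obstacle: the proof is entirely an exercise in bookkeeping to choose $n_0$ large enough that the various asymptotic gaps (between $\log n$ and $\log^{2/3}n$ or $\log^{2/7}n$, and between the dominant $d^{1/4}\sqrt{n\log n}$ term of $m_\varepsilon$ and its lower-order companions) are swallowed by the deliberately generous constants $10^4(1+2\varepsilon)$ and $10^4(1+\varepsilon)$ in the definitions of $f_\varepsilon$ and $g_\varepsilon$. The only mildly delicate point is ensuring that $n_0 \geq \max\{6, \exp(\varepsilon^{-1})\}$ so that Definition~\ref{def:reginstance} applies at all, and this is trivially compatible with the other requirements on $n_0$.
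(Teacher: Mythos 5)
Your proposal is correct and follows essentially the same route as the paper: check the hypotheses of Proposition~\ref{prop:matchings}, split at $d=9/25$ according to the piecewise form of $f(n,d)$, verify the relevant alternative of Definition~\ref{def:reginstance}(ii) by comparing the saving $\tfrac18 dn$ (resp.\ $\tfrac18(1-d)^2n$) against $m_\varepsilon(n,d)$ (resp.\ $m_\varepsilon(n,1-d)$), and then invoke Theorem~\ref{thm:ratio}. The arithmetic reductions you perform (e.g.\ $d^{3/2}\geq C\varepsilon\log n/n$ and $\delta^{7/4}\geq C'\varepsilon^{1/2}\sqrt{\log n/n}$) match the paper's own calculations up to trivial rearrangement.
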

\begin{proof}
First assume $f_{\varepsilon}(n) \leq d \leq \frac{9}{25}$. From Definition~\ref{def:reginstance}, it is sufficient to exhibit an optimal matching $M^*$ of $K_n$ for which $w(M^*) \leq \frac{1}{2}dn -  m_{\varepsilon}(n,d)$.
By Proposition~\ref{prop:matchings} (note that $d$ satisfies the condition of Proposition~\ref{prop:matchings}),
there exists an optimal matching $M^*$  of $K_n$ such that 
$w(M^*) \leq \frac{1}{2}dn - \frac{1}{16}dn - \frac{1}{16}dn +1$.
Thus we see that $w(M^*) \leq \frac{1}{2}dn - m_{\varepsilon}(n,d)$ if 
\[
\frac{1}{16}dn \geq 40 (\varepsilon + \varepsilon^{\frac{1}{2}}) \log n +1
\:\:\: \text{and} \:\:\:
\frac{1}{16}dn \geq 40\varepsilon^{\frac{1}{2}}d^{\frac{1}{4}} \sqrt{n \log n} .
\]
One can check that both inequalities hold if $d \geq  f_{\varepsilon}(n)$.%
\COMMENT
{ 
We have $\frac{1}{16}dn \geq 40 \varepsilon^{\frac{1}{2}}d^{\frac{1}{4}}\sqrt{n \log n}$ iff $d \geq 640^{\frac{4}{3}} \varepsilon^{\frac{2}{3}} (\log n)^{\frac{2}{3}}n^{-\frac{2}{3}}$, which holds since $d \geq f_{\varepsilon}(n) := 10^4(1+ 2\varepsilon)n^{-2/3}\log n$. 
We also have $\frac{1}{16}dn \geq 40 (\varepsilon + \varepsilon^{\frac{1}{2}}) \log n +1$ iff $d \geq 640 (\varepsilon + \varepsilon^{\frac{1}{2}}) n^{-1} \log n + 16n^{-1}$, which holds by first noting $1+ 2 \varepsilon \geq \varepsilon + \varepsilon^{\frac{1}{2}}$ and using $d \geq f_{\varepsilon}(n) := 10^4(1+ 2\varepsilon)n^{-2/3}\log n$.
} 

Now assume that $\frac{9}{25} \leq d \leq 1 - g_{\varepsilon}(n)$.
From Definition~\ref{def:reginstance}, it is sufficient to exhibit an optimal matching $M^*$ of $K_n$ for which $w(M^*) \leq \frac{1}{2}dn - m_{\varepsilon}(n,1-d)$.
Set $\overline{d} := 1-d$ and note $g_{\varepsilon}(n) \leq \overline{d} \leq \frac{16}{25}$.
By Proposition~\ref{prop:matchings},
there exists an optimal matching $M^*$  of $K_n$ such that $w(M^*) \leq \frac{1}{2}dn - \frac{1}{8}\overline{d}^2 n + 1$.
Thus we see that $w(M^*) \leq \frac{1}{2}dn - m_{\varepsilon}(n,\overline{d})$
if 
\[
\frac{1}{16}\overline{d}^2 n \geq 40 (\varepsilon + \varepsilon^{\frac{1}{2}}) \log n + 1
\:\:\: \text{and} \:\:\:
\frac{1}{16}\overline{d}^2 n \geq 40\varepsilon^{\frac{1}{2}}\overline{d}^{\frac{1}{4}} \sqrt{n \log n}.
\]
One can check that both inequalities hold if $\overline{d} \geq g_{\varepsilon}(n)$.
\COMMENT
{
$\frac{1}{16}\overline{d}^2 n \geq 40\varepsilon^{\frac{1}{2}}\overline{d}^{\frac{1}{4}} \sqrt{n \log n}$ iff $\overline{d} \geq 640^{\frac{4}{7}}\varepsilon^{\frac{2}{7}} n^{-\frac{2}{7}} (\log n)^{\frac{2}{7}}$, which holds since $\overline{d} \geq g_{\varepsilon}(n) := 10^4 (1+ \varepsilon)n^{-2/7}\log n$.
We have
$40 (\varepsilon + \varepsilon^{\frac{1}{2}}) \log n + 1 \leq
 40 (1 + 2\varepsilon) \log n + 1 \leq 40 (2 + 2\varepsilon)
 \log n \leq  \frac{1}{16}\overline{d}^2 n$, where the first
 inequality holds because $1 + 2\varepsilon \geq \varepsilon + \varepsilon^{\frac{1}{2}}$, and the second inequality holds iff $\overline{d} \geq 640^{\frac{1}{2}}(2+2\varepsilon)^{\frac{1}{2}}(\log n)^{\frac{1}{2}}n^{-\frac{1}{2}}$, which holds since $\overline{d} \geq g_{\varepsilon}(n) := 10^4 (1+ \varepsilon)n^{-2/7}\log n$.
}  
\end{proof}

We have seen in the previous corollary and theorem that Algorithm~A has a large domination ratio for $\{0,1\}$-instances of TSP when $d$ is bounded away from $0$ and $1$, or when there exists an optimal matching of $K_n$ whose weight is significantly smaller than the average weight of an optimal matching. Theorem~\ref{thm:dense} and Theorem~\ref{thm:sparse} give polynomial-time TSP-domination algorithms for all remaining $\{0,1\}$-instances. Before we can prove these, we require some preliminary results. 

Our first lemma is a structural stability result.
Suppose $G \subseteq K_n$ is a graph with $d\binom{n}{2}$ edges, and let $w$ be a weighting of $K_n$ such that $w(e)=1$ if $e \in E(G)$ and $w(e)=0$ otherwise. Then for a random optimal matching $M$ of $K_n$, we have $\E(w(M)) = dn/2$ if $n$ is even and $\E(w(M)) = d(n-1)/2$ if $n$ is odd; this shows that $G$ has a matching of size at least $d(n-1)/2$. We cannot improve much on this if $d$ is close to zero: consider the graph $H$ with a small set $A \subseteq V(H)$ such that $E(H)$ consists of all edges incident to $A$.
The next lemma says that any graph whose largest matching is only slightly larger than $dn/2$ must be similar to the graph $H$ described above. 


\begin{lem}
\label{lem:vxcover}
Fix an integer $s \geq 2$ and let $G$ be an $n$-vertex graph with $e(G) = d \binom{n}{2}$ for some $d \in (0, (4s)^{-1}]$. If the largest matching of $G$ has at most $\frac{1}{2}dn + r$ edges for some positive integer $r<n/(8s)$, then $G$ has a vertex cover $D$ with
\[
 |D| \leq \frac{1}{2}d(n+2) + sd^2(n+1) +(7s+2)r. 
\]
Let 
\[
 S:= \{v \in D: \deg_G(v) \leq s|D| \}. 
\]
Then $|S| \leq 2sd^2(n+1) + (14s+4)r + 3d$.
Furthermore, $D$ and $S$ can be found in $O(n^4)$-time.%
\COMMENT{Is it likely that something similar to this may have been proved somewhere?}
\COMMENT{TSP10: numbers have changes in the statement and proof of this lemma, with consequences elsewhere.}
\end{lem}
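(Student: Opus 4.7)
The plan is to build $D$ from the Gallai--Edmonds decomposition of $G$ together with a stability version of the Erd\H{o}s--Gallai bound (Theorem~\ref{thm:ErGal}), and then bound $|S|$ by a direct double-counting argument.

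I first compute a maximum matching $M$ of $G$ in $O(n^3)$ time via Edmonds' blossom algorithm, together with the Gallai--Edmonds decomposition $(A,C,D^{\ast})$ of $G$. Writing $a=|A|$, $c=|C|$ and $q$ for the number of components of $G[D^{\ast}]$ (all factor-critical and odd), one has the identity $\nu(G)=\tfrac12(n+a-q)$. I then set
\[
 D \;:=\; A \;\cup\; \bigl(\text{min vertex cover of } G[C]\bigr) \;\cup\; \bigcup_{K}\bigl(\text{min vertex cover of } K\bigr),
\]
where $K$ ranges over the non-singleton components of $G[D^{\ast}]$. Singleton components of $G[D^{\ast}]$ have all their neighbours in $A$, so they are automatically covered; using $\tau(G[C])\leq c$ and $\tau(K)\leq |K|-1$ for factor-critical $K$ yields $|D|\leq a+c+(|D^{\ast}|-q)=n-q=2\nu(G)-a$.

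The crux is a lower bound $a\geq \tfrac{dn}{2}-O(sd^2n+sr)$. Given the Gallai--Edmonds structure, maximising edges on each part gives
\[
 e(G) \;\leq\; an-\binom{a+1}{2}+\binom{c}{2}+\binom{2\nu-2a-c+1}{2},
\]
and a short calculation shows the right-hand side is maximised in $c$ at $c=0$, so $e(G)\leq f_0(a):=an-\binom{a+1}{2}+\binom{2\nu-2a+1}{2}$. Since $d\leq (4s)^{-1}$ forces $\nu\ll n/4$, the function $f_0$ is strictly increasing on $[0,\nu]$, so $d\binom{n}{2}\leq f_0(a)$ together with $\nu\leq \tfrac{dn}{2}+r$ forces $a$ to be close to $\nu$: solving $f_0(a)=d\binom{n}{2}$ perturbatively around $a=\nu$ produces the desired lower bound. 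Substitution into $|D|\leq 2\nu-a$ yields the claimed bound on $|D|$. The other Erd\H{o}s--Gallai extremal configuration (a single $K_{2\nu+1}$) is ruled out by $d\leq (4s)^{-1}$, since the ``book'' configuration strictly dominates the edge count in our regime by Theorem~\ref{thm:ErGal}.

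For $|S|$ I use double counting: since $D$ is a vertex cover, every edge has at least one endpoint in $D$, hence
\[
 d\binom{n}{2} \;=\; e(G) \;\leq\; \sum_{v\in D}\deg_G(v) \;\leq\; (n-1)(|D|-|S|) + s|D|\cdot |S|.
\]
Rearranging, assuming $s|D|<n-1$ (which follows from the bound on $|D|$ together with $d\leq (4s)^{-1}$), gives
\[
 |S| \;\leq\; \frac{(n-1)\bigl(|D|-\tfrac{dn}{2}\bigr)}{n-1-s|D|},
\]
and plugging in the established upper bound on $|D|$ yields $|S|\leq 2sd^2(n+1)+(14s+4)r+3d$ after collecting constants.

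The main obstacle is the stability calculation underpinning the lower bound on $a$, namely making the constants explicit enough to reach the exact numerical form in the statement while handling the possible presence of a non-trivial $C$ and of non-singleton factor-critical components simultaneously. The remaining ingredients are standard: matching and Gallai--Edmonds decomposition in $O(n^3)$, computation of minimum vertex covers on $G[C]$ and on each non-singleton factor-critical component (via further matching computations in the corresponding subgraphs), and a single pass to extract $S$; the total cost comfortably fits in the stated $O(n^4)$ budget.
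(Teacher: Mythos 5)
Your route is genuinely different from the paper's, and it does work. The paper never invokes Gallai--Edmonds: it takes an arbitrary maximum matching $M^*$, uses the matched vertex set $U$ as a first vertex cover, splits $U$ into low-degree vertices $A$ (bounded by essentially the same degree-sum count you use for $S$) and high-degree vertices $B$, and then shows by three short augmenting-path claims that the $M^*$-partners of $B$ form an independent set with no edges to $V\setminus U$, so they can be deleted from $U$; this gives a cover $D$ with $|D|=|A|$ and $S\subseteq A\cap D$. Your decomposition buys a cleaner and in fact stronger conclusion, and the ``main obstacle'' you flag is not actually an obstacle: writing $x:=\nu-a$ and substituting into your bound $e(G)\le an-\binom{a+1}{2}+\binom{2(\nu-a)+1}{2}$ (where the reduction to $c=0$ is legitimate because the Gallai--Edmonds structure forces $c\le 2(\nu-a)$, so $c=0$ is the maximising endpoint of the convex function of $c$), one gets, after discarding the negative term, $x(n-2x-1)\le n\bigl(r+\frac{d}{2}\bigr)$. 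Since $x\le\nu\le \frac{1}{2}dn+r\le n/(4s)\le n/8$, this yields $x\le 2r+d$, hence $|D|\le 2\nu-a=\nu+x\le \frac{1}{2}d(n+2)+3r$, with no $sd^2n$ term at all. Your double count then gives $|S|\le (n-1)\bigl(d+3r\bigr)/\bigl(n-1-s|D|\bigr)\le 3d+7r$, comfortably inside the stated bound (note the denominator is at least $n/2-5/4$ since $s|D|\le n/2+1/4$, and $n\ge 17$ is forced by $1\le r<n/(8s)$). With the paper's weaker bound on $|D|$ your $|S|$ computation would just miss the stated constants, so the stronger $|D|$ bound is actually needed for your version of the $|S|$ step.

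The one genuine error is algorithmic: you propose to compute \emph{minimum} vertex covers of $G[C]$ and of the non-singleton factor-critical components ``via further matching computations''. Minimum vertex cover is NP-hard in general and K\H{o}nig's theorem applies only to bipartite graphs; factor-critical graphs are never bipartite. Fortunately you only use the trivial bounds $\tau(G[C])\le c$ and $\tau(K)\le |K|-1$, so you should simply take all of $C$ together with $V(K)\setminus\{v_K\}$ for an arbitrary $v_K\in K$ (which is always a vertex cover of $K$); this costs nothing in the size estimate $|D|\le n-q=2\nu-a$ and is trivially computable. With that repair and the explicit calculation above, the argument is complete and fits within $O(n^4)$.
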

\begin{proof}
Let $M^*$ be any matching of $G$ of maximum size and set $m:=e(M^*)$; hence $\frac{1}{2}d(n-1) \leq m \leq \frac{1}{2}dn + r$ (the lower bound follows from the remarks before the statement of the lemma). Let $U$ be the set of vertices of $G$ incident to edges of $M^*$; thus $d(n-1) \leq |U| = 2m \leq dn + 2r$ and $U$ is a vertex cover (by the maximality of $M^*$).
Let 
\[
A := \{ u \in U: \deg_G(u) \leq 2sm  \} \subseteq U.
\]
We bound the size of $A$ as follows. We have
\begin{align*}
d \binom{n}{2} &= e(G) \leq \sum_{u \in U} \deg_G(u) \leq |A|2sm + |U \setminus A|n 
&= 2mn - |A|(n - 2sm) \\
&\leq (dn+2r)n - |A|(n - sdn - 2sr).
\end{align*}
Rearranging gives
\begin{align*}
|A| &\leq (1- sd - 2sr/n)^{-1}\left( \frac{1}{2}dn + \frac{1}{2}d + 2r \right) \\
&\leq (1 + 2sd + 4sr/n) \left( \frac{1}{2}dn + \frac{1}{2}d + 2r \right);
\end{align*}
the last inequality follows by noting that $(1-x)^{-1} \leq 1 + 2x$ for $x \leq \frac{1}{2}$ and that $sd + 2sr/n \leq \frac{1}{2}$ (by our choices of $s,r,d$). Expanding the expression above, and using that $r \leq n/(8s)$ gives
\begin{align*}
|A| &\leq \frac{1}{2}dn + sd^2n + 6sdr + \frac{1}{2}d + sd^2 + 2r + \frac{2sdr + 8sr^2}{n} \\
&\leq \frac{1}{2}d(n+2) + sd^2(n+1) + (6sd+2)r + \frac{8sr^2}{n} \\
&\leq \frac{1}{2}d(n+2) + sd^2(n+1) + (7s+2)r. 
\end{align*}
Let $B := U \setminus A$. 

\medskip

\noindent
\textbf{Claim~1.} \emph{No edge of $M^*$ lies in $B$.}

\smallskip

\noindent Indeed suppose $e = xy$ is an edge of $M^*$ with $x,y \in B$. Then $\deg_G(x), \deg_G(y) > 2sm  \geq |U|+2$ (since $s \geq 2$). Hence there exist distinct $x',y' \in V \setminus U$ such that $xx', yy' \in E(G)$. Replacing $e$ with the two edges $xx', yy'$ in $M^*$ gives a larger matching, contradicting the choice of $M^*$. This proves the claim.

\medskip

\noindent
Let $C: = \{u \in A: uv \in E(M^*), v \in B \}$. By Claim~1, we have $|B|=|C|$ and $B,C$ are disjoint.

\medskip

\noindent
\textbf{Claim~2.} \emph{There are no edges of $G$ between  $C$ and $V \setminus U$, i.e. $E_G(C, V \setminus U) = \emptyset$.}

\smallskip

\noindent
Indeed suppose $e=xy \in E(G)$ with $x \in C$ and $y \in V \setminus U$. By definition of $C$, there exists $z \in B$ such that $xz \in E(M^*)$. By the definition of $B$, $\deg_G(z) > 2sm \geq |U| + 2$, and so we can find $z' \in V\setminus U$ distinct from $y$ such that $zz' \in E(G)$. Then we can replace $xz$ with the two edges $xy, zz'$ in $M^*$ to obtain a larger matching, contradicting the choice of $M^*$. This proves the claim.

\medskip

\noindent
\textbf{Claim~3.} \emph{$C$ is an independent set.}

\smallskip

\noindent
Indeed, suppose $e=xy \in E(G)$ with $x,y \in C$. By definition of $C$, there exist $x',y' \in B$ such that $xx',yy' \in E(M^*)$. By definition of $B$, $\deg_G(x'), \deg_G(y') > 2sm \geq |U|+2$, and so there exist distinct $x'',y'' \in V \setminus U$ such that $x'x'', y'y'' \in E(G)$. Now replace $xx', yy'$ with $xy, x'x'', y'y''$ in $M^*$ to obtain a larger matching, contradicting the choice of $M^*$. This proves the claim.

\medskip

\noindent

Set $D:= U \setminus C$. First we check that $D$ is a vertex cover; indeed note that $V \setminus D = (V \setminus U) \cup C$. But $E_G(V \setminus U)$, $E_G(V \setminus U, C)$, and $E_G(C)$ are all empty using respectively the fact that $U$ is a vertex cover, Claim~2, and Claim~3. Also, we have
\begin{align*}
|D| &= |U|-|C| = |U|-|B| = |U \setminus B| = |A| \\
&\leq \frac{1}{2}d(n+2) + sd^2(n+1) + (7s+2)r.
\end{align*}
Finally, let
\begin{align*}
S := \{v \in D: \deg_G(v) \leq s|D| \} \subseteq \{v \in D: \deg_G(v) \leq s|U| \} = A \cap D.
\end{align*}
So, we have
\begin{align*}
|S| &\leq |A \cap D| = |A| + |D| -|A \cup D| = |A| + |D| - |U| \\
&\leq 2\left( \frac{1}{2}d(n+2) + sd^2(n+1) + (7s+2)r \right) - d(n-1) \\
&= 2sd^2(n+1) + (14s+4)r + 3d.
\end{align*}

Note that $M^*$ can be found in $O(n^4)$-time by suitably adapting the minimum weight optimal matching algorithm (Theorem~\ref{thm:matalg}). From this, $A$, $B$, $C$, $D$, and $S$ can all be constructed in $O(n^2)$-time, as required. 
\end{proof}

Next we give a polynomial-time algorithm for finding a maximum \emph{double matching} in a bipartite graph: it is a simple consequence of the minimum weight optimal matching algorithm from Theorem~\ref{thm:matalg}. Given a bipartite graph $G=(V,E)$ with vertex classes $A$ and $B$, a \emph{double matching} of $G$ from $A$ to $B$ is a subgraph $M = (V, E')$ of $G$ in which $\deg_M(v) \leq 2$ for all $v \in A$ and $\deg_M(v) \leq 1$ for all $v \in B$.

\begin{thm}
\label{thm:mdmt}
There exists an $O(n^4)$-time algorithm which, given an $n$-vertex bipartite graph $G$ with vertex classes $A$ and $B$ as input, outputs a double matching $M^*$ of $G$ from $A$ to $B$ with 
\[
e(M^*) = \max_M e(M),
\]
where the maximum is over all double matchings of $G$ from $A$ to $B$. We call this algorithm the \emph{maximum double matching algorithm}.
\end{thm}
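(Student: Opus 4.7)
The plan is to reduce the maximum double matching problem to the minimum weight optimal matching problem of Theorem~\ref{thm:matalg}. The key idea is to duplicate each vertex of $A$ so that a double matching on $G$ becomes an ordinary matching in an auxiliary graph.

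Concretely, let $G=(V,E)$ have vertex classes $A$ and $B$. Introduce two copies $a^{(1)}, a^{(2)}$ of each vertex $a \in A$ and let $A':=\{a^{(j)}: a \in A, j \in \{1,2\}\}$. Let $V':=A' \cup B$ and let $G^*$ be the bipartite graph on $V'$ (with parts $A'$ and $B$) in which $a^{(j)}b$ is an edge if and only if $ab \in E$. There is a natural weight-preserving correspondence $\varphi$ between double matchings of $G$ from $A$ to $B$ and matchings of $G^*$: given a matching $M'$ of $G^*$, define $\varphi^{-1}(M')$ by replacing each edge $a^{(j)}b$ with $ab$. Since each $a^{(j)}$ (for $j=1,2$) is incident to at most one edge of $M'$, the image is a double matching. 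Conversely any double matching of $G$ can be recovered this way. In particular, the maximum matching size of $G^*$ equals the maximum size of a double matching of $G$ from $A$ to $B$.

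Now embed $G^*$ into the complete graph $K_N$ on vertex set $V'$ (adding one dummy vertex if $|V'|$ is odd) and define a weighting $w: E(K_N) \to \{-1,0\}$ by $w(e) := -1$ if $e \in E(G^*)$ and $w(e) := 0$ otherwise. Apply the minimum weight optimal matching algorithm of Theorem~\ref{thm:matalg} to $(N, w)$ to obtain an optimal matching $N^*$ of $K_N$. Let $M^*$ be the subset of edges of $N^*$ that belong to $G^*$; this is a matching of $G^*$ of size $-w(N^*)$. We claim $M^*$ is a maximum matching of $G^*$: indeed, any matching $M$ of $G^*$ of size $t$ covers exactly $2t$ vertices of $V'$, and since $K_N$ is complete the remaining $N-2t$ vertices can be paired into an optimal matching of $K_N \setminus V(M)$, giving an optimal matching of $K_N$ of weight $-t$. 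Minimality of $w(N^*)$ therefore forces $|M^*| \geq t$. Then $\varphi^{-1}(M^*)$ is the desired maximum double matching.

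The only nontrivial step is the correctness argument above; the running time is immediate, since $N \leq 2n+1$ and the minimum weight optimal matching algorithm runs in time $O(N^4)=O(n^4)$, while the construction of $G^*$ and the recovery of $\varphi^{-1}(M^*)$ take $O(n^2)$ time.
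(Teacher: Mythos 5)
Your proof is correct and follows essentially the same route as the paper: duplicate each vertex of $A$ so that double matchings of $G$ correspond bijectively (and size-preservingly) to matchings of the auxiliary bipartite graph, then invoke the minimum weight optimal matching algorithm of Theorem~\ref{thm:matalg}. The only difference is that you spell out the $\{-1,0\}$-weighting reduction from maximum matching to minimum weight optimal matching, a step the paper treats as immediate.
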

\begin{proof}
Recall that the minimum weight optimal matching algorithm of Theorem~\ref{thm:matalg} immediately gives an $O(n^4)$-time algorithm for finding a maximum matching in a graph\COMMENT{which is in fact the better-known algorithm}.

Given $G$, define a new bipartite graph $G'$ by creating a  copy $a'$ of each vertex $a \in A$
so that $a'$ and $a$ have the same neighbourhood. Note that every matching $M'$ of $G'$ corresponds to a double matching $M$ of $G$ by identifying each $a \in A$ with its copy $a'$ and retaining all the edges of $M'$; hence $e(M) = e(M')$. 

Therefore finding a double matching in $G$ from $A$ to $B$ of maximum size is equivalent to finding a matching of $G'$ of maximum size, and we can use the minimum weight optimal matching algorithm of Theorem~\ref{thm:matalg} to find such a matching.
\end{proof}

Next we prove a lemma that says that for a small subset $S$ of vertices of $K_n$, almost all Hamilton cycles avoid $S$ `as much as possible'.

\begin{lem}
\label{lem:S}
Fix $\varepsilon \in (0, 1/2)$ and $n \geq 4$. Let $S\subseteq V_n$ be a subset of the vertices of the $n$-vertex complete graph $K_n=(V_n,E_n)$ with $|S| \leq n^{\frac{1}{2} - \varepsilon}$. Let $H \in \mathcal{H}_n$ be a uniformly random Hamilton cycle of $K_n$.
Let $\mathcal{E} := \mathcal{E}_1 \cap \mathcal{E}_2$, where $\mathcal{E}_1$ is the event that $H$ uses no edge of $S$ and $\mathcal{E}_2$ is the event that $e_H(v,S) \leq 1$ for all $v \in V_n \setminus S$.
Then
\[
\Prob(\mathcal{E}) \geq 1 - 6n^{-2 \varepsilon}.
\] 
\end{lem}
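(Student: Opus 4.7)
The plan is to bound $\Prob(\mathcal{E}_1^c)$ and $\Prob(\mathcal{E}_2^c)$ separately by union bounds over edges and over pairs of edges respectively, combining them at the end via a final union bound. The tools required are just two standard counting identities about the uniform distribution on $\mathcal{H}_n$, so the only real work is careful bookkeeping to recover the constant $6$.

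First I would record the two elementary probabilities. Since $|\mathcal{H}_n|=(n-1)!/2$ and each Hamilton cycle has $n$ edges, a double-counting argument over the $\binom{n}{2}$ edges shows that for any fixed $e \in E_n$,
\[
\Prob(e \in H) \;=\; \frac{(n-2)!}{(n-1)!/2} \;=\; \frac{2}{n-1}.
\]
Similarly, for a fixed pair of edges $e_1,e_2$ meeting at a common vertex $v$, the number of Hamilton cycles containing both equals the number of Hamilton paths on $V_n \setminus \{v\}$ with prescribed endpoints, which is $(n-3)!$, giving
\[
\Prob(e_1, e_2 \in H) \;=\; \frac{2}{(n-1)(n-2)}.
\]

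Next I would bound $\Prob(\mathcal{E}_1^c)$: the event that $H$ contains some edge of $S^{(2)}$ is the union over the $\binom{|S|}{2}$ candidate edges, so
\[
\Prob(\mathcal{E}_1^c) \;\leq\; \binom{|S|}{2}\cdot\frac{2}{n-1} \;\leq\; \frac{|S|^2}{n-1} \;\leq\; \frac{n^{1-2\varepsilon}}{n-1}.
\]
For $\mathcal{E}_2^c$, the bad event says that for some $v\in V_n\setminus S$ there exist two distinct $s_1,s_2\in S$ with $vs_1,vs_2 \in H$. Applying the second counting identity and union-bounding over $v \in V_n \setminus S$ and over pairs in $S$ gives
\[
\Prob(\mathcal{E}_2^c) \;\leq\; (n-|S|)\binom{|S|}{2}\cdot\frac{2}{(n-1)(n-2)} \;\leq\; \frac{n\,|S|^2}{(n-1)(n-2)} \;\leq\; \frac{n^{2-2\varepsilon}}{(n-1)(n-2)}.
\]

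Finally I would combine the two bounds. Using $n\ge 4$ we have $n/(n-1)\le 4/3$ and $n^2/((n-1)(n-2))\le 8/3$, so
\[
\Prob(\mathcal{E}^c) \;\leq\; \Prob(\mathcal{E}_1^c) + \Prob(\mathcal{E}_2^c) \;\leq\; \tfrac{4}{3}n^{-2\varepsilon} + \tfrac{8}{3}n^{-2\varepsilon} \;=\; 4n^{-2\varepsilon},
\]
which is comfortably below $6n^{-2\varepsilon}$. There is no real obstacle here; the only care needed is to verify that the pairwise probabilities $\Prob(e\in H)$ and $\Prob(e_1,e_2\in H)$ are computed correctly for the uniform distribution on \emph{undirected} Hamilton cycles and that the union bound is applied to the correct family of events (vertex–pair combinations in $\mathcal{E}_2^c$, not ordered edge pairs).
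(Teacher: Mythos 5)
Your proposal is correct and follows essentially the same route as the paper: the paper bounds $\Prob(\overline{\mathcal{E}_1})$ and $\Prob(\overline{\mathcal{E}_2})$ by first moments (which is the same union/expectation bound you use, with the identical edge probability $2/(n-1)$ and pair probability $|S|(|S|-1)/((n-1)(n-2))$ for the two $H$-neighbours of a vertex landing in $S$), obtaining $2n^{-2\varepsilon}+4n^{-2\varepsilon}=6n^{-2\varepsilon}$. Your slightly sharper constants are fine, and your verification of the two counting identities for undirected Hamilton cycles is correct.
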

\begin{proof}
For each $e \in E_n$, we have $\Prob(e \in E(H))= 2/(n-1)$, and so 
\begin{align}
\label{eq:1}
\Prob(\overline{\mathcal{E}_1}) 
&= \Prob(|E(H) \cap S^{(2)}| \geq 1)  
\leq \E(|E(H) \cap S^{(2)}|) \\
&= \binom{|S|}{2}\frac{2}{n-1} \leq \frac{1}{2}n^{1-2\varepsilon} \frac{2}{n-1} \leq 2n^{-2 \varepsilon}. 
\nonumber
\end{align}
Let $R$ be the random variable counting the number of vertices $v \in V_n \setminus S$ such that $e_H(v,S)=2$.
We have
\begin{align}
\label{eq:2}
\Prob(\overline{\mathcal{E}_2}) 
&\leq \E(R) = |V_n \setminus S| \Prob(e_H(v,S)=2)  \\
&= |V_n \setminus S| \frac{|S|(|S|-1)}{(n-1)(n-2)} \leq \frac{n^{2- 2\varepsilon}}{(n-2)^2} \leq 4 n^{-2 \varepsilon}.
\nonumber    
\end{align}
The lemma follows immediately from (\ref{eq:1}) and (\ref{eq:2}).
\end{proof}

We are now ready to present a polynomial-time TSP-domination algorithm which has a large domination ratio for $\{0,1\}$-instances of TSP in which $w(K_n) = d\binom{n}{2}$ with $d$  close to $1$ and in which all optimal matchings have weight close to $dn/2$. Let us first formalise this in a definition.

\begin{defn}
\label{def:instance}
For $n \geq 4$ a positive integer, $r>0$, $d \in [0,1]$, and $\varepsilon \in (0 , 1/2)$ we call a $\{0,1\}$-instance $(n,w)$ of TSP an \emph{$(n,d,r,\varepsilon)$-dense instance} if
\begin{enumerate}
\item[(i)] $w(K_n) = d \binom{n}{2}$ with $1-d \leq 1/12$;
\item[(ii)] $w(M) \geq \frac{1}{2}dn - r$ for all optimal matchings $M$ of $K_n$; and
\item[(iii)]$6\overline{d}^2(n+1) + 46r + 3\overline{d} \leq n^{\frac{1}{2} - \varepsilon}$, where $\overline{d}:=1-d$.%
\COMMENT{This condition implies $r<n/24$, needed to apply Lemma~\ref{lem:vxcover}}
\end{enumerate}
\end{defn}

\begin{thm}
\label{thm:dense}
There exists an $O(n^4)$-time TSP-domination algorithm which has domination ratio at least $1-6n^{-2\varepsilon}$ for all $(n,d,r,\varepsilon)$-dense instances $(n,w)$ of TSP. We call this  Algorithm~B.
\end{thm}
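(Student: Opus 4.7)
Given the $(n,d,r,\varepsilon)$-dense instance $(n,w)$, I work with the graph $G=(V_n,E_G)$ whose edges are the weight-$0$ edges of $K_n$; condition~(i) gives $e(G)=\bar d\binom{n}{2}$ with $\bar d:=1-d\le 1/12$, and condition~(ii) combined with the observation that any matching of $G$ extends to an optimal matching of $K_n$ forces the maximum matching of $G$ to have at most $\tfrac12\bar d n+r+O(1)$ edges. Hence Lemma~\ref{lem:vxcover} applies to $G$ with $s=3$, and in $O(n^4)$ time yields a vertex cover $D$ of $G$ and a subset $S\subseteq D$ such that $|S|\le n^{1/2-\varepsilon}$ by condition~(iii), $|D|\le \tfrac12\bar d n+O(\bar d^2n+r)$, and every $v\in D\setminus S$ has $\deg_G(v)>3|D|$, and hence more than $2|D|$ weight-$0$ neighbours in $V_n\setminus D$.

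Algorithm~B then runs the maximum double matching algorithm (Theorem~\ref{thm:mdmt}) on the bipartite subgraph of $G$ between $D\setminus S$ and $V_n\setminus D$ to obtain a double matching $M$ in $O(n^4)$ time. The degree bound above, together with a Hall-type argument, ensures $M$ saturates $D\setminus S$, so $|M|=2|D\setminus S|$ and $M$ is a vertex-disjoint union of $|D\setminus S|$ length-$2$ weight-$0$ paths with centres in $D\setminus S$ and endpoints in $V_n\setminus D$. I then extend $M$ to a Hamilton cycle $H^*$ of $K_n$ by a greedy closing-up procedure analogous to the one in the proof of Lemma~\ref{lem:derand} and output $H^*$. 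The total running time is $O(n^4)$, dominated by Lemma~\ref{lem:vxcover} and Theorem~\ref{thm:mdmt}.

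For the domination analysis, $H^*$ inherits the $2|D\setminus S|$ weight-$0$ edges of $M$, so $w(H^*)\le n-2|D\setminus S|$. Now let $H$ be a uniformly random Hamilton cycle from $\mathcal H_n$; applying Lemma~\ref{lem:S} to $S$ (valid because $|S|\le n^{1/2-\varepsilon}$) gives $\Pr(\mathcal E)\ge 1-6n^{-2\varepsilon}$, where $\mathcal E$ forbids $H$ from using any edge inside $S$ and from any vertex outside $S$ having two $H$-neighbours in $S$. Under $\mathcal E$, every weight-$0$ edge of $H$ has an endpoint in $D$ (because $D$ is a vertex cover of $G$); a handshake argument applied to the subgraph of $H$ induced on $V_n\setminus S$ shows that at most $2|D\setminus S|$ weight-$0$ edges of $H$ lie inside $V_n\setminus S$, and the few remaining weight-$0$ edges of $H$ between $S$ and $V_n\setminus S$ are to be matched against additional weight-$0$ edges that Algorithm~B places in the closing-up step.

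The main obstacle is the last step: the naive handshake bound only gives $|E(G)\cap E(H)|\le 2|D|=2|D\setminus S|+2|S|$ under $\mathcal E$, leaving a spurious gap of $2|S|$ between $w(H^*)$ and the guaranteed lower bound on $w(H)$. Closing this gap requires either strengthening Algorithm~B so that the closing-up step also greedily incorporates weight-$0$ edges at vertices of $S$ whenever their $G$-neighbours in $V_n\setminus D$ remain available (yielding $w(H^*)\le n-2|D|$), or refining the handshake analysis of weight-$0$ edges incident to $S$ in a random $H\in\mathcal E$ by exploiting the low-degree property $\deg_G(v)\le 3|D|$ for $v\in S$ together with $\mathcal E_2$. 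Once this is done, $\mathcal E$ implies $w(H)\ge w(H^*)$, and combining with $\Pr(\mathcal E)\ge 1-6n^{-2\varepsilon}$ yields the claimed domination ratio $1-6n^{-2\varepsilon}$.
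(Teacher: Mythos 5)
Your construction and the first half of your analysis follow the paper's route, but the proof as written is incomplete, and you acknowledge this yourself: you run the maximum double matching algorithm only between $D \setminus S$ and $V_n \setminus D$, which yields $w(H^*) \le n - 2|D \setminus S|$, while the handshake bound under $\mathcal{E}$ only gives $|E(H) \cap E(G)| \le 2|D| = 2|D\setminus S| + 2|S|$. Since $|S|$ can be as large as $n^{1/2-\varepsilon}$, the inequality $w(H^*) \le w(H)$ does not follow, and neither of the two fixes you sketch is carried out. The first fix (``greedily incorporate weight-$0$ edges at vertices of $S$ in the closing-up step'') does not obviously work: vertices of $S$ have low degree in $G$, their few $G$-neighbours in $V_n \setminus D$ may already be saturated by the double matching on $D \setminus S$, and a greedy choice need not capture as many weight-$0$ edges at $S$ as an adversarial Hamilton cycle in $\mathcal{E}$ could use there.

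The paper closes exactly this gap by a different arrangement of the same ingredients: it runs the maximum double matching algorithm on $G[D, V\setminus D]$ with \emph{all} of $D$ (including $S$) on one side, obtaining $M^*$, and proves the key claim that $E(M^*) \cap E_G(S, V \setminus D)$ is a \emph{maximum} double matching of $G[S, V\setminus D]$ from $S$ to $V \setminus D$ --- if a larger one existed, it could be greedily extended so that every vertex of $D \setminus S$ has degree $2$ (using $\deg_G(v) > 3|D|$ for $v \in D\setminus S$), contradicting the maximality of $M^*$. Then for $H \in \mathcal{E}$, the events $\mathcal{E}_1$ and $\mathcal{E}_2$ force $E(H) \cap E_G(S, V\setminus D)$ to be a double matching from $S$ to $V \setminus D$, hence no larger than $E(H^*) \cap E_G(S, V \setminus D)$; combined with the trivial bound $|E(H) \cap E_G(D\setminus S, V)| \le 2|D\setminus S|$ and the fact that $H^*$ achieves exactly two weight-$0$ edges into $V\setminus D$ at every vertex of $D\setminus S$, this gives $|E(H^*) \cap E(G)| \ge |E(H) \cap E(G)|$. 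This is precisely the refined comparison your argument is missing; you would need to adopt this decomposition (or prove an equivalent maximality claim for the edges at $S$) to complete the proof.
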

\begin{proof}
We begin by describing Algorithm~B in steps with the running time of each step in brackets. We then explain each step. Let $(n,w)$ be a $(n,d,r,\varepsilon)$-dense instance of TSP and let $G=(V_n, E) \subseteq K_n$ be the graph in which we have $e \in E$ if and only if $w(e)=0$. We set $V:=V_n$ in order to reduce notational clutter.

\begin{enumerate}
\item[1.] Construct $G$ and note that $e(G) = \overline{d} \binom{n}{2}$, where $\overline{d} := 1-d$. ($O(n^2)$ time)
\item[2.] Using the algorithm of Lemma~\ref{lem:vxcover} (with $s=3$), find a vertex cover $D$ for $G$ such that 
$|D| \leq \frac{1}{2}\overline{d}(n+2) + 3\overline{d}^2(n+1) + 23r$.
($O(n^4)$ time)
\item[3.] Construct the bipartite graph $G':=G[D, V \setminus D]$. ($O(n^2)$ time)
\item[4.] Apply the maximum double matching algorithm (Theorem~\ref{thm:mdmt}) to $G'$ to obtain a double matching $M^*$ of $G'$ from $D$ to $V \setminus D$ of maximum size. ($O(n^4)$-time).
\item[5.] Extend $M^*$ arbitrarily to a Hamilton cycle $H^*$ of $K_n$ (i.e.\ choose $H^*$ to be any Hamilton cycle of $K_n$ that includes all the edges of $M^*$). ($O(n)$ time)
\end{enumerate}

Altogether, the running time of the algorithm is $O(n^4)$. Note that in Step 2, the conditions of Lemma~\ref{lem:vxcover} are satisfied by $(n,d,r, \varepsilon)$-dense instances of TSP. Also the algorithm of Lemma~\ref{lem:vxcover} gives us the set
\[
S = \{v \in D: \deg_G(v) \leq 3|D| \},
\]
where $|S| \leq 6\overline{d}^2(n+1) + 46r + 3\overline{d} \leq n^{\frac{1}{2} - \varepsilon}$, which we shall require in the analysis of Algorithm~B.


Next we show that this algorithm has domination ratio at least $1-6n^{-2\varepsilon}$ for $(n,w)$ by showing that for a uniformly random Hamilton cycle $H \in \mathcal{H}_n$, we have
$\Prob(w(H) < w(H^*)) \leq 6n^{-2 \varepsilon}$.
Since $|S| \leq n^{\frac{1}{2} - \varepsilon}$, we can apply Lemma~\ref{lem:S} to conclude that 
\[
\Prob(\mathcal{E}) \geq 1 - 6n^{- 2\varepsilon},
\]
where $\mathcal{E} := \mathcal{E}_1 \cap \mathcal{E}_2$ with  $\mathcal{E}_1$ being the event that $H$ uses no edge of $S$ and $\mathcal{E}_2$ being the event that $e_H(v,S) \leq 1$ for all $v \in V \setminus S$.
We show that if $H$ is any Hamilton cycle in $\mathcal{E} \subseteq \mathcal{H}_n$ then $w(H^*) \leq w(H)$, i.e.\
\begin{equation} \label{eqn:cond}
\Prob(w(H) < w(H^*) \mid \mathcal{E}) = 0.
\end{equation}
Assuming (\ref{eqn:cond}), we can prove the theorem since
\begin{align*}
\Prob(w(H) < w(H^*)) 
&= \Prob(w(H) < w(H^*) \mid \mathcal{E})\Prob(\mathcal{E}) + 
\Prob(w(H) < w(H^*) \mid \overline{\mathcal{E}})\Prob(\overline{\mathcal{E}}) \\
&\leq 0 + \Prob(\overline{\mathcal{E}}) \leq 6n^{-2\varepsilon},
\end{align*}
as required. 

In order to verify (\ref{eqn:cond}), it is sufficient to show that $|E(H^*) \cap E(G)| \geq |E(H) \cap E(G)|$ for all $H \in \mathcal{E}$.
\medskip

\noindent
\textbf{Claim.} \emph{$L:= E(M^*) \cap E_G(S, V \setminus D)$ is a maximum-sized double matching of $G[S, V \setminus D]$ from $S$ to $V \setminus D$\COMMENT{technically, $L$ forms the edges of a double matching}.}

\smallskip

\noindent 
Suppose, for a contradiction, that there is some double matching $L'$ of $G[S, V \setminus D]$ from $S$ to $V \setminus D$ with more edges than $L$. Then since every vertex of $D \setminus S$ has degree at least $3|D|$ in $G$, we can greedily extend $L'$
to a double matching $M^{**}$ of $G[D, V \setminus D]$ such that in $M^{**}$, every vertex of $D \setminus S$ has degree $2$. Hence
\[
e(M^{**}) = e(L') + 2|D \setminus S| > e(L) + 2|D \setminus S| \geq e(M^*),
\]
contradicting the maximality of $M^*$. This proves the claim.
\medskip

\noindent
Now for $H \in \mathcal{E} \subseteq \mathcal{H}_n$, we have that $E(H) \cap E_G(S, V \setminus D)$ is a double matching of $G[S, V \setminus D]$ from $S$ to $V \setminus D$. Hence by the claim and the definition of $H^*$
\begin{equation} \label{eqn:DM}
|E(H) \cap E_G(S, V \setminus D)| \leq 
|E(M^*) \cap E_G(S, V \setminus D)| \leq
|E(H^*) \cap E_G(S, V \setminus D)|.
\end{equation}
Writing $D_S:=D \setminus S$, we have that 
\begin{align*}
E(H) \cap E(G) 
&= \;[E(H)\cap E_G(D_S, V)] \; \cup \; 
[E(H) \cap E_G(V \setminus D_S)] \\
&=\;[E(H) \cap E_G(D_S, V)] \; \cup \; 
[E(H) \cap E_G(S, V \setminus D)]. 
\end{align*}
Now we see that 
\[
|E(H) \cap E(G)| \leq 2|D_S| + |E(H) \cap E_G(S, V \setminus D)|.
\]
Also note that $e_{H^*}(v, V \setminus D) = e_{M^*}(v, V \setminus D)=2$ for all $v \in D_S$ since $M^*$ is maximal and vertices in $D_S$ have degree at least $3|D|$ in $G$. Hence
\[
|E(H^*) \cap E(G)| \geq 2|D_S| + |E(H^*) \cap E_G(S, V \setminus D)|.
\]
But now (\ref{eqn:DM}) implies $|E(H^*)\cap E(G)| \geq |E(H) \cap E(G)|$, as required. 
\end{proof}

Finally we provide a TSP-domination algorithm that has a large domination ratio for sparse $\{0,1\}$-instances of TSP. Before we do this, we require one subroutine for our algorithm.

\begin{lem}
\label{lem:dirac}
There exists an $O(n^3)$-time algorithm which, given an $n$-vertex graph $G$ with $\delta(G) \geq n/2 + \frac{3}{2}k$ and $k$ independent edges $e_1, \ldots, e_k$, outputs a Hamilton cycle $H^*$ of $G$ such that $e_1, \ldots, e_k \in E(H^*)$. 
\end{lem}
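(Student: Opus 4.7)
The plan is to contract the prescribed edges, apply an algorithmic version of Dirac's theorem to the resulting graph, and then lift the Hamilton cycle back to $G$ using local 2-opt modifications.

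First I would form the graph $G'$ on $n':=n-k$ vertices by contracting each prescribed edge $e_i=u_iv_i$ to a single vertex $w_i$. A direct calculation shows that for any non-contracted vertex $v$ we have $\deg_{G'}(v)\geq\deg_G(v)-k\geq n'/2+k$, while for any contracted $w_i$ we have $\deg_{G'}(w_i)\geq n'/2+k-1$. Hence $\delta(G')\geq n'/2+k-1$, comfortably exceeding the Dirac threshold for $G'$. An $O(n^3)$-time algorithmic version of Dirac's theorem (implementable via, e.g., iterated P\'osa rotations or the algorithmic Bondy--Chv\'atal closure) then produces a Hamilton cycle $C'$ of $G'$.

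The task is to lift $C'$ to a Hamilton cycle $H^*$ of $G$ containing every $e_i$. For each $i$, let $a_i,b_i$ denote the two cycle-neighbours of $w_i$ on $C'$; the length-two segment $a_iw_ib_i$ must be realized in $G$ either as $a_iu_iv_ib_i$ or as $a_iv_iu_ib_i$. Call the routing at $w_i$ \emph{good} if one of these is a valid path in $G$ and \emph{bad} otherwise. When every routing is good, picking a valid orientation at each $w_i$ (the $w_i$ are vertex-disjoint in $G'$) immediately produces $H^*$.

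The main obstacle is fixing bad routings, and this is handled by local 2-opt modifications on $C'$. A bad routing at $w_i$ means that $a_i$ and $b_i$ are both adjacent in $G$ only to the same endpoint of $e_i$, say $u_i$. Since $\deg_G(v_i)\geq n/2+3k/2$, $w_i$ has at least roughly $n/2+k/2$ \emph{$v$-type} neighbours in $G'$ (neighbours $c$ whose edge $cw_i$ in $G'$ comes from an edge $cv_i\in E(G)$) that are distinct from $a_i$ and $b_i$. For such a $c$ and an appropriately chosen cycle-neighbour $c'$ of $c$ on $C'$, the 2-opt swap deleting $w_ib_i,cc'$ and adding $w_ic,b_ic'$ produces a new Hamilton cycle of $G'$ provided $b_ic'\in E(G')$. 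A pigeonhole argument using $|N_{G'}(b_i)|\geq n'/2+k-1$ and $|V(G')|=n-k$ guarantees such a valid swap (for $k\geq 2$; the case $k\leq 1$ is a trivial base case). After the swap, $w_i$'s cycle-neighbours are $a_i$ and $c$, which are adjacent in $G$ to $u_i$ and $v_i$ respectively, so the routing becomes good.

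Some care is needed so that the swap fixing $w_i$ does not spoil a previously-good routing at some $w_j$: since the swap only alters cycle-neighbours at the four vertices $w_i,b_i,c,c'$, it suffices to restrict the choice of $c,c'$ to non-contracted vertices, which the slack in the above pigeonhole argument easily accommodates. Iterating, each step strictly decreases the number of bad routings, so the procedure terminates in at most $k\leq n$ rounds; each round costs $O(n^2)$ time, giving a total of $O(n^3)$ including the initial algorithmic Dirac step. The main delicacy of the argument, and the part I expect to require the most bookkeeping, is precisely this simultaneous management of all $k$ routing constraints across the sequence of local swaps.
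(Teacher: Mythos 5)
Your approach (contract the prescribed edges, run an algorithmic Dirac on the contracted graph, then lift and repair with 2-opt swaps) is genuinely different from the paper's, which instead adapts the P\'osa-rotation proof of Dirac's theorem directly: it first builds a path $x_1y_1z_1x_2y_2\cdots z_{k-1}x_ky_k$ through all prescribed edges and then repeatedly extends a ``good'' path or cycle, always making sure that the edge broken in an extension or rotation step is not one of the $e_i$ (the degree surplus $\tfrac{3}{2}k$ guarantees at least $2k$ admissible rotation edges). That route never has to disentangle the prescribed edges from one another, which is exactly where your argument has a gap.

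The gap is in the lifting step, and it is the interaction between contracted vertices that are \emph{adjacent on $C'$}. Your good/bad dichotomy treats the routing at each $w_i$ as a local, independent binary choice, justified by the parenthetical that the $w_i$ are vertex-disjoint. But if $a_i=w_j$, the edge $w_jw_i\in E(G')$ only certifies that \emph{some} edge of $G$ joins $\{u_j,v_j\}$ to $\{u_i,v_i\}$; whether the lift succeeds depends on the simultaneous orientation choices at $w_j$ and $w_i$, and these constraints chain along maximal runs of contracted vertices on $C'$ (and can be unsatisfiable even when every individual segment looks ``good''). Nothing prevents the black-box Dirac algorithm from returning a cycle with many such adjacencies. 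The same issue breaks your termination argument: if $b_i$ is itself contracted, the swap deleting $w_ib_i$ and adding $b_ic'$ changes $b_i$'s cycle-neighbours and can turn its routing from good to bad, so the number of bad routings need not strictly decrease. Finally, the pigeonhole you invoke after restricting $c$ and $c'$ to non-contracted vertices is not as slack as claimed: excluding the up to $k-1$ contracted candidates for $c$ and the up to $k-1$ candidates whose successor is contracted leaves roughly $\tfrac{n}{2}-\tfrac{3k}{2}$ choices of $c$, and together with $|N_{G'}(b_i)|\geq \tfrac{n}{2}+\tfrac{k}{2}-1$ this sums to about $n-k-1$ in a universe of size $n-k$, so the intersection is not guaranteed (and $k$ may be as large as roughly $n/3$ here, so these losses are not lower-order). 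The $k\le 1$ case is also not trivial, though it is minor. These defects are plausibly repairable (e.g., by first performing swaps to separate contracted vertices on $C'$), but as written the proof does not go through.
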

The proof of this lemma is a simple adaptation of the proof of Dirac's theorem, but we provide the details for completeness. 
A threshold of $n/2 + k/2$ was proved by P{\'o}sa \cite{Posa}, but it is not clear whether the proof is algorithmic. \COMMENT{TSP10: sentence above changed + reference added}
\begin{proof}
Let $e_i = x_iy_i$, and set $S := \{x_1, \ldots, x_k, y_1, \ldots, y_k \}$ so that $|S|=2k$. By the degree condition of $G$, every pair of vertices have at least $3k$ common neighbours and so $G$ is $3k$-connected. In particular, we can pick distinct vertices $z_1, \ldots, z_{k-1} \in V(G) \setminus S$ such that $z_i$ is a common neighbour of $y_i$ and $x_{i+1}$. Hence $P := x_1y_1z_1x_2y_2z_2 \cdots z_{k-1}x_ky_k$ is a path of $G$ containing all the edges $e_1, \ldots, e_k$. This can be found in $O(nk)$ time.

We say any path or cycle $Q$ is \emph{good} if it contains all the edges $e_1, \ldots, e_k$; in particular $P$ is good. We show that if $Q$ is a good path or cycle with $|V(Q)|<n$ or $e(Q)<n$, then we can extend $Q$ to a good path or cycle $Q'$ such that either $|V(Q')|>|V(Q)|$ or $e(Q')>e(Q)$. 

If $Q$ is a good cycle with $|V(Q)|<n$, then since $G$ is $3k$-connected, there is some vertex $c \in V(G) \setminus V(Q)$ that is adjacent to some $b \in V(Q)$.
Let $a$ and $a'$ be the two neighbours of $b$ on $Q$. Since $e_1, \ldots, e_k$ are independent edges, either $ab$ or $a'b$, say $ab$, is not amongst $e_1, \ldots, e_k$. Thus we can extend $Q$ to the good path $Q':=aQbc$, and we see $|V(Q')|=|V(Q)|+1$. This takes $O(n^2)$ time.
\COMMENT{TSP9: The paragraph above has changed. The change is also different from the one Deryk suggested.}

Now suppose $Q$ is a good path with end-vertices $a$ and $b$, i.e.\ $Q = aQb$, and one of the end-vertices, say $b$, has a neighbour $c$ in $V(G) \backslash V(Q)$. Then we can extend $Q$ to a good path $Q' := aQbc$, and $|V(Q')|>|V(Q)|$. Checking whether we are in this case and obtaining $Q'$ takes $O(n)$ time.

Finally suppose $Q$ is a good path with end-vertices $a$ and $b$, but where $a$ and $b$ have all their neighbours on $Q$. Since $a$ and $b$ have at least $n/2 + \frac{3}{2}k$ neighbours, there are at least $3k$ edges $xx^+ \in E(Q)$, where $Q = aQxx^+Qb$ and $ax^+, xb \in E(G)$. Thus there are at least $2k$ such edges that are not one of $e_1, \ldots, e_k$; let $xx^+$ be such an edge. Then $Q':=ax^+QbxQa$ is a cycle with $e(Q')>e(Q)$. Checking whether we are in this case and obtaining $Q'$ takes $O(n)$ time. 

Thus extending $P$ at most $2n$ times as described above gives us a Hamilton cycle, and the running time of the algorithm is dominated by $O(n^3)$.
\end{proof}

\begin{defn}
\label{def:sparseinstance}
For $0 < \varepsilon < 1/2$, we call a $\{0,1\}$-instance $(n,w)$ of TSP a \emph{$(n, d, \varepsilon)$-sparse instance} if $w(K_n) = d \binom{n}{2}$ with $d \leq \frac{1}{4} n^{-\frac{1}{2} - \varepsilon}$.
\end{defn}

\begin{thm}
\label{thm:sparse}
Fix $\;0 < \varepsilon < 1/2$.
There exists an $O(n^4)$-time TSP-domination algorithm which has domination ratio at least $1-6n^{-2\varepsilon}$ for any $(n,d, \varepsilon)$-sparse instance of TSP with $n \geq 225$. We call this Algorithm~C.
\end{thm}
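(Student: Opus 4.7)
The plan is to handle an $(n,d,\varepsilon)$-sparse instance $(n,w)$ by exploiting the fact that the graph $G$ of $1$-weight edges has at most $\frac{1}{8}n^{3/2-\varepsilon}$ edges, so averaging forces almost every vertex to have small $G$-degree. I will take $S := \{v \in V_n : \deg_G(v) \geq n/4\}$, which immediately gives $|S| \leq n^{1/2-\varepsilon}$---exactly the size regime in which Lemma~\ref{lem:S} applies. The overall idea is to construct $H^*$ so that its edges inside $V_n \setminus S$ are all $0$-weight and its edges at $S$ are $0$-weight whenever possible, and then to compare $H^*$ to a uniformly random Hamilton cycle $H$ via Lemma~\ref{lem:S}.

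To build $H^*$, I would first apply Theorem~\ref{thm:mdmt} to the bipartite $0$-weight graph $\overline{G}[S, V_n \setminus S]$ to obtain in $O(n^4)$-time a double matching $M^*$ from $S$ to $V_n \setminus S$ of maximum size $k := e(M^*)$. Next, greedily extend $M^*$ to $M^{**}$ by adding, for each $v \in S$ with $\deg_{M^*}(v) < 2$, an edge from $v$ to some unused vertex of $V_n \setminus S$ (such an edge is necessarily $1$-weight by the maximality of $M^*$), so that $\deg_{M^{**}}(v) = 2$ for every $v \in S$ and $\deg_{M^{**}}(u) \leq 1$ for every $u \in V_n \setminus S$; the count $2|S| \leq 2n^{1/2-\varepsilon} < n - |S|$ for $n \geq 225$ leaves enough room. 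Write $\alpha_v^1, \alpha_v^2 \in V_n \setminus S$ for the two $M^{**}$-neighbours of $v \in S$, and build an auxiliary graph $G'$ on vertex set $V_n \setminus S$ whose edges are the $0$-weight edges of $K_n$ inside $V_n \setminus S$ together with the ``virtual'' matching edges $\{\alpha_v^1 \alpha_v^2 : v \in S\}$. A short calculation shows that $\delta(G') \geq 3n/4 - 1 - |S| \geq (n - |S|)/2 + 3|S|/2$ whenever $n \geq 225$, so Lemma~\ref{lem:dirac} yields in $O(n^3)$-time a Hamilton cycle of $G'$ containing every virtual edge; expanding each $\alpha_v^1 \alpha_v^2$ back into the path $\alpha_v^1 v \alpha_v^2$ produces the required $H^*$ in $K_n$, within a total running time of $O(n^4)$.

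For the analysis, observe that by construction the only $1$-weight edges of $H^*$ are those of $M^{**} \setminus M^*$, so $w(H^*) = 2|S| - k$. Applying Lemma~\ref{lem:S}, a uniformly random $H \in \mathcal{H}_n$ lies in the event $\mathcal{E}$ (no $H$-edge inside $S$, and at most one $H$-edge from each $u \in V_n \setminus S$ into $S$) with probability at least $1 - 6n^{-2\varepsilon}$. For any $H \in \mathcal{E}$, the $H$-edges between $S$ and $V_n \setminus S$ form a double matching $M_H$ of $K_n[S, V_n \setminus S]$ of size exactly $2|S|$, and its $0$-weight part $M_H \cap E(\overline{G})$ is a double matching in $\overline{G}[S, V_n \setminus S]$ and so has size at most $k$ by maximality of $M^*$. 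Hence $w(H) \geq |M_H \setminus E(\overline{G})| \geq 2|S| - k = w(H^*)$, whence $\Prob(w(H) < w(H^*) \mid \mathcal{E}) = 0$ and $\Prob(w(H) < w(H^*)) \leq \Prob(\overline{\mathcal{E}}) \leq 6n^{-2\varepsilon}$, as required. The main obstacle will be choosing the virtual edges so that they simultaneously form a matching usable by Lemma~\ref{lem:dirac} and capture as many $0$-weight edges at $S$ as possible; this is precisely the tension that the maximum double matching algorithm resolves optimally, and it is what makes the final counting inequality between $w(H)$ and $w(H^*)$ tight enough.
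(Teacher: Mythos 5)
Your proposal is correct and follows essentially the same route as the paper's proof of Theorem~\ref{thm:sparse}: isolate the small set $S$ of vertices with many weight-$1$ edges, take a maximum double matching from $S$ into $V_n\setminus S$ in the weight-$0$ bipartite graph, complete via the Dirac-type Lemma~\ref{lem:dirac} on virtual edges, and compare against a random Hamilton cycle using Lemma~\ref{lem:S}. The only differences are cosmetic (you work with the weight-$1$ graph and threshold $n/4$ where the paper uses the weight-$0$ graph and threshold $2n/3$, and you count weight directly rather than counting weight-$0$ edges), and your numerical checks all go through.
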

\begin{proof}
Assume $n \geq 225$ and let $(n,w)$ be an $(n,d,\varepsilon)$-sparse instance of TSP so that $w(K_n) = d \binom{n}{2}$ with $d \leq \frac{1}{4}n^{-\frac{1}{2} - \varepsilon}$. Let $G = (V_n, E)$ be the graph where $e \in E$ if and only if $w(e)=0$. We write $\overline{G}$ for the complement of $G$.
Let
\[
S := \{ v \in V_n: \deg_G(v) \leq 2n/3 \},
\]
and let $\overline{S} := V \setminus S$.

\medskip

\noindent
\textbf{Claim.} \emph{We have $|S| \leq n^{\frac{1}{2} - \varepsilon}$. 
}

\smallskip

\noindent 
In $\overline{G}$, we have that 
$S = \{ v \in V_n: \deg_{\overline{G}}(v) \geq (n/3)-1 \}$. Then
\begin{align*}
\frac{1}{2}|S|\left( \frac{n}{3}-1 \right)  \leq e(\overline{G}) = d \binom{n}{2} \leq \frac{1}{4}n^{-\frac{1}{2} - \varepsilon}\binom{n}{2} \leq \frac{1}{8}n^{\frac{3}{2} -\varepsilon},
\end{align*}
which implies the required bounds.%
\COMMENT{Then
\[
|S| \leq \frac{\frac{1}{4}n^{\frac{3}{2} - \varepsilon}}{\frac{n}{3} - 1} \leq \frac{\frac{1}{4}n^{\frac{3}{2} - \varepsilon}}{\frac{n}{4}} = n^{\frac{1}{2} - \varepsilon},
\]
where the last inequality follows from $n \geq 12$.} 

\medskip

\noindent
We note for later that 
\begin{equation}
\label{eq:deg}
\delta( G[\overline{S}] ) \geq \frac{2}{3}n - |S| 
\geq \frac{2}{3}n - n^{\frac{1}{2} - \varepsilon} 
\geq \frac{1}{2}n + \frac{3}{2}n^{\frac{1}{2} - \varepsilon}
\geq \frac{1}{2}n + \frac{3}{2}|S|.
\end{equation}
where the penultimate inequality follows since $n \geq 225$.

We now describe the steps of our algorithm with the run time of each step in brackets; we then elaborate on each step below.

\begin{enumerate}
\item[1.]  Construct $G$. ($O(n^2)$ time)
\item[2.] Obtain the set $S \subseteq V_n$ as defined above. ($O(n^2)$ time)
\item[3.] Construct $G' := G[S , \overline{S}]$. ($O(n^2)$ time)
\item[4.] Apply the maximum double matching algorithm (Theorem~\ref{thm:mdmt}) to $G'$ to obtain a double matching $M^*$ of $G'$ from $S$ to $\overline{S}$ of maximum size. ($O(n^4)$ time)
\item[5.]  Arbitrarily extend $M^*$ to a maximum double matching $M^{**}$ of $K_n[S, \overline{S}]$ from $S$ to
 $\overline{S}$, i.e.\ $\deg_{M^{**}}(v) = 2$ for all $v \in S$,
$\deg_{M^{**}}(v) \leq 1$ for all $v \in \overline{S}$,
 and $E(M^*) \subseteq E(M^{**})$. ($O(n^2)$ time)
\item[6.] For each $v \in S$, determine its two neighbours $x_v,y_v \in \overline{S}$ in $M^{**}$, and let $e_v := x_vy_v \in \overline{S}^{(2)}$. ($O(n)$ time)
\item[7.] Apply the algorithm of Lemma~\ref{lem:dirac} to obtain a Hamilton cycle $H^*$ of $G[\overline{S}] \cup \{e_v: v \in S \}$ that includes all edges $\{e_v: v\in S\}$. ($O(n^3)$ time)
\item[8.] Replace each edge $e_v = x_vy_v$ in $H^*$ by the two edges $vx_v,vy_v \in E(M^{**})$ to obtain a Hamilton cycle $H^{**}$ of $K_n$. ($O(n)$ time)
\end{enumerate}

Note that the edges $e_v$ determined in Step~6 are independent because $M^{**}$ is a double matching. This together with (\ref{eq:deg}) means that we can indeed apply the algorithm of Lemma~\ref{lem:dirac} in Step~7. Altogether, the algorithm takes $O(n^4)$ time.

Finally, let us verify that the algorithm above gives a domination ratio of at least $1-6n^{-2 \varepsilon}$ 
 by showing that for a uniformly random Hamilton cycle $H \in \mathcal{H}_n$, we have
$\Prob(w(H) < w(H^{**})) \leq 6n^{-2 \varepsilon}$. As in the proof of Theorem~\ref{thm:dense}, it suffices to show
\begin{equation} \label{eqn:cond2}
\Prob(w(H) < w(H^{**}) \mid \mathcal{E}) = 0,
\end{equation}
where $\mathcal{E} := \mathcal{E}_1 \cap \mathcal{E}_2$ with $\mathcal{E}_1$ being the event that $H$ uses no edge of $S$ and $\mathcal{E}_2$ being the event that $e_H(v,S) \leq 1$ for all $v \in V_n \setminus S$.%
\COMMENT
{
Just as in Theorem~\ref{thm:dense}, since $|S| \leq n^{\frac{1}{2} - \varepsilon}$, we can apply Lemma~\ref{lem:S} to conclude that 
\[
\Prob(\mathcal{E}) \geq 1 - 6n^{- 2\varepsilon}.
\]
 We show that if $H$ is any Hamilton cycle in $\mathcal{E} \subseteq \mathcal{H}_n$ then $w(H^{**}) \leq w(H)$, i.e.\
\begin{equation} 
\Prob(w(H) < w(H^{**}) \mid \mathcal{E}) = 0.
\end{equation}
Assuming (\ref{eqn:cond2}), we can prove the theorem since
\begin{align*}
\Prob(w(H) < w(H^{**})) = \;
 &\Prob(w(H) < w(H^{**}) \mid \mathcal{E})\Prob(\mathcal{E}) \; + \\
&\Prob(w(H) < w(H^{**}) \mid \overline{\mathcal{E}})\Prob(\overline{\mathcal{E}}) \\
=\;&0 + \Prob(\overline{\mathcal{E}}) \leq 6n^{-2\varepsilon},
\end{align*}
as required. 
}

In order to verify (\ref{eqn:cond2}), it suffices to show that $|E(H^{**}) \cap E(G)| \geq |E(H) \cap E(G)|$ for all $H \in \mathcal{E} \subseteq \mathcal{H}_n$.
Assuming $H \in \mathcal{E}$, note that $E(H) \cap E_G(S, \overline{S})$ is a double matching of $G[S, \overline{S}]$ from $S$ to $\overline{S}$; hence by the definition of $M^*$ and $H^{**}$, we have
\[
|E(H) \cap E_G(S, \overline{S})| \leq |E(M^*)| = |E(H^{**}) \cap E_G(S, \overline{S})|.
\]
Also
\[
|E(H^{**}) \cap E_G(\overline{S})| = n - 2|S| = |E(H) \cap E_G(\overline{S})|.
\]
Using these two inequalities and the fact that $E(H) \cap E(S^{(2)}) = E(H^{**}) \cap E(S^{(2)}) = \emptyset$ (since $H, H^{**} \in \mathcal{E}$) we have
\begin{align*}
|E(H) \cap E(G)| 
&= |E(H) \cap E_G(\overline{S})| + |E(H) \cap E_G(S, \overline{S})| \\
&\leq |E(H^{**}) \cap E_G(\overline{S})| + |E(H^{**}) \cap E_G(S, \overline{S})| \\
&= |E(H^{**}) \cap E(G)|,
\end{align*}
as required.
\end{proof}

\COMMENT{Sentence has changed. Previously a theorem below changed to a lemma now with slightly changed statement}%
Finally we show that we can combine our three algorithms to give a complete algorithm for arbitrary $\{0,1\}$-instances of TSP.

\begin{lem} \label{lem:comb}
Let $\varepsilon := \frac{1}{28}$\COMMENT{Actually, I think we can do $1/14$}. There exists $n_0$ such that if $(n,w)$ is a $\{0,1\}$-instance of TSP with $n>n_0$ and $w(K_n) = d \binom{n}{2}$ for some $d \in [0,1]$, then either $(n,w)$
 is $(n,d,\varepsilon)$-regular or $(n,d,\varepsilon)$-sparse or
 $(n,d,r,\varepsilon)$-dense (for some $r$). Furthermore\COMMENT{TSP12: extra statement added}, we can decide in $O(n^4)$ time which of the three cases hold (and also determine a suitable value of $r$ in the third case).
\end{lem}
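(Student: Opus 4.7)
The approach is to check the cases in increasing order of $d$. First compute $d := w(K_n)/\binom{n}{2}$ in $O(n^2)$ time. If $d \leq \tfrac{1}{4}n^{-1/2-\varepsilon}$, declare $(n,w)$ sparse. Otherwise, apply the minimum weight optimal matching algorithm (Theorem~\ref{thm:matalg}) in $O(n^4)$ time to obtain a matching $M^*$ with $w(M^*) = \min_M w(M)$; declare $(n,w)$ regular if
\[
w(M^*) \leq \tfrac{1}{2}dn - m_\varepsilon(n,d) \quad \text{or} \quad w(M^*) \leq \tfrac{1}{2}dn - m_\varepsilon(n,1-d),
\]
and otherwise set $r := \tfrac{1}{2}dn - w(M^*)$ and declare $(n,w)$ dense with this $r$. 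All checks take $O(n^2)$ additional time, so the whole procedure runs in $O(n^4)$.

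The main content is to show that for $n$ large enough, one of the three declarations is always valid. Write $\overline{d} := 1-d$ and recall from Corollary~\ref{cor:reg} the thresholds $f_\varepsilon(n) = 10^4(1+2\varepsilon)n^{-2/3}\log n$ and $g_\varepsilon(n) = 10^4(1+\varepsilon)n^{-2/7}\log n$. Comparing exponents gives $-2/3 < -15/28 = -1/2 - \varepsilon$, so for large $n$ we have $f_\varepsilon(n) \leq \tfrac{1}{4}n^{-1/2-\varepsilon}$; hence the sparse range and Corollary~\ref{cor:reg}'s range $[f_\varepsilon(n), 1-g_\varepsilon(n)]$ together cover all $d \in [0, 1-g_\varepsilon(n)]$. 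For $d$ in this range that is not sparse, Corollary~\ref{cor:reg} asserts that $(n,w)$ is $(n,d,\varepsilon)$-regular, meaning some optimal matching meets the regularity bound; since $M^*$ has minimum weight, it too meets the bound and the algorithm correctly declares it regular.

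The remaining and most delicate regime is $d > 1-g_\varepsilon(n)$, where $\overline{d} \leq g_\varepsilon(n)$. If $M^*$ satisfies $w(M^*) \leq \tfrac{1}{2}dn - m_\varepsilon(n,\overline{d})$ we are done (declare regular). Otherwise $r = \tfrac{1}{2}dn - w(M^*) < m_\varepsilon(n,\overline{d})$, and the task is to verify condition~(iii) of Definition~\ref{def:instance}, namely $6\overline{d}^2(n+1) + 46r + 3\overline{d} \leq n^{1/2-\varepsilon}$. The estimates are: $6\overline{d}^2(n+1) \leq 6g_\varepsilon(n)^2(n+1) = O(n^{3/7}(\log n)^2)$; using $\overline{d}^{1/4} = O(n^{-1/14}(\log n)^{1/4})$, the bound $46r < 46m_\varepsilon(n,\overline{d}) = O(\log n) + O(\overline{d}^{1/4}\sqrt{n\log n}) = O(n^{3/7}(\log n)^{3/4})$; and $3\overline{d} = O(n^{-2/7}\log n)$ is negligible. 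With $\varepsilon = 1/28$ the target exponent $1/2 - \varepsilon = 13/28$ strictly exceeds $3/7 = 12/28$, so all three terms are $o(n^{1/2-\varepsilon})$ and (iii) holds for all large $n$. Condition~(i) ($\overline{d} \leq 1/12$) holds immediately for large $n$ since $g_\varepsilon(n) \to 0$, and (ii) follows from the minimality of $M^*$.

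The hard part is the exponent bookkeeping in the last paragraph: the $1/28$-slack between $12/28$ and $13/28$ is precisely what pins down the allowed value of $\varepsilon$, and essentially forces the choice $\varepsilon = 1/28$ (or smaller). Any larger $\varepsilon$ would leave a gap in the last case that cannot be absorbed by the $O(\cdot)$ terms arising from $g_\varepsilon(n)$ and $m_\varepsilon(n,\overline{d})$.
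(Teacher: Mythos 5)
Your proof is correct and follows essentially the same route as the paper: the trichotomy rests on the contrapositive of Corollary~\ref{cor:reg}, the comparison $n^{-2/3}\log n \le \tfrac14 n^{-1/2-\varepsilon}$ for the sparse case, and the exponent count $3/7 < 1/2-\varepsilon$ for condition~(iii) in the dense case, exactly as in the paper (the paper takes $r := m_{\varepsilon}(n,\overline{d})$ rather than $\tfrac12 dn - w(M^*)$, which sidesteps the requirement $r>0$ in Definition~\ref{def:instance}, but your choice works after that trivial adjustment). The only inaccuracy is your closing remark that the bookkeeping forces $\varepsilon \le 1/28$: the actual constraint is $1/2-\varepsilon > 3/7$, i.e.\ $\varepsilon < 1/14$, consistent with the authors' own marginal note.
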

\begin{proof}
We choose $n_0 \in \mathbb{N}$ sufficiently large for our estimates to hold.\COMMENT{TSP10: previous sentence added.}
Assume that $(n,w)$ is a $\{0,1\}$-instance of TSP with $n>n_0$ which is not $(n,d,\varepsilon)$-regular. By Corollary~\ref{cor:reg}, we either have $d< 10^4(1 + 2\varepsilon) n^{-\frac{2}{3}} \log n$ or $\overline{d}:=1-d < 10^4(1 + \varepsilon)n^{-\frac{2}{7}} \log n$.

Assuming the former, we have $d \leq \frac{1}{4}n^{-\frac{1}{2} - \varepsilon}$%
\COMMENT{provided 
$10^4(1 + 2\varepsilon) n^{-\frac{2}{3}} \log n \leq \frac{1}{4}n^{-\frac{1}{2} - \varepsilon}$
or equivalently $4 \cdot 10^4(1 + 2\varepsilon) \log n \leq 
n^{\frac{11}{84}}$, which holds provided $10^5 \log n \leq 
n^{\frac{1}{8}}$. In general $k \log n \leq n^{\alpha}$ iff 
$n \leq \exp(n^{\alpha} / k)$ which holds provided $n \leq 
(n^\alpha / k)^r / r!$ i.e. $k^r r! \leq n^{\alpha r - 1}$. 
Choosing $r = \lceil \alpha^{-1} \rceil + 2$, we see that 
$k \log n \leq n^{\alpha}$ if $n > k^r r!$. In our case $\alpha = 1/8$, $k=10^5$, $r=10$ so  $n > 10! 10^{50} \leq 10^{60}$ suffices.    
}
for $n>n_0$. Thus $(n,w)$ is $(n,d, \varepsilon)$-sparse.

The only remaining possibility is that $\overline{d} < 10^4(1 + \varepsilon)n^{-\frac{2}{7}} \log n$. In addition, since $(n,w)$ is not $(n,d,\varepsilon)$-regular, we may assume that, for every optimal matching $M$ of $K_n$, we have
\[
w(M) \geq \frac{1}{2}dn - m_{\varepsilon}(n, \overline{d}),
\] 
where we recall that $m_{\varepsilon}(n, \overline{d}) = 40 (\varepsilon + \varepsilon^{\frac{1}{2}}) \log n + 40 \varepsilon^{\frac{1}{2}} \overline{d}^{\frac{1}{4}} \sqrt{n \log n}$. Setting $r := m_{\varepsilon}(n, \overline{d})$, we now verify that $(n,w)$ is $(n,d,r, \varepsilon)$-dense to complete the proof.

Parts (i) and (ii) of Definition~\ref{def:instance} clearly hold.
For part (iii) note that%
\COMMENT{TSP9: $\log^2$ replaced by $\log$ below} 
\begin{align*}
r = m_{\varepsilon}(n, \overline{d})
\leq 40\log n +  40\overline{d}^{\frac{1}{4}} \sqrt{n \log n} 
&\leq 40\log n + 
 40 \cdot 10(1 + \varepsilon)n^{\frac{3}{7}} \log n \\
&\leq 500 n^{\frac{3}{7}} \log n,
\end{align*}
and that $6\overline{d}^2(n+1) \leq 10^9(1 + \varepsilon )^2n^{\frac{3}{7}} \log^2n
\leq 2 \cdot 10^9 n^{\frac{3}{7}} \log^2n$. Using these bounds, we have
\[
6 \overline{d}^2(n+1) + 46r + 3 \overline{d} \leq 3 \cdot 10^9 n^{\frac{3}{7}} \log^2n \leq n^{\frac{1}{2} - \varepsilon},
\]
provided that $n_0$ is sufficiently large.%
\COMMENT{
$3 \cdot 10^9 n^{\frac{3}{7}} \log^2n \leq n^{\frac{1}{2} - \varepsilon}$ 
iff $3 \cdot 10^9 \log^2 n \leq n^{\frac{1}{28}}$, which holds if $10^5 \log n \leq n^{\frac{1}{56}}$. Here $k=10^5$, $\alpha = 1/56$ and $r=58$, so the inequality holds if $n > 10^{290} \cdot 58!$. I'm sure we could get something better, but is it worth it? 
}

For\COMMENT{TSP12: proof of extra statement} $\varepsilon = 1/28$ and $n \geq n_0$, one can determine in $O(n^4)$ time whether $(n,w)$ satisfies Definition~\ref{def:reginstance}, \ref{def:instance}, or \ref{def:sparseinstance}; one simply needs to determine $w(K_n)$ and an optimal matching of minimum weight using Theorem~\ref{thm:matalg}.
\end{proof}

Finally, we can prove our main result\COMMENT{TSP12: proof added}

\removelastskip\penalty55\medskip\noindent{\bf Proof of Theorem~\ref{thm:main}}.
Given an instance $(n,w)$ with $n > n_0$ and $w(K_n) = d \binom{n}{2}$ for some $d \in [0,1]$, Lemma~\ref{lem:comb} tells us that $(n,w)$ is either $(n,d,\varepsilon)$-regular or
 $(n,d,r,\varepsilon)$-dense (for some $r$) or $(n,d,\varepsilon)$-sparse, where $\varepsilon = 1/28$. Furthermore, we can determine which case holds in $O(n^4)$ time. We accordingly apply Algorithm~A, B, or C, and by Theorem~\ref{thm:ratio}, \ref{thm:dense}, or \ref{thm:sparse} respectively, we achieve a domination ratio of at least $1 - \max\{ 2n^{- \varepsilon}, 6n^{-2\varepsilon}\} \geq 1- 6n^{-1/28}$.
\endproof


\section{Hardness Results}
\label{sec:hard}

In this section\COMMENT{TSP10: First three paragraphs below changed a bit} we give upper bounds on the possible domination ratio of a polynomial-time TSP-domination algorithm for $\{0,1\}$-instances, assuming either $P \not= NP$ or the Exponential Time Hypothesis. This is achieved by a simple reduction from the Hamilton path problem to the algorithmic TSP-domination problem for $\{0,1\}$-instances. Our reduction uses an idea from \cite{GutKolYeo}, where a result similar to Theorem~\ref{thm:hardness1}(a) was proved, but for general TSP rather than $\{0,1\}$-instances of TSP.

The Hamilton path problem is the following algorithmic problem: given an instance $(n,G)$, where $n$ is a positive integer and $G$ is an $n$-vertex graph, determine whether $G$ has a Hamilton path. This problem is NP complete \cite{GarJon}, which implies Theorem~\ref{thm:NP}(a).

The Exponential Time Hypothesis \cite{IP} states that the $3$-satisfiability problem ($3$-SAT) cannot be solved in subexponential time. It would imply that there is no $\exp(o(n))$-time algorithm to solve $3$-SAT, where $n$ is the number of clauses of an input formula. Since $3$-SAT can be reduced to the Hamilton path problem in polynomial time (and space), this gives us Theorem~\ref{thm:NP}(b).
\COMMENT{TSP9: Both theorems below slightly rephrased}

\begin{thm} \label{thm:NP}
$ $
\begin{enumerate}
\item[(a)]If $P \not= NP$, then the Hamilton path problem has no polynomial-time algorithm.
\item[(b)]There is some $\varepsilon_0>0$ such that, if the Exponential Time Hypothesis holds, then there is no algorithm that solves the Hamilton path problem for all instances $(n,G)$ in time $\exp(o(n^{\varepsilon_0}))$.%
\COMMENT{Something stronger is stated formally for various NP-hard problems (e.g.\ independent set problem), but I can't find it stated anywhere for the Hamilton path problem. So I don't know the exact value $\varepsilon_0$ (I would guess > 1/100); it depends on the best reduction from the Hamilton path problem to the 3-SAT problem}
\end{enumerate}
\end{thm}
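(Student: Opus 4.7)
\proof[Proof proposal.]
The plan is to reduce both parts to standard complexity-theoretic facts about the Hamilton path problem that appear in the discussion preceding the theorem. The key observation is that the Hamilton path problem is NP-complete via a polynomial-time (and polynomial-space) reduction from $3$-SAT, and that this single reduction handles both parts once one keeps track of its polynomial blow-up.

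For part~(a), I would simply invoke the classical NP-completeness result for the Hamilton path problem due to Karp, as catalogued in Garey--Johnson~\cite{GarJon}: if a polynomial-time algorithm existed for the Hamilton path problem, then every problem in NP would admit a polynomial-time algorithm, contradicting $P \neq NP$.

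For part~(b), I would fix a specific polynomial-time reduction $R$ from $3$-SAT (say, the standard Karp chain $3\text{-SAT} \to \text{Vertex Cover} \to \text{Hamilton Cycle} \to \text{Hamilton Path}$, or any other reference reduction), and record the polynomial bound on its output size. Concretely, if an instance of $3$-SAT with $m$ clauses is mapped by $R$ to a Hamilton path instance $(N,G)$ with $N \leq C m^{c}$ for some absolute constants $C$ and $c$, then I would set $\varepsilon_0 := 1/(2c)$. Suppose for contradiction that there is an algorithm $\mathcal{A}$ that solves the Hamilton path problem in time $\exp(o(N^{\varepsilon_0}))$. Composing $R$ with $\mathcal{A}$ yields a $3$-SAT algorithm whose running time is at most
\[
\mathrm{poly}(m) + \exp\!\bigl(o\bigl((Cm^c)^{\varepsilon_0}\bigr)\bigr) = \exp\!\bigl(o\bigl(m^{c\varepsilon_0}\bigr)\bigr) = \exp\!\bigl(o\bigl(m^{1/2}\bigr)\bigr) = \exp(o(m)),
\]
contradicting ETH in the form stated in the excerpt (no $\exp(o(m))$-time algorithm for $3$-SAT on $m$ clauses).

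The only mildly delicate point, and the main ``obstacle'', is pinning down the constant $c$ in the blow-up of a concrete $3$-SAT $\to$ Hamilton path reduction so as to fix a valid value of $\varepsilon_0$; this is routine once a specific textbook reduction is selected, and any such $c$ yields a valid (small but positive) $\varepsilon_0 = 1/(2c)$. No further probabilistic or combinatorial ingredients are needed.
\endproof
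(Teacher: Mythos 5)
Your proposal is correct and matches the paper's own (very brief) justification: part (a) is exactly the NP-completeness of the Hamilton path problem cited from Garey--Johnson, and part (b) is exactly the observation that a polynomial-time (hence polynomially size-bounded) reduction from $3$-SAT forces the exponent $\varepsilon_0$ to be the reciprocal of (a multiple of) the reduction's blow-up degree $c$. You are merely more explicit than the paper about the bookkeeping that fixes $\varepsilon_0 = 1/(2c)$, which is exactly the point the authors themselves flag as the source of the unspecified constant.
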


We can use Theorem~\ref{thm:NP} to obtain the hardness result Theorem~\ref{thm:hardness1} for TSP-domination ratios.\COMMENT{TSP10: sentence added}

\removelastskip\penalty55\medskip\noindent{\bf Proof of Theorem~\ref{thm:hardness1}}.
For part (a), we give a polynomial-time reduction from the Hamilton path problem. Suppose $D$ is a polynomial-time TSP-domination algorithm with domination ratio at least $1-\exp(-n^{\varepsilon})$ i.e., given a $\{0,1\}$-instance of $(n,w)$ of TSP, $D$ outputs a Hamilton cycle $H^*$ such that
\[
|\{H: w(H)<w(H^*) \}| \leq \exp(-n^{\varepsilon})|\mathcal{H}_n|.
\]
We show that we can use $D$ to determine whether any graph $G$
 has a Hamilton path in time polynomial in $|V(G)|$, contradicting Theorem~\ref{thm:NP}(a). Assume 
$G$ is an $n$-vertex graph
 and let $\overline{G}$ be its complement. Choose $n'$ such that $\lfloor n'^{\varepsilon} / \log n' \rfloor = n$; thus $n'$ is bounded above by a polynomial in $n$.  
Consider the $n'$-vertex graph $G' = (V',E')$ 
obtained from $G$ by first partitioning $V'$ into two sets 
$S, \overline{S}$ with $|S|=n$ and $|\overline{S}|= n'-n$, 
and setting $G'[S]$ to be isomorphic to $\overline{G}$, setting $G'[\overline{S}]$ to be empty, and setting $G'[S, \overline{S}]$ to be the complete bipartite 
graph between $S$ and $\overline{S}$. We define an instance $(n',w)$ of TSP where the edge weighting $w$ of $K_{n'} \supseteq G'=(V',E')$ is given by $w(e)=1$ for each $e \in E'$ and $w(e)=0$ for $e \not\in E'$.

Applying $D$ to $(n',w)$, we obtain a Hamilton cycle $H^*$ of $K_{n'}$ in time polynomial in $n'$ (which is therefore polynomial in $n$). We claim that $G$ has a Hamilton path if and only if $w(H^*)=2$, and this proves Theorem~\ref{thm:hardness1}(a).

To see the claim, first assume that $w(H^*)=2$.
Note that for any $H \in \mathcal{H}_{n'}$, we have $w(H[S, 
\overline{S}]) \geq 2$, and if we have equality, then $H[S]$ 
is a Hamilton path of $K_{n'}[S]$ (and $H[\overline{S}]$ is a 
Hamilton path of $K_{n'}[\overline{S}]$). Since $w(H^*)=2$, we 
have $w(H[S, \overline{S}]) = 2$ and $w(H^*[S])=0$. Thus 
$H^*[S]$ is a Hamilton path of $K_{n'}[S]$ and every edge of 
$H^*[S]$ is an edge of $G$. Hence $G$ has a Hamilton 
path.

To see the claim in the other direction, suppose $G$ has a Hamilton path $P$ and consider the set $\mathcal{H}_P$ of all Hamilton cycles $H$ of $K_{n'}$ such that $H[S] = P$. Note that $w(H)=2$ for all $H \in \mathcal{H}_P$ and that 
\begin{align} \label{eq:hard}
|\mathcal{H}_P| 
= (n'-n)! &= 2[(n'-1) \cdots (n'-n+1)]^{-1}|\mathcal{H}_{n'}|   \\
& \geq 2 (n')^{-n}|\mathcal{H}_{n'}| 
= 2 \exp(- n \log n')|\mathcal{H}_{n'}|  \nonumber \\
&> \exp( -n'^{\varepsilon} )|\mathcal{H}_{n'}|; \nonumber
\end{align}
the last inequality follows by our choice of $n'$. Since $D$ has domination ratio at least $1-\exp(-n'^{\varepsilon})$ for $(n',w)$, $D$ must output a Hamilton cycle $H^*$ with $w(H^*)=2$. This proves the claim and completes the proof of (a).

The proof of (b) follows almost exactly the same argument as above. Set $C := 2 \varepsilon_0^{-1} + 2$, where $\varepsilon_0$ is as in the statement of Theorem~\ref{thm:NP}(b), and set $\beta(n) := \exp(-(\log n)^C)$.

Suppose $D$ is an $O(n^k)$-time algorithm (for some $k \in \mathbb{N}$) which, given a $\{0,1\}$-instance of $(n,w)$ of TSP, outputs a Hamilton cycle $H^*$ such that
\[
|\{H: w(H)<w(H^*) \}| \leq \beta(n)|\mathcal{H}_n|.
\]
We show that we can use $D$ to determine whether any $n$-vertex graph $G$
 has a Hamilton path in time $\exp(o(n^{\varepsilon_0}))$, contradicting Theorem~\ref{thm:NP}(b). We set $n' := \lceil \exp(n^{\varepsilon_0/2}/k) \rceil$ and apply $D$ to the $n'$-vertex graph $G'$ constructed in the same way as in part (a). The algorithm takes time $O(n'^k)=\exp(o(n^{\varepsilon_0}))$ and 
the proof of the claim proceeds in the same say as in part (a), except that (\ref{eq:hard}) is replaced by $|\mathcal{H}_P| > \beta(n')|\mathcal{H}_{n'}|$.%
\COMMENT{
Set $n' := \lceil \exp(n^{ \varepsilon_0 / 2}/k) \rceil$ so that $n \leq (k \log n')^{2/\varepsilon_0 + 1}$ if $n$ is sufficiently large.
Applying $D$ to $(n',w)$, we obtain a Hamilton cycle $H^*$ of $K_{n'}$ in time $O(n'^k) = \exp(o(n^{\varepsilon_0}))$. As before we claim $G$ has a Hamilton path if and only if $w(H^*)=2$ (and $n_0$ large enough as a function of $k$; if not then we can find a Hamilton path of $G$ in constant time), and this proves the theorem.
To see the claim, the forward implication is exactly as before
To see the claim in the other direction, suppose $G$ has a Hamilton path $P$ and consider the set $\mathcal{H}_P$ of all Hamilton cycles $H$ of $K_{n'}$ such that $H[S] = P$. Note that $w(H)=2$ for all $H \in \mathcal{H}_P$ and that 
\begin{align*}
|\mathcal{H}_P| 
= (n'-n)! &= 2[(n'-1) \cdots (n'-n+1)]^{-1}|\mathcal{H}_{n'}|   \\
& \geq 2 (n')^{-n}|\mathcal{H}_{n'}| 
= 2 \exp(- n \log n')|\mathcal{H}_{n'}| \\
&> \exp( - (\log n')^{2\varepsilon_0^{-1}+2} )|\mathcal{H}_{n'}|;
\end{align*}
the final inequality follows by our choice of $n'$. Since $D$ has domination ratio at least $1-\exp( - (\log n')^{2\varepsilon_0^{-1}+2} )$ for $(n',w)$, $D$ must output a Hamilton cycle $H^*$ with $w(H^*)=2$. 
} 
\endproof






\section{Concluding remarks and an open problem}
\label{sec:conc}

In view of Theorem~\ref{thm:main}, it is natural to ask whether there exists a poly\-nomial-time TSP-domination algorithm with domination ratio $1-o(1)$ for general instances of TSP.
Below we discuss some of the challenges which this would involve. 

Note that the average weight of a perfect matching in a weighted complete graph $K_n$ is $w(K_n)/(n-1)$.
For instances $(n,w)$ of TSP for which there exists a perfect matching $M^*$ of $K_n$ with $w(M^*)$ `significantly' smaller than $w(K_n)/(n-1)$, our techniques immediately imply that Algorithm~A gives a large domination ratio. We call these \emph{typical instances}.  However instances of TSP where all perfect matchings have approximately the same weight (after some suitable normalisation) need to be treated separately, and we call these \emph{special instances}. For weights in $\{0,1\}$, there are essentially two different types of special instances: either most edges have weight $0$ or most edges have weight $1$. Even then, some additional ideas were needed to treat the special instances. If we allow
weights in $[-1,1]$, there are many different types of special instances and the challenge is to find a way to treat all of these cases simultaneously. As an example of a special instance, consider a weighting of $K_n$ where we take $S \subseteq V_n$ with $|S|$ close to $n/2$ and set $w(e):=1$ if $e \in S^{(2)}$, $w(e):=-1$ if $e \in \overline{S}^{(2)}$, and $w(e):=0$ if $e \in E(S, \overline{S})$. Then any perfect matching has weight close to zero.\COMMENT{We actually know what all special instances: is it worth mentioning this?}

\medskip

{\footnotesize \obeylines \parindent=0pt

Daniela K\"{u}hn, Deryk Osthus, Viresh Patel 
School of Mathematics 
University of Birmingham
Edgbaston
Birmingham
B15 2TT
UK
\begin{flushleft}
{\it{E-mail addresses}:\\
\tt{\{d.kuhn, d.osthus\}@bham.ac.uk}, viresh.s.patel@gmail.com}
\end{flushleft}}

\end{document}